\newcommand{\Comment}[1]{\relax}
\newcommand{\Hide}[1]{}
\newif
\newif
\newtheorem{theorem}{Theorem}
\numberwithin{theorem}{section}
\newtheorem{lemma}[theorem]{Lemma}
\newtheorem{corollary}[theorem]{Corollary}
\newtheorem{fact}[theorem]{Fact}
\newtheorem{propxxx}[theorem]{Property}
\newenvironment{property}{\begin{propxxx}\rm}{\hfill\QED\end{propxxx}}
\newtheorem{restxxx}[theorem]{Restriction}
\newtheorem{agreexxx}[theorem]{Agreement}
\newtheorem{termxxx}[theorem]{Terminology}
\newtheorem{notxxx}[theorem]{Notation}
\newtheorem{assumxxx}[theorem]{Assumption}
\newtheorem{exaxxx}[theorem]{Example}
\newenvironment{example}{\begin{exaxxx}\rm}{\hfill\QED\end{exaxxx}}
\newtheorem{remxxx}[theorem]{Remark}
\newtheorem{openxxx}[theorem]{Open Problem}
\newtheorem{conjxxx}[theorem]{Conjecture}
\newtheorem{defxxx}[theorem]{Definition}
\newenvironment{definition}[1]{\begin{defxxx}[\emph{#1}]\rm}%
{\hfill\QED\end{defxxx}}
\newtheorem{procxxx}[theorem]{Procedure}
{\hfill\QED\end{procxxx}}
\newtheorem{Prxxx}[theorem]{Proof}
{\end{Prxxx}} 
\newcommand{\Set}[1]{\{ #1 \}}
\newcommand{\Let}[3]%
    {\textbf{\textsf{let}}\ {#1}\,{#2}\ \textbf{\textsf{in}}\;{#3}\,}
\newcommand{\Try}[3]%
    {\textbf{\textsf{try}}\ {#1} {#2}\ \textbf{\textsf{in}}\;{#3}\;}
\newcommand{\Mix}[3]%
    {\textbf{\textsf{mix}}\ {#1} {#2}\ \textbf{\textsf{in}}\;{#3}\;}
\newcommand{\LET}[3]%
    {\textbf{\textsf{let}}^*\ {#1} {#2}\ \textbf{\textsf{in}}\;{#3}\;}
\newcommand{\Letrec}[3]%
    {\textbf{\textsf{letrec}}\ {#1} {#2}\ \textbf{\textsf{in}}\;{#3}\;}
\newcommand{\QED}{{\Large $\square$}} 
\newcommand{\vfi}{\varphi}
\newcommand{\spacing}[2]{
  \renewcommand{\baselinestretch}{#2}
  \small\normalsize #1
  \setlength{\parskip}{0.1\baselineskip}
  \settowidth{\parindent}{xxxx}
  \setlength{\parindent}{#2\parindent}
  \setlength{\leftmargini}{\parindent}
  \setlength{\leftmarginii}{\parindent}
  \setlength{\leftmarginiii}{\parindent}
  \setlength{\footnotesep}{#2\footnotesep}
}
\begin{document}

\spacing{\normalsize}{0.98}
\setcounter{page}{1}     
\setcounter{tocdepth}{1} 
\ifTR
  \pagenumbering{roman} 
\else
\fi

\title{Linear Arrangement of Halin Graphs}

\author{ Saber Mirzaei \\
       Boston University  \\
        \ifTR Boston, Massachusetts \\
        \href{mailto:smirzaei@bu.edu}{smirzaei{@}bu.edu}
        \else \fi
\and
Assaf Kfoury \\
        Boston University \\
        \ifTR Boston, Massachusetts \\
        \href{mailto:kfoury@bu.edu}{kfoury{@}bu.edu}
        \else \fi
}

\ifTR
   \date{\today}
\else
   \date{} %
\fi
\maketitle
  \ifTR
     \thispagestyle{empty} 
  \else
  \fi

\vspace{-.3in}
  \begin{abstract}
  We study the Optimal Linear Arrangement (OLA) problem of Halin
graphs, one of the simplest classes of non-outerplanar graphs.  We
present several properties of OLA of general Halin graphs. We prove
a lower bound on the cost of OLA of any Halin graph, and define
classes of Halin graphs for which the cost of OLA matches this lower
bound. We show for these classes of Halin graphs, OLA can be computed
in $O(n\log{n})$, where $n$ is the number of vertices. 
  \end{abstract}

\ifTR
    \newpage
    \tableofcontents
    \newpage
    \pagenumbering{arabic}
\else
    \vspace{-.2in}
\fi

\section{Introduction}
\label{sect:intro}
  Given graph $G=(V,E)$, a \emph{linear arrangement} or simply
a \emph{layout} of vertices is defined as a bijective function
$\vfi : V \rightarrow \Set{1, \dots, |V|}$.  In the Optimal Linear
Arrangement (OLA) problem, a special case of more general vertex
layout problems, the goal is to find the layout $\vfi$ minimizing
$\sum_{\Set{v,u} \in E} |\vfi(v) - \vfi(u)|$. The OLA problem is known to
be NP-hard for general graphs~\cite{GareyJS76}, for \emph{bipartite
graphs}~\cite{even1975np}, and for more specific classes of graphs
such as \emph{interval graphs} and
\emph{permutation graphs}~\cite{cohen2006optimal}.

Defining interesting classes of graphs for which the OLA problem
is polynomially solvable has been a notoriously hard task. The results
have been few and spread over several decades
\cite{GareyJS76, even1975np, cohen2006optimal, chung1984, rostami2008minimum}. Three decades ago,
it was suggested that a good candidate for polynomial-solvable OLA
are interval graphs, a class of graph for which no NP-hardness
results were known at that time (page 13 of~\cite{johnson1985np}).
Efforts in that direction were in vain, as some two decades
later, the OLA problem of interval graphs was shown to be NP-hard~\cite{cohen2006optimal}.
Another candidate for polynomial-solvable OLA are the so-called \emph{recursively constructed
graphs}~\cite{horton2003linear},
given that most NP-hard problems on general graphs are easily
solvable for this class.

Halin graphs are planar graphs which the degree of every vertex is at
least 3 and can be constructed using an underlying tree $T$ and a
cycle $C$ which connects leaf nodes of the tree
$T$. Figure~\ref{fig:halin} presents a Halin graph. Throughout this
report, the edges of cycle $C$ are presented in bold and the edges of
the tree $T$ are depicted in dashed lines.  Halin graphs can
be considered as one of the simplest class of graphs that are not
outerplanar \footnote{A Halin graph is a 2-outerplanar graph.}.

To the best of our knowledge, the OLA problem of Halin graphs is only
studied for the simple case where the underlying tree is a caterpillar
~\cite{easton1996solvable}. After introducing our notations and
preliminary definitions in Section~\ref{sect:prelim}, we present
several properties of OLA of Halin graphs in
Section~\ref{sect:OLA-halin-graphs}, including a lower bound on the
cost of OLA for Halin graphs.  In
Section~\ref{sect:halin-solvable-case}, we define and study a class of
Halin graphs for which the cost of their OLA meets this lower
bound. We also present an algorithm which, given a Halin graph in this
class, returns an OLA in $O(n\log{n})$ where $n$ is the number of vertices.

\begin{figure}[h]
    \begin{center}
        \includegraphics[scale=0.45]{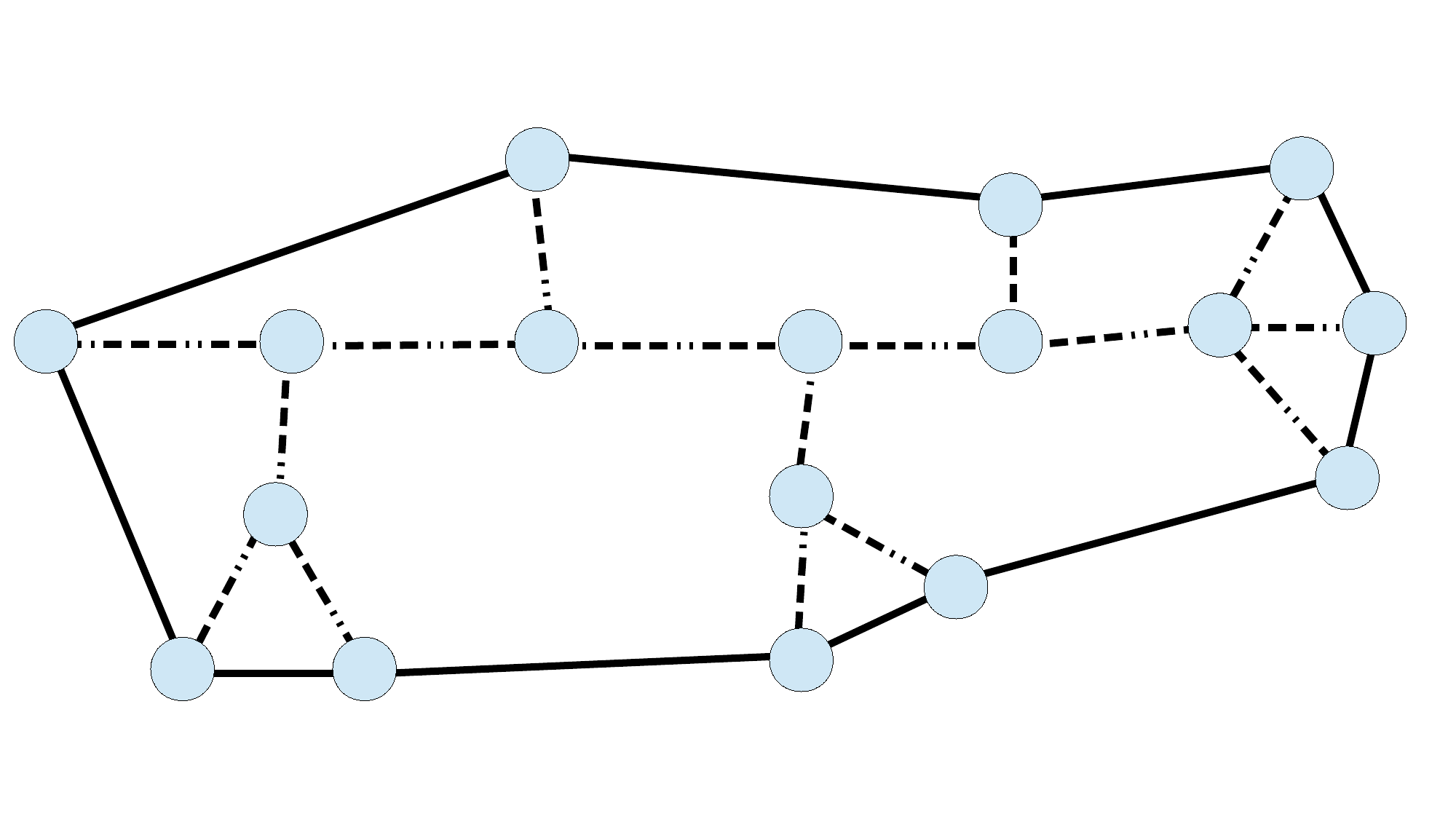}
    \end{center}
    \vspace{-0.2in}
    \caption{$H_1$, an example of a Halin graph.
    The cycle which connect the leaf nodes is shown in bold and
    the underlying tree is presented in dashed lines. }
    \label{fig:halin}
    \vspace{-0.2in}
\end{figure}

\section{Preliminaries}
\label{sect:prelim}
  We only consider simple, undirected graphs. For a finite graph $G=(V,E)$ where $V$ and $E$ are respectively the sets of vertices and
edges, we show $|V|$ by $n$ and $|E|$ as $m$. For a given vertex $v \in V$, $d_G(v)$ presents the degree of $v$ in $G$.
For a subgraph $G' \subseteq G$, $V(G')$ and $E(G')$ respectively present the set of vertices and edges of $G'$.

\noindent
We denote by $\Phi(G)$ the set of all possible layouts for the graph $G$.
A layout $\vfi$ can be considered as an ordering $(w_1, w_2, \ldots, w_n)$ of vertices of $V$. Accordingly
for $v = w_i \in V$, $\vfi(v) = i$. Without loss of generality we assume the left most and right most vertices are recursively labeled as
$1$ and $n$ and we call them the \emph{extreme vertices} based on $\vfi$.
\begin{notxxx}
Let $V_1,\ldots, V_k$ be a partitioning of $V$. We say a layout $\vfi$ is of type $(V_1,V_2,\ldots,V_k)$ if:
\[
    \forall 1 \le i < j \le k, \forall v \in V_i, \forall u \in V_j \Rightarrow \vfi(v) < \vfi(u)
\]
\end{notxxx}
\begin{notxxx}
Given layout $\vfi$ for $G=(V,E)$ and an edge $e = \Set{u,v} \in E$ we define the \emph{expand} of $e$ as:
\[
    \lambda(e, \vfi) = |\vfi(u) - \vfi(v)|
\]
\end{notxxx}
\noindent
Several cost functions have been defined on a given graph $G$ and layout $\vfi$. For a comprehensive list refer to~\cite{petit2013addenda}.
In this report we focus on $Optimal Linear Arrangement$ problem (OLA) defined as follows.
\begin{definition}{Optimal Linear Arrangement} Given an undirected graph $G=(V,E)$ and a layout $\vfi$
the \emph{linear arrangement} cost (LA) of $\vfi$ is:
\[
    LA(\vfi, G) = \sum_{\Set{v,u} \in E(G)} \lambda(\Set{u,v},\vfi)
\]
A layout $\vfi ^*$ is optimal if:
\[
     LA(\vfi ^*, G) = \min_{\vfi \in \Phi(G)} {LA(\vfi, G)}
\]
\end{definition}
Lemma~\ref{lemma:labeling-disjoint-edge} present a lower bound on the cost of optimal linear arrangement which
will be useful in presenting some properties and proofs in the rest of the paper.
\begin{lemma}
\label{lemma:labeling-disjoint-edge}
Given graph $G=(V,E)$ and two induced subgraph $G_1 = (V,E_1)$ and $G_2 = (V, E_2)$ s.t. $E_1 \cap E_2 = \emptyset$ and $E_1 \uplus E_2 = E$,
assume $\vfi ^*$, $\vfi ^*_{1}$ and $\vfi ^*_{2}$ are respectively the optimal linear arrangement for $G$, $G_1$ and $G_2$.Then $LA(\vfi, G) \ge LA(\vfi^ *_{1}, G_1) + LA(\vfi^ *_{2}, G_2)$.
\end{lemma}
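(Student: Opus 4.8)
The plan is to use the additivity of the linear-arrangement cost over an edge partition, together with the optimality of $\vfi^*_1$ and $\vfi^*_2$ on the respective pieces. First I would note that all three graphs $G$, $G_1$, $G_2$ have the \emph{same} vertex set $V$, so $\Phi(G) = \Phi(G_1) = \Phi(G_2)$ and any layout of $G$ is simultaneously a layout of $G_1$ and of $G_2$. In particular $\lambda(\Set{u,v},\vfi) = |\vfi(u)-\vfi(v)|$ depends only on $\vfi$ and the two endpoints of the edge, not on which subgraph the edge is regarded as belonging to.

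Second, since $E = E_1 \uplus E_2$ is a disjoint union, for \emph{every} $\vfi \in \Phi(G)$ the defining sum splits as $LA(\vfi,G) = \sum_{\Set{u,v}\in E_1}\lambda(\Set{u,v},\vfi) + \sum_{\Set{u,v}\in E_2}\lambda(\Set{u,v},\vfi) = LA(\vfi,G_1) + LA(\vfi,G_2)$. I would then instantiate this identity at $\vfi = \vfi^*$, the optimal layout of $G$, obtaining $LA(\vfi^*,G) = LA(\vfi^*,G_1) + LA(\vfi^*,G_2)$.

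Third, because $\vfi^*$ is merely \emph{some} layout of $G_1$ whereas $\vfi^*_1$ is by hypothesis an \emph{optimal} one, we have $LA(\vfi^*,G_1) \ge LA(\vfi^*_1,G_1)$, and symmetrically $LA(\vfi^*,G_2) \ge LA(\vfi^*_2,G_2)$. Adding these two inequalities and substituting into the identity of the previous step yields $LA(\vfi^*,G) \ge LA(\vfi^*_1,G_1) + LA(\vfi^*_2,G_2)$, which is the asserted bound (with the $\vfi$ on the left-hand side of the statement read as $\vfi^*$, evidently a typo). There is essentially no obstacle here: the one point meriting a moment's care is the observation that the common vertex set is what allows $\vfi^*$ to be fed back in as a legitimate candidate layout for the two subgraphs. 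The same argument extends verbatim, by induction on $k$, to a partition $E = E_1 \uplus \cdots \uplus E_k$ into any finite number of edge classes, which is presumably the form in which it will be applied later (e.g.\ splitting a Halin graph into its tree edges and its cycle edges).
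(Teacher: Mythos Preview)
Your proof is correct and is exactly the natural argument: split the sum over the edge partition and bound each piece below by its optimum. The paper itself states this lemma without proof, treating it as an elementary observation, so there is nothing further to compare against.
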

A Halin graph $H=(V,E)$ is constructed based on an underlying tree $T = (V, E')$ embedded in plane in a planar manner where all the leaf
nodes are connected using a cycle $C=(V',E'')$. A Halin graph $H$ is shown as $H = T \uplus C$. As depicted in Figure~\ref{fig:non-unique-Halin}
it's easy to see that for a given tree
$T$, there may exist finitely many non-isomorphic Halin graphs.
\begin{figure}
\centering
        \begin{subfigure}[b]{0.3\textwidth}
                \includegraphics[scale=.5]
                {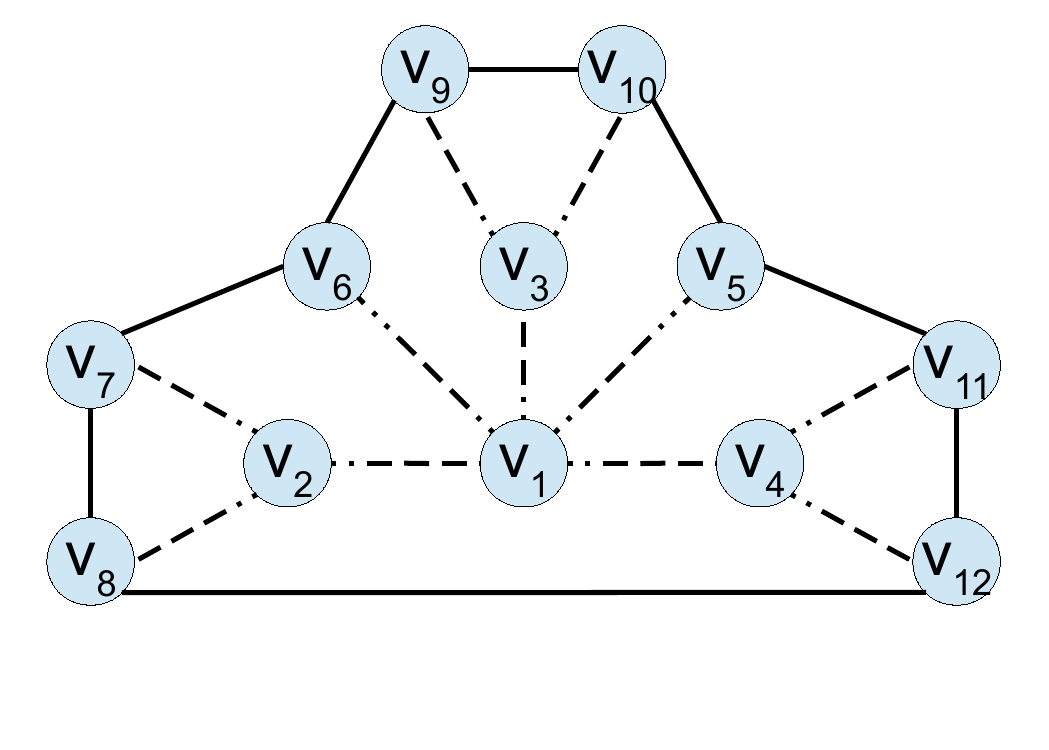}
                \caption{Halin graph $H_2$}
                \label{fig:non-RBT-exp-2}
        \end{subfigure}
        \qquad\qquad\qquad
        \begin{subfigure}[b]{0.3\textwidth}
                \includegraphics[scale=.5]
                {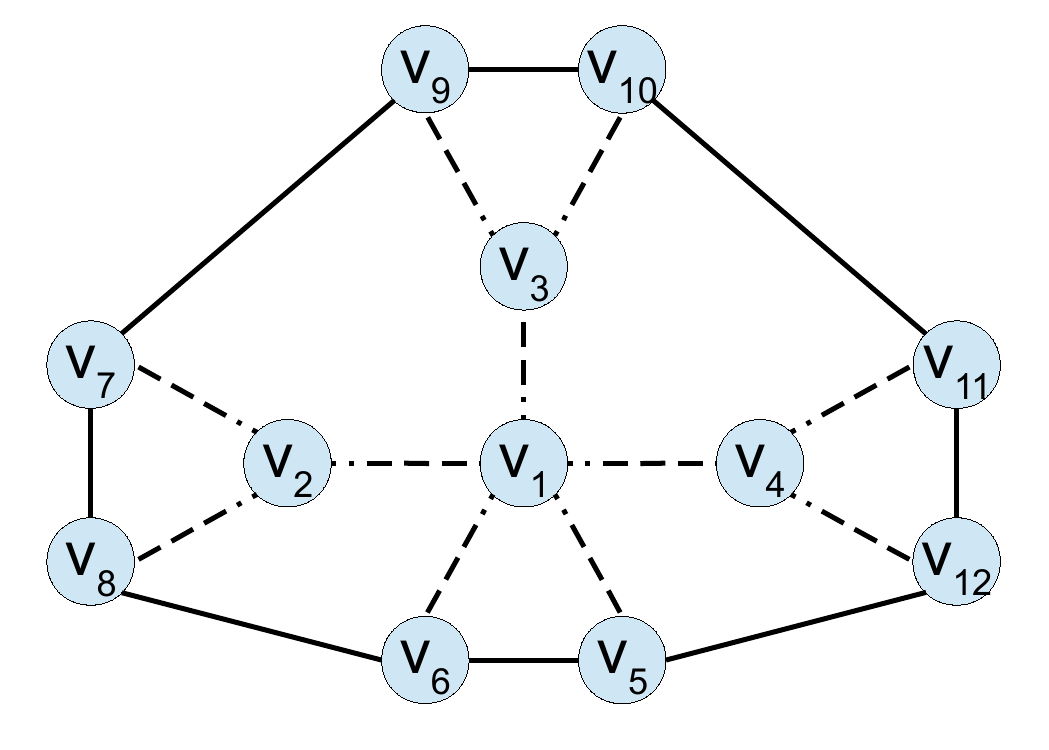}
                \caption{Halin graph $H_3$}
                \label{fig:non-RBT-exp-3}
        \end{subfigure}
        \caption{Two non-isomorphic Halin graphs constructed from the same tree using different embedding.}
        \label{fig:non-unique-Halin}
\end{figure}

\noindent
Based on the structure of Halin graphs, one may suspect that they inherit many of properties of their underlying tree.
Accordingly in the following subsection we present some properties and well-known facts regarding the OLA of trees.
\subsection{Linear Arrangement of Trees}
\label{sec:LA-trees}
As one of non-trivial results, the OLA of trees was first shown to be polynomially solvable in~\cite{goldberg1976algorithm}
and more efficient algorithms were later presented in~\cite{shiloach1979, chung1984}.
In this section we present some well-known properties of OLA of trees which
simplify the understanding of linear arrangement of Halin graphs
in the rest of the report. For more details, one can refer to ~\cite{chung1984}.
\begin{property}
\label{prop:tree-end-nodes}
Given an OLA $\vfi^ *$ for tree $T=(V,E)$, vertices that are assigned label $1$ and $n$ are leaf nodes (the two extreme vertices are leafs in $T$).
\end{property}
\begin{definition}{Spinal path}
\label{def:tree-spine}
Given layout $\vfi$ for a tree $T=(V,E)$, and vertices $v,u \in V$
where $\vfi(v) = 1$ and $\vfi(u) = n$, we define the path $P = (w_1 = v, w_2, \dots, w_l = u)$ connecting
$v$ and $u$, the \textbf{spine} of tree $T$ corresponding to $\vfi$.
\end{definition}

\begin{property}
\label{prop:tree-monotone-path}
Having OLA $\vfi^ *$ for tree $T=(V,E)$ and spinal path $P = (w_1 = v, w_2, \dots, w_l = u)$, it is the case that $\forall 1 \le i < l: \vfi^ *(w_i) < \vfi^ *(w_{i+1})$. In other word, the function $\vfi^ *$ is monotonic along the path $P$.
\end{property}
\begin{definition}{Spinal rooted subtree and anchored branches}
\label{def:anchored-subtree}
Given layout $\vfi$ for a tree $T=(V,E)$ and spinal path $P = (w_1 = v, w_2, \dots, w_l = u)$,
Removing all edges of $P$, leaves us with a set of subtrees $T_1, T_2, \ldots, T_l$ respectively rooted at
$w_1, w_2, \dots, w_l$. Removing spinal vertex $w_i$ from $T_i$ where $d_T(w_i) = k > 2$, results in a set of branches $B_{i,1}, B_{i,2},\ldots,B_{i,k-2}$. Each branch $B_{i,j}$ is anchored at a vertex $v_j$ which is connected to $w_i$.
\end{definition}
\begin{property}
\label{prop:tree-subtrees-labeled-separately}
Consider a tree $T=(V,E)$ and its OLA $\vfi^ *$ and the corresponding spinal path $P = (w_1 = v, w_2, \dots, w_l = u)$.
Removing all the edges of $P$ results in a set of $l$ subtrees $T_1, \ldots, T_l$, respectively rooted at $w_1, \ldots, w_l$.
Then based on $\vfi^ *$, for a fixed $i$, the vertices of of $T_i$ are labeled by contentious integers. Formally speaking:
\[
    \forall 1 < i < l, \forall u \in T_{i-1}, v \in T_{i}, w \in T_{i+1} \Rightarrow \vfi^ *(u) < \vfi^ *(u) < \vfi^ *(w)
\]
Moreover $\vfi^ *$ restricted to $V(T_i)$ (denoted by $\vfi^ *_i$) is optimal for $T_i$.
\end{property}
%
%
%
%

\section{Some Properties of OLA of Halin Graphs}
\label{sect:OLA-halin-graphs}
    Halin graphs are the example of edge-minimal 3-connected graphs.
Hence, in a Halin graph $H = T \uplus C$, for any two vertices $v$ and $u$, there are exactly three edge-disjoint paths connecting $v$ and $u$
where one comprises only edges of $E(T)$.
\begin{definition}{Spinal path in Halin graphs}
Given layout the $\vfi$ for a Halin graph $H$ and two vertices $v,u \in E(H)$ where $\vfi(v)=1$ and $\vfi(u)=n$,
the \emph{spinal path} based on $\vfi$, is defined as the path $P = (w_1 = v, w_2, \dots, w_l = u)$ where for every $1 \le i < l$,
$\Set{w_i,w_{i+1}}$ is an edge in $T$.
\end{definition}
\begin{definition}{Spinal rooted subtree and anchored branches in Halin graphs}
\label{def:anchored-subtree}
Given layout $\vfi$ for a a Halin graph $H=T \uplus C$ and spinal path $P = (w_1 = v, w_2, \dots, w_l = u)$,
removing all edges of $P$ and $E(C)$ results in a set of subtrees $T_1, T_2, \ldots, T_l$, respectively rooted at
$w_1, w_2, \dots, w_l$. Removing spinal vertex $w_i$ from $T_i$ where $d_T(w_i) = k > 2$, give us a set of branches $B_{i,1}, B_{i,2},\ldots,B_{i,k-2}$. Also each branch $B_{i,j}$ is anchored at a vertex $v_j$, connected to $w_i$.
\end{definition}
\begin{lemma}
\label{lemma:halin-subtrees-labeled-separately}
Consider the Halin graph $H=T \uplus C$ and the spinal path $P = (w_1, w_2, \dots, w_l)$ based on a given OLA $\vfi^ *$.
Removing all the edges of $P$ and $C$ results in a set of $l$ subtrees $T_1, \ldots, T_l$ respectively rooted at $w_1, \ldots, w_l$.
For a fixed $i$, the vertices of of $T_i$ are labeled by contentious integers by OLA $\vfi^ *$. Formally speaking:
\[
    \forall 1 < i < l, \forall u \in T_{i-1}, v \in T_{i}, w \in T_{i+1} \Rightarrow \vfi^ *(u) < \vfi^ *(u) < \vfi^ *(w)
\]
\end{lemma}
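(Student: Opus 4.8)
The plan is to mirror the tree argument (Property~\ref{prop:tree-subtrees-labeled-separately}) but to control the extra edges of the cycle $C$, which are the only obstacle. First I would fix the OLA $\vfi^*$ and its spinal path $P=(w_1,\ldots,w_l)$, and recall that by the Halin analogue of Property~\ref{prop:tree-monotone-path} the labels increase monotonically along $P$ (this needs to be established first, if not already available: a non-monotone spine can be shortcut to strictly decrease cost, using that consecutive spine vertices are $T$-adjacent). Monotonicity of $\vfi^*$ along $P$ immediately gives that the intervals of labels ``owned'' by the subtrees $T_1,\ldots,T_l$ appear in the correct left-to-right order at the spinal vertices themselves; the content of the lemma is that each $T_i$ occupies a \emph{contiguous} block, i.e. no vertex of $T_i$ is separated from $w_i$ by a vertex of some $T_j$ with $j\neq i$.

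The core step is an exchange argument. Suppose for contradiction that the vertices of some $T_i$ do \emph{not} form a contiguous block under $\vfi^*$. Then there is a maximal block of consecutive positions belonging entirely to $\bigcup_{j\neq i}V(T_j)\cup\{w_1,\ldots,w_l\}\setminus\{w_i\}$ that lies strictly between two $T_i$-vertices, or symmetrically a $T_i$-block that is ``misplaced''. I would take such a maximal foreign block $B$ and slide it, as a rigid contiguous interval, past the adjacent $T_i$-material so as to reduce the number of spinal edges and $T_i$-internal edges that straddle it. The key quantitative point: moving a contiguous block of size $s$ across a set of $t$ positions changes $LA$ by exactly $\sum_{e}(\pm s)$ over edges $e$ with exactly one endpoint in the block — so it suffices to show the net number of such ``crossing'' edges can be made non-positive by choosing the direction of the slide. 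For the tree edges this is the standard computation from~\cite{chung1984}. The new ingredient is the cycle $C$: its edges only connect leaves, and after removing $E(C)$ each leaf lies in a unique $T_j$; I would argue that because $C$ is a single cycle through the leaves, the number of $C$-edges crossing the boundary of any contiguous block is even and, more importantly, is invariant (or non-increasing) under the slide, so $C$ contributes nothing that can defeat the tree-edge gain. Concretely, one shows the $C$-edges crossing $B$ come in a bounded, parity-constrained number that the strict gain on the (at least one) spinal or $T_i$-edge straddling $B$ strictly dominates — contradicting optimality of $\vfi^*$.

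The hard part will be exactly this bookkeeping of cycle edges during the slide: unlike in a tree, a contiguous block can have many $C$-edges leaving it, and a naive slide might increase the expand of several of them. I expect the resolution to rely on two facts proved separately — (i) the spine is monotone (so there is always a ``downhill'' direction for the slide at the boundary), and (ii) the leaves of any single $T_j$ appear consecutively along the cycle $C$ in any planar embedding, so the $C$-edges incident to a block split into only $O(1)$ ``bundles'' at each end, making the crossing count controllable. If (ii) is not wanted, an alternative is to first apply Lemma~\ref{lemma:labeling-disjoint-edge} with $E_1=E(T)$ and $E_2=E(C)$: then it is enough to show the claimed contiguity holds for the cost $LA(\cdot,T)$ and that the slide never increases $LA(\cdot,C)$ — decoupling the two and reducing the $C$-part to showing a block slide along a cyclic order of leaves is cost-neutral, which is immediate. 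I would write the proof in that decoupled form to keep the cycle handling to a minimum.
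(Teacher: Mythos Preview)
Your proposal has two genuine gaps, and the overall route is different from the paper's.

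\medskip
\textbf{First gap: circularity with spine monotonicity.} You want to \emph{assume} monotonicity of $\vfi^*$ along the spine and then derive contiguity of the $T_i$'s. In the paper the dependence runs the other way: monotonicity (Corollary~\ref{corollary:halin-monotonic-on-spine}) is a corollary of the present lemma, not an input to it. Your sketched independent proof (``a non-monotone spine can be shortcut'') is the tree argument; in the Halin setting the cycle edges can penalise that shortcut, and you have not argued they do not. Even granting monotonicity, it only orders the spinal vertices $w_1,\ldots,w_l$; it does not by itself prevent a non-spinal vertex of $T_1$ from sitting to the right of $w_2$, which is exactly the situation you must rule out.

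\medskip
\textbf{Second (main) gap: the cycle is not cost-neutral under a block slide.} Your decoupled plan is to show the slide strictly lowers $LA(\cdot,T)$ and does not raise $LA(\cdot,C)$, calling the latter ``immediate''. It is not. A contiguous block $B$ of positions generally contains leaves of several $T_j$'s, and sliding $B$ can move those leaves away from their $C$-neighbours on the other side; there is no parity or ``bundle'' argument that forces the net $C$-cost change to be $\le 0$. Lemma~\ref{lemma:labeling-disjoint-edge} only gives a lower bound on $LA(\vfi^*,H)$; it does not let you analyse the two costs of a single specific exchange independently. The ``$O(1)$ bundles'' observation (leaves of each $T_j$ are consecutive on $C$) is true but does not bound the number of $C$-edges crossing $\partial B$, since $B$ may meet many $T_j$'s.

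\medskip
\textbf{What the paper does instead.} The paper never separates the $T$-cost from the $C$-cost. It defines a single alternative layout $\vfi^{\diamond}$ that pushes all of $V(T_1)$ (then $V(T_2)$, etc.) to a contiguous prefix while preserving relative order, and compares $LA(\vfi^*,H)$ with $LA(\vfi^{\diamond},H)$ directly. The key inequality $\Delta\ge 2(\Delta_1+\Delta_2)$, where $\Delta_1,\Delta_2$ count overlapping vertices, comes from $3$-connectivity of $H$: any subset of $V(T_1)$ not touching the three outgoing edges is linked to its complement in $T_1$ by at least three edge-disjoint paths, so every overlapping vertex stretches at least two internal edges. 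This gain is then set against the change in expand of the (at most three, or for interior $T_i$ at most six) boundary edges, which is bounded by $\Delta_1+\Delta_2$ each. Three-connectivity is the engine of the proof; your proposal does not invoke it, and without it the cycle bookkeeping does not close.
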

See proof~\ref{proof:halin-subtrees-labeled-separately} in Appendix~\ref{apndx:Proof-lemmas}.
\begin{corollary}
\label{corollary:halin-monotonic-on-spine}
Having OLA $\vfi^ *$ for Halin graph $H$ and spinal path $P = (w_1, w_2, \dots, w_l)$, it is the case that $\forall 1 \le i < l: \vfi^ *(w_i) < \vfi^ *(w_{i+1})$. In other word, the function $\vfi^ *$ is monotonic along the path $P$.
\end{corollary}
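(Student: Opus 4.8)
The plan is to read this statement off directly from Lemma~\ref{lemma:halin-subtrees-labeled-separately}, in exactly the way Property~\ref{prop:tree-monotone-path} follows from Property~\ref{prop:tree-subtrees-labeled-separately} in the tree case. Recall that deleting the edges of the spinal path $P$ together with the edges of the cycle $C$ from $H$ splits the underlying tree $T$ into exactly $l$ subtrees $T_1,\ldots,T_l$ with $w_i\in V(T_i)$ for each $i$; since $T$ is a tree and the removed path edges are precisely the $\Set{w_i,w_{i+1}}$ for $1\le i<l$, distinct spinal vertices land in distinct components, so $w_i$ and $w_{i+1}$ always lie in different subtrees $T_i$ and $T_{i+1}$.

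Next I would turn the local block-orderings supplied by Lemma~\ref{lemma:halin-subtrees-labeled-separately} into a global one. The lemma says that for every $1<i<l$ and all $x\in V(T_{i-1})$, $y\in V(T_i)$, $z\in V(T_{i+1})$ we have $\vfi^*(x)<\vfi^*(y)<\vfi^*(z)$; equivalently, $\max_{x\in V(T_j)}\vfi^*(x)<\min_{x\in V(T_{j+1})}\vfi^*(x)$ for every $1\le j<l$ (for $l\ge 3$ this is obtained by instantiating the lemma at $i=2,3,\ldots,l-1$ and taking the union of the resulting inequalities; for $l\le 2$ it is immediate from $\vfi^*(w_1)=1$ and $\vfi^*(w_l)=n$, which hold by the definition of the spinal path). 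Applying this with $j=i$ and using $w_i\in V(T_i)$, $w_{i+1}\in V(T_{i+1})$ gives $\vfi^*(w_i)<\vfi^*(w_{i+1})$ for every $1\le i<l$, which is the asserted monotonicity.

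The argument involves no real obstacle; the only point deserving care is that Lemma~\ref{lemma:halin-subtrees-labeled-separately} is stated only for interior indices $1<i<l$, so the endpoint blocks $V(T_1)$ and $V(T_l)$ enter the global chain only through the union step above (or through the degenerate cases $l\le 2$). All of the substance sits in Lemma~\ref{lemma:halin-subtrees-labeled-separately} itself, whose proof is given in the appendix.
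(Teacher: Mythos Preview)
Your proposal is correct and matches the paper's approach: the paper states the result as an immediate corollary of Lemma~\ref{lemma:halin-subtrees-labeled-separately} with no separate proof, and your argument simply spells out the obvious deduction from that lemma, handling the boundary indices and the degenerate case $l\le 2$ carefully.
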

\begin{lemma}
\label{lemma:halin-branches-labeled-separately}
Consider an OLA $\vfi^{*}$ for a Halin graph $H = T \uplus C$ and the set of subtrees $T_1, \ldots, T_l$ resulted after removing the edges of $C$ and the spinal path $P = (w_1, w_2, \dots, w_l)$. Let $\Set{B_{i,1},\dots,B_{i,k-2}}$ be the set of branches of $T_{i}$ connected to a spinal vertex $w_i$ with degree $k > 2$. For two branches $B_{i,j}$ and $B_{i,j'}$:
\begin{itemize}
  \item If $\vfi^{*}$ is of type $(\ldots , w_i , \ldots , V(B_{i,j}) \cup V(B_{i,j'}) , \ldots)$ then it is of type $(\ldots , w_i , \ldots , V(B_{i,j}), V(B_{i,j'}) , \ldots)$ or $(\ldots , w_i , \ldots , V(B_{i,j'}), V(B_{i,j}) , \ldots)$
  \item If $\vfi^{*}$ is of type $(\ldots, V(B_{i,j}) \cup V(B_{i,j'}), \ldots, w_i, \ldots)$ then it is of type $(\ldots, V(B_{i,j}), V(B_{i,j'}), \ldots, w_i , \ldots)$ or $(\ldots, V(B_{i,j'}), V(B_{i,j}), \ldots, w_i, \ldots)$
\end{itemize}
\noindent
In other word the two branches $B_{i,j}$ and $B_{i,j'}$ which are on the same side of $w_i$ (either their vertices are all labeled after $w_i$ or all before it), do not overlap.
\end{lemma}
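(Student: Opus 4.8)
The plan is to prove the first bullet (the "all after $w_i$" case) in detail; the second bullet is the mirror image and follows by reversing the layout. So assume $\vfi^{*}$ is of type $(\ldots, w_i, \ldots, V(B_{i,j})\cup V(B_{i,j'}), \ldots)$, and suppose toward a contradiction that the two branches \emph{do} overlap — i.e. there exist vertices $a,a'\in V(B_{i,j})$ and $b,b'\in V(B_{i,j'})$ with $\vfi^{*}(a) < \vfi^{*}(b) < \vfi^{*}(a')$ (and hence also some interleaving forces $\vfi^{*}(b'')$ strictly between two $B_{i,j}$-vertices or vice versa). The key structural fact I would lean on is that, after removing the edges of $C$ and of the spinal path $P$, each branch $B_{i,j}$ is a connected subtree attached to the rest of $H$ only through its anchor edge $\{w_i, v_j\}$ together with the cycle edges of $C$ incident to leaves of $B_{i,j}$. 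By Lemma~\ref{lemma:halin-subtrees-labeled-separately} the whole block $V(T_i)$ occupies a contiguous interval of labels, so within that interval I only need to reason about the tree edges inside $T_i$ and the $C$-edges among leaves of $T_i$.

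The main step is an exchange argument. Consider the interval $I$ of positions occupied by $V(B_{i,j})\cup V(B_{i,j'})$. If the two branches interleave, pick a maximal contiguous sub-block of positions occupied entirely by one branch, say $B_{i,j'}$, that is flanked on both sides by $B_{i,j}$-vertices, and slide that entire sub-block to the appropriate end of $I$ (the end nearer $v_j$'s anchor, chosen so that the anchor edge $\{w_i,v_{j'}\}$ does not lengthen). I would then bound the change in cost edge by edge: edges internal to the moved block are unaffected; edges of $B_{i,j}$ that previously "jumped over" the moved block get shorter by exactly the block's width; the anchor edge of $B_{i,j'}$ changes by a controlled amount; and — this is the delicate part — the cycle edges of $C$ that connect a leaf of $B_{i,j}$ to a leaf of $B_{i,j'}$, or a leaf of one of these branches to a leaf outside, must be accounted for. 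Because $C$ is a single cycle visiting the leaves in the planar order, the leaves of $B_{i,j}$ and of $B_{i,j'}$ each form a \emph{contiguous arc} of $C$ (both branches hang off the same spinal vertex $w_i$ and are consecutive in the planar embedding), so at most two cycle edges cross between the two branch-blocks and at most two leave the combined block; a careful case analysis on which end I slide toward shows the net cost change is nonpositive, contradicting optimality of $\vfi^{*}$ unless the branches were already separated.

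The hard part will be handling the cycle edges cleanly: unlike in the tree case (Property~\ref{prop:tree-subtrees-labeled-separately}), the branches are not attached only by a single edge, so the naive "slide a block" move can in principle stretch a $C$-edge. The crux is to use the planarity of $H=T\uplus C$ to argue that the leaves of each branch form a consecutive segment of the cycle $C$, bounding the number of "boundary" cycle edges by a constant and making the exchange argument go through; I would isolate this as a small preliminary observation (the leaf-sets of the branches $B_{i,1},\dots,B_{i,k-2}$ of $w_i$ partition a contiguous arc of $C$ into consecutive sub-arcs, in the same cyclic order as the branches around $w_i$), and then the rest is the routine edge-by-edge bookkeeping sketched above, together with the symmetric treatment of the second bullet via layout reversal.
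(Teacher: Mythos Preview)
Your overall strategy---assume overlap, exhibit a rearrangement that strictly lowers $LA$, contradict optimality---is exactly the paper's. But the specific exchange move you propose and the accounting you sketch both have real gaps.

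First, your move slides a single maximal $B_{i,j'}$-block (flanked on both sides by $B_{i,j}$-vertices) to one end of the combined interval $I$. You account for edges internal to the block, edges of $B_{i,j}$ that jump over it, the anchor edge, and boundary cycle edges; but you omit the edges of $B_{i,j'}$ that run from the slid block to \emph{other} $B_{i,j'}$-blocks. If $B_{i,j'}$ is scattered into several runs, those tree edges can lengthen when you slide one run, and nothing in your sketch bounds that increase. The paper avoids this by a different move: it shifts \emph{all} of $B_{i,1}$ to the left of \emph{all} of $B_{i,2}$ in one shot, preserving the relative order within each branch, so no intra-branch edge changes length at all. The cost change then involves only the at-most-five boundary edges ($\{w_i,v_1\}$, $\{w_i,v_2\}$, and up to three cycle edges) plus a global ``overlap penalty'' $\Delta$.

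Second, the inequality that makes the paper's accounting work is $\Delta \ge 2(\Delta_1+\Delta_2)$, where $\Delta_1,\Delta_2$ count vertices of each branch in the overlap region. This comes from a $3$-connectivity fact (their Fact~A.5): any proper subset of a branch not containing all three boundary-edge endpoints is joined to the rest of the branch by at least two edge-disjoint paths, so every overlap vertex contributes to the stretch of at least two edges of the other branch. Your planarity observation (leaves of each branch form a contiguous arc of $C$) is correct and bounds the number of boundary cycle edges, but it does not by itself give the $2(\Delta_1+\Delta_2)$ lower bound you need to offset the boundary edges. Also, your parenthetical that the two branches are ``consecutive in the planar embedding'' is not true in general: the lemma is stated for an arbitrary pair $B_{i,j},B_{i,j'}$ among the $k-2$ branches, and they need not be adjacent around $w_i$ (when they are not, there is no cycle edge between them at all, which is fine, but your justification is off).

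Third, the paper splits into a genuinely harder ``enclosure'' case (their Case~3: $\underline v,\overline v$ both in $B_{i,1}$, so $B_{i,2}$ sits entirely inside the span of $B_{i,1}$). There the single global shift is not always enough; they must compare both a left-shift and a right-shift and, in one sub-case, perform a secondary rearrangement inside $B_{i,1}$ to relocate the endpoint of the connecting cycle edge. Your sketch treats only the interleaved situation and does not indicate how you would handle this enclosure case, which is where the bookkeeping is most delicate.
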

For proof refer to Appendix~\ref{apndx:Proof-lemmas}, proof~\ref{proof:halin-branches-labeled-separately}.
\begin{theorem}
\label{thrm:halin-OLA-extreme-vetrices}
Given an OLA $\vfi^{*}$ for a Halin graph $H=T \uplus C$ and the vertices $v$ and $u$ where $\vfi^{*}(v) = 1$ and $\vfi^{*}(u) = n$, it is always the case that:
\begin{itemize}
  \item $v$ and $u$ are both leaves in $T$ or
  \item if $v$ (or $u$) is not a leaf vertex in $T$, then degree of $v$ is three and it is connected to
  exactly two leaves in $T$. Accordingly replacing the label of $v$ (or $u$) with one of it's leaf nodes, we get
  another OLA $\vfi^{\circledast}$ where the extreme nodes are leaves in $T$.
\end{itemize}
\end{theorem}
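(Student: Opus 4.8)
\emph{Proof plan.} The plan is to prove the claim for the vertex $v$ with $\vfi^{*}(v)=1$; the vertex $u$ is handled symmetrically. If $v$ is a leaf of $T$ there is nothing to prove, so assume $v$ is internal in $T$. Then $v$ carries no cycle edge, so all $d:=d_{T}(v)=d_{H}(v)\ge 3$ edges at $v$ are tree edges, and they all point rightward. Let $P=(w_{1}=v,\dots,w_{l}=u)$ be the spinal path. By Lemma~\ref{lemma:halin-subtrees-labeled-separately} the subtree $T_{1}$ rooted at $w_{1}$ occupies an initial block of positions $\{1,\dots,t\}$, $t=|V(T_{1})|$, with $w_{1}$ at position $1$; inside that block the $r:=d-1\ge 2$ branches hanging at $w_{1}$ are pairwise non-overlapping by Lemma~\ref{lemma:halin-branches-labeled-separately}, so each occupies a contiguous sub-block. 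Order them left to right, so that the leftmost branch $B^{(1)}$ occupies $\{2,\dots,s_{1}+1\}$ with $s_{1}=|V(B^{(1)})|$, and let $a^{(j)}$ be the anchor of $B^{(j)}$. I would use throughout that the leaves of any subtree of $T$ are consecutive along $C$, so that $V(B^{(1)})$ is joined to the rest of $H$ by exactly three edges --- the tree edge $\{w_{1},a^{(1)}\}$ and two cycle edges --- and that the two outer endpoints of those cycle edges, not being $w_{1}$ (which is not a leaf), lie to the right of $B^{(1)}$'s block.

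\emph{First exchange.} Form $\vfi'$ from $\vfi^{*}$ by sliding $B^{(1)}$ one unit left into $\{1,\dots,s_{1}\}$ and placing $w_{1}$ at position $s_{1}+1$, keeping all else fixed. Every edge internal to $B^{(1)}$, internal to another branch, internal to $T_{2}\cup\cdots\cup T_{l}$, or of the form $\{w_{i},w_{i+1}\}$ with $i\ge 2$, keeps its length. The edge $\{w_{1},w_{2}\}$ and the $r-1$ edges $\{w_{1},a^{(j)}\}$ with $j\ge 2$ each shorten by $s_{1}$; the two cycle edges leaving $B^{(1)}$ each lengthen by $1$; and $\{w_{1},a^{(1)}\}$ changes by $s_{1}+3-2\vfi^{*}(a^{(1)})$. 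Hence $LA(\vfi',H)-LA(\vfi^{*},H)=-(r-1)s_{1}+5-2\vfi^{*}(a^{(1)})\le 1-(r-1)s_{1}$, using $\vfi^{*}(a^{(1)})\ge 2$. Optimality of $\vfi^{*}$ forces $(r-1)s_{1}\le 1$, hence $r=2$ (so $d_{T}(v)=3$) and $s_{1}=1$: $B^{(1)}$ is a single vertex, necessarily a leaf of $T$, which I call $\ell_{1}=a^{(1)}$, sitting at position $2$.

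\emph{Second exchange.} Now $v$ has exactly one remaining branch $B^{(2)}$, occupying $\{3,\dots,t\}$ with anchor $a^{(2)}$; moreover $\ell_{1}$ lies at an end of the leaf-arc of $T_{1}$ on $C$, so one cycle neighbour $g$ of $\ell_{1}$ lies in $V(B^{(2)})$ and the other lies at a position $>t$. Form $\vfi''$ by sliding $B^{(2)}$ two units left into $\{1,\dots,t-2\}$ and placing the adjacent pair $\{w_{1},\ell_{1}\}$ into positions $t-1,t$. An analogous accounting --- the edge $\{w_{1},\ell_{1}\}$ keeps length $1$; $\{w_{1},w_{2}\}$ shortens; and the edges $\{w_{1},a^{(2)}\}$, the two cycle edges at $\ell_{1}$, and the cycle edge leaving the far end of $B^{(2)}$'s leaf-arc change by amounts linear in the relevant positions --- yields $LA(\vfi'',H)-LA(\vfi^{*},H)=12-2\vfi^{*}(a^{(2)})-2\vfi^{*}(g)$, so optimality forces $\vfi^{*}(a^{(2)})+\vfi^{*}(g)\le 6$. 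Both positions lie in $\{3,\dots,t\}$, hence are $\ge 3$; and if $|V(B^{(2)})|\ge 2$ then $a^{(2)}$ is an internal node, hence $a^{(2)}\ne g$, so these would be distinct integers $\ge 3$ with sum $\ge 7$, a contradiction. Therefore $B^{(2)}$ is a single leaf $\ell_{2}$, so $d_{T}(v)=3$ and the two non-spinal neighbours $\ell_{1},\ell_{2}$ of $v$ are leaves of $T$. Finally, swapping $v$ with $\ell_{1}$ --- they are adjacent, and both have degree $3$ --- changes the total cost by $d_{H}(\ell_{1})-d_{H}(v)=0$, so the resulting layout $\vfi^{\circledast}$ is still optimal and places a leaf of $T$ at position $1$; the right end is treated the same way.

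The routine part of this plan is the two length computations. The step I expect to be the real obstacle is the treatment of the cycle edges: one must use the consecutiveness of a subtree's leaves along $C$ to know that only two cycle edges cross each shifted block, and then pin down the direction in which those edges run relative to the shift so that their net effect is an exact quantity rather than a mere one-sided bound. A related point to be careful about is that it is specifically the \emph{leftmost} branch whose leaf-arc can be slid while affecting only its two boundary cycle edges, which is why the first exchange is aimed at $B^{(1)}$.
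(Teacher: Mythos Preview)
Your argument is correct and genuinely different from the paper's. The paper proceeds by a global ``mirroring'' construction: keeping the branch $B_{1,j}$ that carries one of the two outgoing cycle edges fixed and reflecting all the other branches (and $v$) about it, then running a somewhat long chain of inequalities on the expands of the six boundary edges, followed by a case split (Case~1 vs.\ Case~2 in proof~\ref{proof:halin-OLA-extreme-vetrices}) to finally pin down $d_T(v)=3$ and $\beta_{1,1}=\beta_{1,2}=1$. Your approach replaces this with two short local block-slides: first slide the leftmost branch past $w_1$, which immediately forces $r=2$ and $s_1=1$ via the exact identity $LA(\vfi',H)-LA(\vfi^{*},H)=-(r-1)s_{1}+5-2\vfi^{*}(a^{(1)})$; then slide the remaining branch past the adjacent pair $\{w_1,\ell_1\}$, giving the equally clean $12-2\vfi^{*}(a^{(2)})-2\vfi^{*}(g)$ and forcing $s_2=1$. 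Your route yields exact equalities rather than inequality chains and avoids the paper's case analysis entirely; the price is that you must invoke the consecutiveness of a subtree's leaves along $C$ explicitly to count the crossing cycle edges and fix their direction, whereas the paper's mirroring absorbs this into the figure-level description. The worry you flag at the end --- pinning down the direction of the two boundary cycle edges --- is handled by the observation that in both exchanges the only vertex to the left of the sliding block is $w_1$ (or $\{w_1,\ell_1\}$), none of which are leaves, so every outer endpoint of a crossing cycle edge lies strictly to the right; that turns your one-sided bounds into the equalities you need. The final adjacent swap is fine for the reason you give.
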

This lemma is proven in proof~\ref{proof:halin-OLA-extreme-vetrices} in Appendix~\ref{apndx:Proof-lemmas}.
\begin{corollary}
\label{corol:halin-opt-LA-bound}
Consider an OLA $\vfi^{*}$ for a Halin graph $H=(V,E)$ constructed from tree $T = (V,E')$ and cycle $C$, then:
\[
    LA(\vfi^{*},H) \ge 2 \times (n-1) + LA(\vfi_T ^{*},T)
\]
where $\vfi_T ^{*}$ is the OLA for $T$.
\end{corollary}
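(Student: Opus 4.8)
The plan is to combine the edge-disjoint decomposition bound of Lemma~\ref{lemma:labeling-disjoint-edge} with the structural information about the extreme vertices supplied by Theorem~\ref{thrm:halin-OLA-extreme-vetrices}. Write $H = T \uplus C$, so that $E(H)$ is the disjoint union of $E(T)$ and $E(C)$. By Lemma~\ref{lemma:labeling-disjoint-edge} applied to the induced subgraphs $(V, E(T))$ and $(V, E(C))$ we immediately get, for the OLA $\vfi^*$ of $H$,
\[
    LA(\vfi^*, H) \;\ge\; LA(\vfi_T^*, T) + LA(\vfi_C^*, C),
\]
where $\vfi_T^*$ and $\vfi_C^*$ are optimal layouts for $T$ and for $C$ respectively. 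So it suffices to show $LA(\vfi_C^*, C) \ge 2(n-1)$, i.e.\ that any layout of the cycle $C$ (on the full vertex set $V$, where the $n - |V(C)|$ internal tree vertices are isolated in $C$) has cost at least $2(n-1)$.

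The second step is the lower bound on the linear arrangement cost of a cycle. Here I would use the standard cut/boundary argument: for a layout $\vfi$ of a graph on $N$ vertices and each position $t \in \{1,\dots,N-1\}$, let $\theta(t)$ be the number of edges with one endpoint in positions $\le t$ and the other in positions $> t$; then $LA(\vfi) = \sum_{t=1}^{N-1}\theta(t)$. For a cycle, every such cut that separates at least one vertex of the cycle from at least one other vertex of the cycle must be crossed by at least $2$ cycle-edges (a cycle is $2$-edge-connected), while cuts that are "trivial" with respect to $C$ contribute $0$. Since $C$ spans the $|V(C)| = n - (\text{\#internal tree vertices})$ leaves, a more careful count is needed to reach exactly $2(n-1)$ rather than $2(|V(C)|-1)$, and this is where Theorem~\ref{thrm:halin-OLA-extreme-vetrices} comes in: it tells us that in the OLA of $H$ the extreme vertices may be taken to be leaves of $T$, hence vertices of $C$, so positions $1$ and $N = n$ are occupied by cycle vertices. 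Combined with the fact (Corollary~\ref{corollary:halin-monotonic-on-spine} and Lemma~\ref{lemma:halin-subtrees-labeled-separately}) that the spinal path threads through all $l$ subtrees, one argues that for each of the $n-1$ internal cuts $t = 1,\dots,n-1$ there are cycle vertices on both sides, so $\theta_C(t) \ge 2$, giving $LA(\vfi^*{\restriction}C, C) \ge 2(n-1)$; and since $\vfi^*{\restriction}C$ cannot beat $\vfi_C^*$ we are done. Actually it is cleaner to argue directly on $\vfi^*$: every cut of $\vfi^*$ at position $t\in\{1,\dots,n-1\}$ has at least two $C$-edges crossing it (because leaves of $C$ appear at both extremes and $C$ is a single cycle through all leaves), contributing $\ge 2(n-1)$ from $C$-edges alone, while the $T$-edges of $\vfi^*$ contribute $\ge LA(\vfi_T^*,T)$ because $\vfi^*{\restriction}V(T)$ is a layout of $T$.

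The main obstacle is the claim that \emph{every} internal cut is crossed by at least two cycle edges. This is not automatic from $2$-edge-connectivity of $C$ alone, since $C$ does not span $V$: a cut could in principle separate only internal tree vertices from the rest, crossing no $C$-edge. To rule this out I would invoke the monotonicity of the spinal path along $\vfi^*$ (Corollary~\ref{corollary:halin-monotonic-on-spine}) together with Lemma~\ref{lemma:halin-subtrees-labeled-separately}: the leaves of $C$ are distributed among the subtrees $T_1,\dots,T_l$, the first and last occupied positions are leaves, and each "gap" position lies inside some $T_i$; one then checks that the portion of $C$ to the left of the cut and the portion to the right are both nonempty, so at least two chords of the cycle straddle the cut. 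Making this step fully rigorous — essentially showing the leaves never all pile up on one side of any internal cut — is the crux; everything else is the routine cut-counting identity plus the two cited results.
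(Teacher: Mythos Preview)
Your direct argument is correct and is exactly the (implicit) proof the paper intends: by Theorem~\ref{thrm:halin-OLA-extreme-vetrices} one may take an OLA $\vfi^{*}$ whose two extreme positions are leaves of $T$, hence vertices of $C$; then $LA(\vfi^{*},H)=LA(\vfi^{*},T)+LA(\vfi^{*},C)$, the first summand is trivially $\ge LA(\vfi_T^{*},T)$, and the cut identity gives $LA(\vfi^{*},C)=\sum_{t=1}^{n-1}\theta_C(t)\ge 2(n-1)$.

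Two remarks. First, the opening route through Lemma~\ref{lemma:labeling-disjoint-edge} genuinely does not reach the target: the graph $(V,E(C))$ has the internal tree vertices as isolated points, so its OLA value is only $2(|V(C)|-1)$, not $2(n-1)$. You noticed this, but it is worth stating plainly that Lemma~\ref{lemma:labeling-disjoint-edge} is a red herring here; the argument must stay with the specific layout $\vfi^{*}$. Second, the ``main obstacle'' you flag in the last paragraph is not an obstacle at all, and you do not need Corollary~\ref{corollary:halin-monotonic-on-spine} or Lemma~\ref{lemma:halin-subtrees-labeled-separately}. Once Theorem~\ref{thrm:halin-OLA-extreme-vetrices} places leaves at positions $1$ and $n$, \emph{every} cut $t\in\{1,\dots,n-1\}$ has the leaf at position $1$ on its left and the leaf at position $n$ on its right; thus both sides meet $V(C)$, and $2$-edge-connectivity of the cycle forces $\theta_C(t)\ge 2$. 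Nothing about the distribution of the remaining leaves is needed.
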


\section{Halin Graphs With Polynomially Solvable LA Algorithm}
\label{sect:halin-solvable-case}
    As mentioned before, the OLA problem is polynomially solvable for trees.
The OLA of a Halin graph $H=T \uplus C$, depends both on the underlying tree and
the planar embedding of $T$.
Motivated by the work in~\cite{shiloach1979}, in this section we study some classes of Halin graphs where OLA problem
can be solved in polynomial time. More specifically we show that for these classes
of Halin graphs, the equality in corollary~\ref{corol:halin-opt-LA-bound} holds.
\begin{definition}{Recursively Balanced Trees}
\label{def:RBT}
Consider the tree $T$ and the vertex $v_r$, designated as the root of the tree,
and the set of vertices $v_{r,0}, \ldots, v_{r,k}$ connected to the $v_r$ as it's direct children.
Removing the set of edges $\Set{v_r,v_{r,0}}, \ldots, \Set{v_r,v_{r,k}}$ results in the set of subtrees $T_{r,0}, \ldots, T_{r,k}$,
respectively rooted at $v_{r,0}, \ldots, v_{r,k}$.
$T$ is recursively balanced if:
\begin{itemize}
  \item $\mathcal{T}_{r,0} = \mathcal{T}_{r,1}= \ldots = \mathcal{T}_{r,k}$
  \footnote{$\mathcal{T}_i = |V(T_i)|$. See notation~\ref{not:subtee-size} in Appendix~\ref{apndx:Proof-lemmas}.} and
  \item $T_{r,i}$, rooted at $v_{r,i}$, is recursively balanced for $i = 0,1, \ldots,k$
\end{itemize}

\end{definition}
\noindent
The root vertex $v_r$ of a Recursively Balanced Tree (RBT), is the only vertex satisfying the properties of the \emph{central vertex} in the following theorem.
\begin{theorem}
\label{thr:centeral-vertex}
Given a tree $T = (V,E)$, there exists a vertex $v_r$  where the set of subtrees $T_0,\ldots, T_k$ resulted by removing $v_r$ from $T$,
satisfies:
\begin{align*}
    \mathcal{T}_i \le \lfloor \dfrac{n}{2} \rfloor & \quad \text{for} \quad i = 0, \ldots, k
\end{align*}
\end{theorem}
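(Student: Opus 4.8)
The statement is the classical \emph{centroid of a tree} theorem (Jordan), so the plan is to exhibit a vertex that is extremal with respect to the sizes of the branches it produces. I would argue by an orientation/counting trick. For each edge $e=\{x,y\}\in E$, deleting $e$ splits $T$ into exactly two subtrees; write $T^e_x$ for the one containing $x$ and $T^e_y$ for the one containing $y$. Orient $e$ from $x$ toward $y$ whenever $|V(T^e_y)|\ge|V(T^e_x)|$, i.e.\ toward the weakly larger side, breaking a tie by choosing either direction (so that every edge is oriented exactly once). This produces an orientation of all $n-1$ edges of $T$.

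Next, since each edge contributes exactly $1$ to the total out-degree, the sum of the out-degrees over all vertices equals $|E|=n-1<n=|V|$; hence by pigeonhole some vertex $v_r$ has out-degree $0$, meaning every edge incident to $v_r$ points toward $v_r$. I claim this $v_r$ is the desired vertex. Let $v_{r,0},\dots,v_{r,k}$ be its neighbours and $T_0,\dots,T_k$ the corresponding subtrees of $T-v_r$ (so $\mathcal{T}_0+\cdots+\mathcal{T}_k=n-1$). Fix $i$ and consider the edge $e_i=\{v_r,v_{r,i}\}$: it separates $T$ into $T_i$ (the part containing $v_{r,i}$, of size $\mathcal{T}_i$) and its complement (the part containing $v_r$, of size $n-\mathcal{T}_i$). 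Because $e_i$ is oriented toward $v_r$, the $v_r$-side is the weakly larger one, so $\mathcal{T}_i\le n-\mathcal{T}_i$, i.e.\ $\mathcal{T}_i\le n/2$; as $\mathcal{T}_i$ is an integer this yields $\mathcal{T}_i\le\lfloor n/2\rfloor$. Since $i$ was arbitrary, the theorem follows.

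The only place needing care is the set-up of the orientation and the tie-breaking: one must orient toward the \emph{weakly} larger side (using $\ge$), so that the out-degree sum is exactly $n-1$ and so that ``out-degree $0$'' genuinely certifies $\mathcal{T}_i\le n-\mathcal{T}_i$ for every branch; once that bookkeeping is pinned down there is no real obstacle. An alternative, should one prefer to avoid the counting step, is to pick $v_r$ minimizing $\max_i\mathcal{T}_i$ directly: if some branch $T_j$ had $\mathcal{T}_j>n/2$ (note at most one branch can exceed $n/2$, since the $\mathcal{T}_i$ sum to $n-1$), then relocating to the neighbour $v_{r,j}$ inside $T_j$ strictly decreases that maximum — the new ``back'' component has size $n-\mathcal{T}_j<n/2<\mathcal{T}_j$, and every other new component is a proper subtree of $T_j$ hence has fewer than $\mathcal{T}_j$ vertices — contradicting minimality. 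Either route proves the claim; I would present the orientation argument as the main line because it is shortest and makes the role of $n-1<n$ transparent.
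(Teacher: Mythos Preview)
Your proof is correct. The paper does not actually prove this theorem: immediately after the statement it writes ``For proof see~[Shiloach 1979]'' and moves on, so there is no in-paper argument to compare against --- you are supplying what the authors chose to omit. Both routes you outline, the edge-orientation/pigeonhole argument and the direct minimization of $\max_i \mathcal{T}_i$, are standard self-contained proofs of Jordan's centroid theorem. The orientation argument is the shorter of the two and makes the existence step transparent via $|E|=n-1<n=|V|$; the minimization argument has the minor side benefit of revealing that at most two (adjacent) vertices can be centroids, though that is not needed here. Either presentation would be a fine replacement for the bare citation.
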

\noindent
For proof see~\cite{shiloach1979}.

Considered a tree $T$ rooted at $v_r$ and the corresponding subtrees $T_0,\ldots, T_k$ after removing $v_r$,
where $\mathcal{T}_0\ge \mathcal{T}_2 \ge \ldots \ge \mathcal{T}_k$.
Assume that an OLA $\vfi^{*}$ for $T$ is of type $(\ldots,T_i,\ldots, v_r, \ldots)$ or $(\ldots,v_r,\ldots, T_i, \ldots)$ for some subtree $T_i$.
$T_i$ is called (respectively right or left) \emph{anchored} subtree, rooted at $v_i$ connecting $T_i$ to $v_r$.
A tree $T$ which is not anchored is called a \emph{free} tree.
In theorem~\ref{thr:motiv-shiloach-thoerm}, which is the motivating theorem and the heart of the OLA algorithm for trees in~\cite{shiloach1979}, we show the root of a tree by $v_r$. Vertex $v_r$ is the central vertex in the case of free trees, or the anchor vertex if the tree is an anchored subtree. Also the parameter $\alpha$ is $0$ for free trees and $1$ otherwise.
\begin{theorem}
\label{thr:motiv-shiloach-thoerm}
Given a free or (right) anchored tree $T = (V,E)$
\footnote{The theorem symmetrically holds in case of left anchored subtrees.},
let $\rho$ be the largest integer that satisfies the following:
\[
\mathcal{T}_i > \lfloor \dfrac{\mathcal{T}_1+2}{2} \rfloor + \lfloor \dfrac{\mathcal{T}_*+2}{2} \rfloor
\quad  \text{for} \quad  i=1, \ldots, 2\rho-\alpha
\]
 where:
\[
    \mathcal{T}_* = n - \sum\limits_{0}^{2\rho-\alpha} \mathcal{T}_i
\]
\begin{itemize}
  \item If $\rho = 0$, the OLA of $T$ is of type $(T_0,v_r,\ldots)$
  \item If $\rho > 0$, then $T$ has an OLA of type either $(T_0,v_r,\ldots)$ or $(T_1,T_3,\ldots, T_{2\rho-1},\ldots, v_r,\ldots, T_{2\rho-2\alpha},\ldots,T_4,T_2)$
\end{itemize}
\end{theorem}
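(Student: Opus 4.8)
\medskip\noindent\textbf{Proof proposal.}
The plan is to take an arbitrary OLA $\vfi^{*}$ of $T$ and apply cost--non-increasing block moves until $\vfi^{*}$ has one of the two advertised shapes; throughout, assume the subtrees are indexed so that $\mathcal{T}_0 \ge \mathcal{T}_1 \ge \cdots \ge \mathcal{T}_k$. \emph{Step 1 (normal form around $v_r$).} First I would argue that some OLA places $v_r$ on the spine $P=(w_1,\dots,w_l)$, so that the subtrees $T_0,\dots,T_k$ obtained by deleting $v_r$ each occupy a contiguous block of labels lying entirely to one side of $\vfi^{*}(v_r)$. For an anchored $T$ this is by definition ($v_r$ is the anchor, placed outside $T$'s block on one side); for a free $T$ it follows from an exchange argument using Theorem~\ref{thr:centeral-vertex}: since no $\mathcal{T}_i$ exceeds $\lfloor n/2\rfloor$, relocating the block that currently contains $v_r$ so as to move $v_r$ onto the spine cannot lengthen the spine edge through that block by more than it shortens the other spine edges. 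Once $v_r$ is on the spine, monotonicity of $\vfi^{*}$ along $P$ (Property~\ref{prop:tree-monotone-path}) and the contiguity statement of Property~\ref{prop:tree-subtrees-labeled-separately} (applied at $v_r$ and recursively) give that each $T_i$ is a contiguous block, that same-side blocks do not interleave, and that $\vfi^{*}$ restricted to each $T_i$ is optimal for $T_i$. So a candidate layout is fully described by (i) a split of $\{T_0,\dots,T_k\}$ into a left family and a right family, (ii) a left-to-right order within each family, and (iii) an optimal anchored layout of each $T_i$ with its anchor $v_i$ at some offset from the $v_r$-side end of its block.

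\emph{Step 2 (cost of a configuration; the exchange inequality).} Since $v_r$ is a cut vertex, the only edges joining distinct blocks are the edges $\{v_i,v_r\}$, and every edge internal to a block is unaffected by sliding whole blocks; hence
\[
LA(\vfi^{*},T)=\sum_{i} LA\bigl(\vfi^{*}|_{T_i},T_i\bigr)+\sum_{i}\lambda\bigl(\{v_i,v_r\},\vfi^{*}\bigr),
\]
and only the second sum depends on (i)--(iii). If $T_i$ has $d$ other blocks between it and $v_r$, then $\lambda(\{v_i,v_r\},\vfi^{*})$ is $1$ plus the sizes of those $d$ blocks plus the offset of $v_i$ inside $T_i$. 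I would next show, again from Theorem~\ref{thr:centeral-vertex} applied inside $T_i$ together with the recursive structure, that this offset can be bounded by a $\lfloor(\mathcal{T}_i+2)/2\rfloor$-type quantity, so a block contributes at most roughly half its size to the edges that hop over it. With that, moving a block $T_i$ from an interior slot to the extreme of its side changes the cost by (the increase of $\lambda(\{v_i,v_r\})$) minus (the sum of the decreases of the connecting edges of the blocks it hops over); the half-size bounds make this negative essentially when $\mathcal{T}_i$ exceeds $\lfloor(\mathcal{T}_1+2)/2\rfloor+\lfloor(\mathcal{T}_*+2)/2\rfloor$, with $\mathcal{T}_1$ bounding the largest block it still hops over and $\mathcal{T}_*$ bundling the subtrees that remain around $v_r$. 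This is exactly the defining inequality for $\rho$, so $2\rho-\alpha$ is precisely the number of subtrees that should be pushed to the two extremes (the $-\alpha$ because on an anchored $T$ one extreme direction is already used by the parent).

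\emph{Step 3 (alternate, then recurse).} For a fixed choice of which subtrees go to the extremes, $\sum_i\lambda(\{v_i,v_r\})$ is a sum in which a block placed $j$ positions out from $v_r$ pays for the sizes of the $j-1$ blocks inside it; a rearrangement-inequality argument then shows it is minimized by sending one subtree of each successive size-pair to each side and placing the larger one outermost: $T_1$ and $T_2$ outermost on the two sides, $T_3$ and $T_4$ just inside them, and so on. Reading the result left to right gives exactly the stated pattern $(T_1,T_3,\dots,T_{2\rho-1},\dots,v_r,\dots,T_{2\rho-2\alpha},\dots,T_4,T_2)$, and what is left --- $v_r$, $T_0$, and the small subtrees --- forms the middle segment, which is an anchored tree; this is why the recursion continues with $\alpha=1$ and why the right-hand indices stop at $2\rho-2\alpha$. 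When $\rho=0$ no subtree clears the threshold, and the cheapest move is simply to put the largest subtree $T_0$ at one end with $v_r$ adjacent to the rest, i.e. type $(T_0,v_r,\dots)$.

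\emph{Main obstacle.} The sensitive step is Step 2: turning the ``essentially'' into ``exactly'', i.e. producing the floor terms $\lfloor(\mathcal{T}_1+2)/2\rfloor$ and $\lfloor(\mathcal{T}_*+2)/2\rfloor$ precisely. This requires a parity-sensitive upper bound on how far the anchor of an optimally arranged anchored subtree must sit from the end of its block, together with a matching lower bound showing the threshold is necessary as well as sufficient for a subtree to belong at an extreme --- in particular one must check that $\mathcal{T}_*$ really does absorb all the slack contributed by the small subtrees kept in the middle, and that replacing each individual anchor offset by its worst case does not change which configuration wins. Verifying in Step 1 that $v_r$ can be assumed on the spine of a free tree is a smaller, essentially routine, obstacle.
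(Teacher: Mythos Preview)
The paper does not prove this theorem. It is stated immediately after the sentence ``In theorem~\ref{thr:motiv-shiloach-thoerm}, which is the motivating theorem and the heart of the OLA algorithm for trees in~\cite{shiloach1979}\ldots'' and is then used as a black box; no argument for it appears anywhere in the body or the appendix. So there is nothing in the paper to compare your proposal against.

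For what it is worth, your outline is broadly in the spirit of Shiloach's original argument: reduce to a normal form in which the $T_i$ are contiguous blocks on either side of $v_r$, express the cost as the internal optima plus the anchor-edge lengths $\lambda(\{v_i,v_r\})$, and then run exchange arguments on the blocks. Two points to watch if you pursue it. First, in Step~1 you invoke Property~\ref{prop:tree-subtrees-labeled-separately} ``applied at $v_r$ and recursively'' to get contiguity of each $T_i$, but that property concerns the subtrees hanging off the spinal path of an OLA of $T$, not the subtrees obtained by deleting a designated vertex; you still owe a direct argument (or a reduction to that property) that in some OLA every $T_i$ is a block on one side of $v_r$, particularly in the anchored case where $v_r$ is forced to an end. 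Second, your own ``main obstacle'' is the real one: in Step~2 you only argue that pushing a large $T_i$ outwards is profitable when the threshold inequality holds, but the theorem also needs the converse direction to pin down $\rho$ as the \emph{largest} such integer and to justify that when $\rho=0$ the OLA must be of type $(T_0,v_r,\ldots)$ rather than merely that this is one good option. Getting the exact floors $\lfloor(\mathcal{T}_1+2)/2\rfloor$ and $\lfloor(\mathcal{T}_*+2)/2\rfloor$ requires a sharp bound on the anchor offset inside an optimally laid out anchored subtree, which in Shiloach's development comes from a separate lemma; your sketch correctly flags this but does not supply it.
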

\begin{notxxx}
Consider the layout $\vfi$ of type $(T_1, \ldots, T_i,\ldots,T_j, \ldots, T_k)$ for the tree $T$ rooted at the central vertex $v_r$. Swapping the arrangement of vertices of two subtrees $T_i$ and $T_j$, while keeping the relative order of the vertices of each subtree unchanged (or reversed), is presented using
operator $\sigma(\vfi, T_i,T_j)$ which is of type $(T_1, \ldots, T_j,\ldots,T_i, \ldots, T_k)$.
\end{notxxx}
\begin{lemma}
\label{lemma:LA-of-rec-balanced-trees}
Given a recursively balanced tree $T$ rooted at the central vertex $v_r$, and the corresponding subtrees $T_{r,1}, \ldots, T_{r,k}$, there exists
an OLA $\vfi^{*}$ of type $(T_{r,1}, \ldots, T_{r,\overline{k}},v_r,T_{r,\overline{k}+1}, \ldots, T_k)$,
where $\overline{k} = \lceil \dfrac{k+1}{2} \rceil$.
\end{lemma}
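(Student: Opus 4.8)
The plan is to obtain the claimed layout from Theorem~\ref{thr:motiv-shiloach-thoerm} applied to $T$ (as a free tree), together with an induction on $n=|V(T)|$, and then to massage the resulting optimal layout into the exact form required using two cost-preserving operations: left--right reflection of a layout, and the block-swap operator $\sigma$.

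First I would record what makes the recursively balanced case tractable. By Definition~\ref{def:RBT} the subtrees $T_{r,0},\dots,T_{r,k}$ at $v_r$ all have a common size $s$, so $n=(k{+}1)s+1$, and $v_r$ is a central vertex of $T$ in the sense of Theorem~\ref{thr:centeral-vertex}, so each of these subtrees has size $s\le\lfloor n/2\rfloor$. A short induction on size, using Definition~\ref{def:RBT}, also shows that the $T_{r,i}$ are pairwise isomorphic. Hence: (i) in any layout in which each $T_{r,i}$ occupies a contiguous block, swapping two such blocks (the operator $\sigma$) moves no vertex outside those two blocks and, as the blocks are isomorphic, only interchanges two equal terms $|\vfi(v_r)-\vfi(\mathrm{root}\,T_{r,i})|$ of $LA$, so $\sigma$ leaves $LA$ unchanged; and (ii) reflecting a layout left-to-right never changes $LA$. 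These moves let me reorder the subtree blocks on each side of $v_r$ and interchange the two sides freely.

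For the induction itself, the base case ($v_r$ with at most two subtrees, or $n$ bounded) is immediate. In the inductive step, apply Theorem~\ref{thr:motiv-shiloach-thoerm} with $\alpha=0$. If $\rho>0$, it gives an optimal layout placing $\rho$ whole subtrees entirely on each side of $v_r$ and arranging between them, around $v_r$, the core $T_*$ consisting of $v_r$ together with the remaining $k{+}1{-}2\rho$ subtrees; since all subtrees are equal, $T_*$ is again a recursively balanced tree whose central vertex is still $v_r$ (one checks $s\le\lfloor|V(T_*)|/2\rfloor$ whenever at least two subtrees remain), so the inductive hypothesis splits the $k{+}1{-}2\rho$ core subtrees as evenly as possible around $v_r$. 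If $\rho=0$, the theorem gives an optimal layout of type $(T_{r,j_0},v_r,\dots)$ whose tail is an optimal layout of $T$ with the block $T_{r,j_0}$ removed; that residual tree has $k$ equal recursively balanced subtrees at $v_r$, so the same recursion splits its subtrees as evenly as possible and re-inserting the removed block adds one block to one side. In every case one ends up with an optimal layout in which $t$ subtrees of $v_r$ lie entirely to its left and $k{+}1{-}t$ entirely to its right, with $|2t-(k{+}1)|\le 1$. Using a left--right reflection if necessary, assume $t=\lceil\frac{k+1}{2}\rceil=\overline k$; using $\sigma$ repeatedly, reorder the left blocks into $T_{r,1},\dots,T_{r,\overline k}$ and the right blocks into $T_{r,\overline k+1},\dots,T_{r,k}$; and inside each block $T_{r,i}$ (itself a recursively balanced tree rooted at its central vertex) replace the current arrangement by the balanced OLA given by the inductive hypothesis, which cannot increase $LA$ and does not disturb the block structure. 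The result is an OLA of $T$ of the stated type.

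I expect the main obstacle to be the bookkeeping in the inductive step: checking that peeling a subtree, or forming the core $T_*$, always keeps us inside the class of recursively balanced trees rooted at their central vertex, so that the inductive hypothesis applies, and that the invariant ``the two sides of $v_r$ differ by at most one subtree'' is preserved by both the $\rho>0$ split-then-recurse and the $\rho=0$ peel-then-recurse. A smaller and more routine point is to confirm that $\sigma$ and reflection preserve $LA$ here; this is precisely where ``recursively balanced'' is used beyond ``rooted at the central vertex'', since it is the equal size and pairwise isomorphism of the $T_{r,i}$ that make block swaps free of charge.
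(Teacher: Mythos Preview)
Your proposal is correct and follows the same line as the paper's proof: invoke Theorem~\ref{thr:motiv-shiloach-thoerm} to obtain an optimal layout in which the equal-size subtrees at $v_r$ are split as evenly as possible around $v_r$ with no interleaving, and then use the cost-preserving block swaps $\sigma$ (and reflection) to put the blocks into the specific order $(T_{r,1},\ldots,T_{r,\overline{k}},v_r,T_{r,\overline{k}+1},\ldots,T_{r,k})$. The paper's proof is terser---it simply asserts that the half/half split is ``easy to see'' from Theorem~\ref{thr:motiv-shiloach-thoerm} and that $\sigma$ preserves $LA$ ``since all the subtrees have the same size''---whereas you supply an explicit induction for the split and correctly point out that it is the pairwise isomorphism of the $T_{r,i}$ (not merely equal size) that makes $\sigma$ cost-free; but the route is the same.
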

\begin{proof}
We know that the subtrees $T_{r,1}, \ldots, T_{r,k}$ have the same size and $v_r$ satisfy the central vertex theorem~\ref{thr:centeral-vertex},
Accordingly, considering the theorem~\ref{thr:motiv-shiloach-thoerm},
it's easy to see that there exists an OLA $\vfi_0$ where half of subtrees are labeled before $v_r$ and the other half are
labeled after $v_r$ while the vertices of no two subtrees overlap.

\noindent
Based on the structure of $\vfi_0$,
there exists a sequence of layouts $(\vfi_0,\vfi_1,\ldots,\vfi_l=\vfi^{*})$ where for $k = 1, \ldots, l$,  $\vfi_k = \sigma(\vfi_{k-1}, T_{r,i},T_{r,j})$
for some subtrees $T_{r,i}$, $T_{r,j}$. Since all the subtrees have the same size, then $LA(\vfi_0,T) = LA(\vfi_1,T)= \ldots = LA(\vfi^{*},T)$.

\noindent
Generally, given an OLA $\vfi^{*}$ of type $(T_{r,1}, \ldots, T_{r,\overline{k}},v_r,T_{r,\overline{k}+1}, \ldots, T_k)$ for the RBT $T$ and two subtrees $T_{r,i}$ and $T_{r,j}$, $\sigma(\vfi^{*}, T_{r,i},T_{r,j})$ is also an OLA for $T$.
\end{proof}

\begin{lemma}
\label{lemma:spinal-path-of-rec-balanced-trees}
Let $T_{r,1}, \ldots, T_{r,k}$ be the set of subtrees of the RBT $T$ resulted by removing the root vertex $v_r$.
Given two leaf vertices $v \in V(T_{r,i}), u \in V(T_{r,j})$ for $i \neq j$, the simple path $P=(v,\ldots,v_r,\ldots,u)$
connecting $v$ and $u$ (via $v_r$) is the spinal path for some OLA $\vfi^{*}$. In other word there is an OLA $\vfi^{*}$, where $\vfi^{*}(v) = 1$ and $\vfi^{*}(u) = n$.
\end{lemma}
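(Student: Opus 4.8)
The plan is to start from the optimal arrangement supplied by Lemma~\ref{lemma:LA-of-rec-balanced-trees} and then slide the prescribed leaves $v$ and $u$ to the two ends, using the swap operator $\sigma$ together with an induction that also treats \emph{anchored} subtrees.

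First I would invoke Lemma~\ref{lemma:LA-of-rec-balanced-trees} to obtain an OLA of type $(T_{r,1},\ldots,T_{r,\overline{k}},v_r,T_{r,\overline{k}+1},\ldots,T_{r,k})$, so that every child-subtree $T_{r,t}$ occupies a contiguous block of labels. Since all child-subtrees of an RBT have equal size, the argument in the proof of Lemma~\ref{lemma:LA-of-rec-balanced-trees} shows that $\sigma$ applied to any two such blocks again yields an OLA; hence I can bring $T_{r,i}$ into the first block and $T_{r,j}$ into the last block, obtaining an OLA $\vfi_1$ of type $(T_{r,i},\ldots,v_r,\ldots,T_{r,j})$. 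By Property~\ref{prop:tree-end-nodes} the vertices of $\vfi_1$ at labels $1$ and $n$ are leaves of $T$, and by the type they belong to $V(T_{r,i})$ and $V(T_{r,j})$ respectively; it remains to force them to be exactly $v$ and $u$.

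The heart of the proof is an auxiliary claim proved by induction on $|V(S)|$: for every recursively balanced tree $S$ with root $v_S$, viewed as a subtree anchored at $v_S$, and every leaf $z$ of $S$, there is an optimal anchored arrangement of $S$ in which $z$ is the extreme vertex on the side away from the anchor. In the inductive step I would apply Theorem~\ref{thr:motiv-shiloach-thoerm} with $\alpha=1$ to get an optimal anchored arrangement of $S$ in block form over the child-subtrees $S_1,\ldots,S_m$ of $v_S$; since these are identical RBTs, the $\sigma$-invariance used in Lemma~\ref{lemma:LA-of-rec-balanced-trees} carries over, so I may assume the block $S_p$ containing $z$ is the one furthest from the anchor. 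That block is a smaller RBT, anchored at $v_{S_p}$ on its side towards $v_S$, so the induction hypothesis supplies an optimal anchored arrangement of $S_p$ with $z$ at its far end; substituting it into the arrangement of $S$ leaves the total cost unchanged, because the only edge leaving the block of $S_p$ is the anchor edge $\{v_{S_p},v_S\}$, whose length depends on the internal arrangement of $S_p$ only through the quantity that an optimal anchored arrangement already minimizes. The base case $|V(S)|=1$ is immediate, since then $z=v_S$.

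Finally I would apply this claim to $\vfi_1$. Its restriction to $V(T_{r,i})$ must be an optimal arrangement of $T_{r,i}$ anchored at $v_{r,i}$ — otherwise replacing that restriction by a better anchored arrangement would strictly decrease $LA(\vfi_1,T)$, contradicting optimality of $\vfi_1$ — so by the claim there is such an arrangement placing $v$ at the end away from $v_r$, i.e.\ at label $1$, and substituting it preserves $LA$-optimality. Performing the symmetric substitution with the last block, $v_{r,j}$, and $u$ produces an OLA $\vfi^{*}$ with $\vfi^{*}(v)=1$ and $\vfi^{*}(u)=n$; since $v$ and $u$ lie in distinct child-subtrees of $v_r$, the spinal path of $\vfi^{*}$ is then exactly the path $(v,\ldots,v_r,\ldots,u)$. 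The main obstacle I expect is the anchored-subtree bookkeeping: making precise that the restriction of an OLA to a child-block is an optimal \emph{anchored} arrangement, and that $\sigma$-swaps (with reversal when the two blocks lie on opposite sides of $v_r$) preserve cost in the anchored setting, including the degenerate configurations of Theorem~\ref{thr:motiv-shiloach-thoerm} where $v_r$ is adjacent to a block. These facts are implicit in \cite{shiloach1979} but deserve an explicit treatment here.
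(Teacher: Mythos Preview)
Your proposal is correct and follows essentially the same recursive strategy as the paper: use Lemma~\ref{lemma:LA-of-rec-balanced-trees} and $\sigma$-swaps to place $T_{r,i}$ and $T_{r,j}$ at the two ends, then recurse into those subtrees to push $v$ and $u$ to labels $1$ and $n$. The paper's proof is a two-line sketch (``applying the same approach recursively and excluding all the details''), whereas you spell out the recursion via an explicit auxiliary claim for anchored RBTs and flag the bookkeeping that the paper glosses over; this added care is welcome but does not constitute a different route.
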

\begin{proof}
An immediate result of lemma~\ref{lemma:LA-of-rec-balanced-trees} is that there exists an OLA of type $(T_{r,i}, \ldots,v_r, \ldots, T_j)$ for tree $T$.
Also note that the two subtrees $T_{r,i}$ and $T_{r,j}$ are recursively balanced trees. Applying the same approach recursively and excluding all the the details, one can deduce that there exists an OLA for $T$ which is of type $(v, \ldots,v_r, \ldots, u)$.
\end{proof}
\begin{exaxxx}
Figure~\ref{fig:RBT-exp-and-OLA} depicts an example of a recursively balance tree (in~\ref{fig:RBT-exp}) and it's corresponding OLA $\vfi^{*}$(in~\ref{fig:RBT-exp-OLA}).
As you see the operation $\sigma(\vfi^{*}, T_1,T_2)$ will result in another layout with the same value.
Generally, given an OLA $\vfi^{*}$ for a recursively balanced tree $T = (V,E)$
and $v \in V$ and any two rooted subtrees $T_{i,j}$ and $T_{i,j'}$ connected to $v$, it is the case that $\sigma(\vfi^{*}, T_{i,j},T_{i,j'})$ is also an OLA.
\begin{figure}
\centering
        \begin{subfigure}[b]{0.4\textwidth}
                \includegraphics[scale=.5]
                {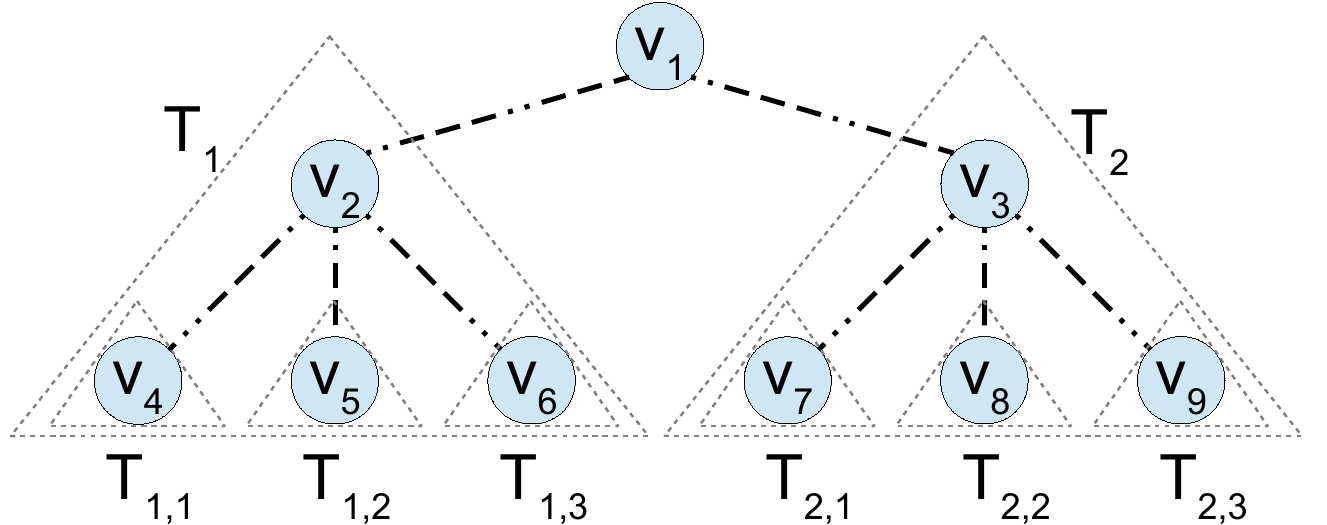}
                \caption{An example of a recursively balance tree $T$ rooted at $v_1$. The two subtrees $T_1$ and $T_2$ of $v_1$ are highlighted by larger enclosing triangles.}
                \label{fig:RBT-exp}
        \end{subfigure}
        \qquad\qquad\qquad
        \begin{subfigure}[b]{0.4\textwidth}
                \includegraphics[scale=.5]
                {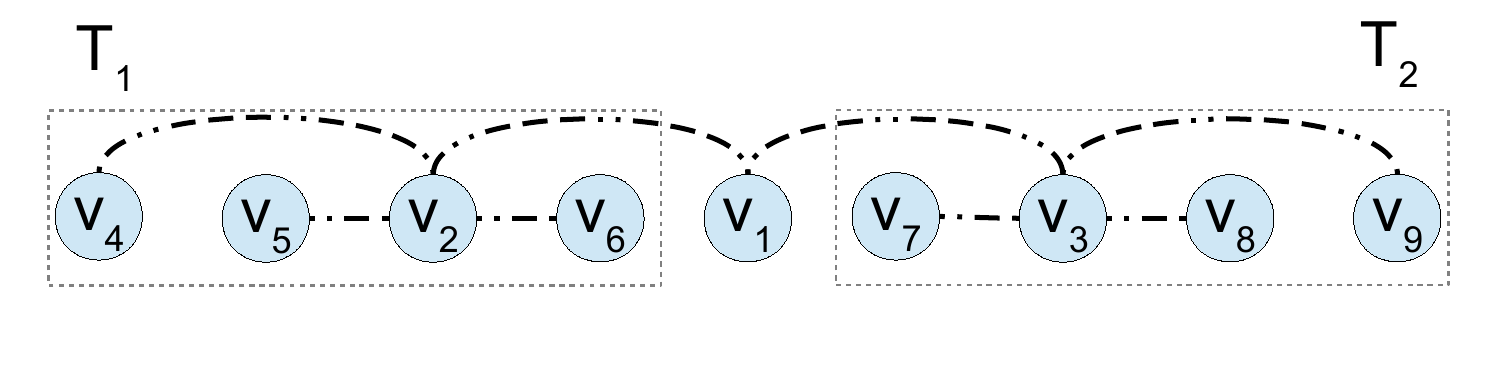}
                \caption{An OLA $\vfi^{*}$ for tree $T$.}
                \label{fig:RBT-exp-OLA}
        \end{subfigure}
        \caption{An example of a recursively balance tree and it's corresponding OLA.}
        \label{fig:RBT-exp-and-OLA}
\end{figure}
\end{exaxxx}
Following the results of lemmas~\ref{lemma:LA-of-rec-balanced-trees} and~\ref{lemma:spinal-path-of-rec-balanced-trees},
In following we present an approach to find an OLA for recursively balanced tree in linear time.
\begin{theorem}
Having a recursively balanced tree $T=(V,E)$ rooted at $v_r$, an OLA $\vfi^{*}$ for $T$ can be found in linear time.
\end{theorem}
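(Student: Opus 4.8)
The plan is to read off an explicit recursive construction from Lemmas~\ref{lemma:LA-of-rec-balanced-trees} and~\ref{lemma:spinal-path-of-rec-balanced-trees} and then argue that, implemented with a lazy representation of layouts, it runs in linear time. First I would do one $O(n)$ depth‑first pass that computes $|V(T')|$ for every rooted subtree $T'$; using Theorem~\ref{thr:centeral-vertex} together with the stated fact that for an RBT the designated root is the unique central vertex, this pass also confirms $v_r$. After it, every subtree size needed later is available in $O(1)$.

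The recursive routine $\mathrm{LinArr}(T,v_r)$ works as follows. If $T$ is a single vertex, return the trivial layout. Otherwise let $T_{r,0},\dots,T_{r,k}$ be the subtrees hanging at $v_r$ (rooted at $v_{r,0},\dots,v_{r,k}$); by the RBT property they all have the same size $s=(n-1)/(k+1)$, and each is again an RBT rooted at its own central vertex. Recursively compute $\mathrm{LinArr}(T_{r,i},v_{r,i})$, designating in each a leftmost and a rightmost leaf, and assemble the layout of type $(T_{r,1},\dots,T_{r,\overline k},v_r,T_{r,\overline k+1},\dots,T_{r,k})$ with $\overline k=\lceil (k+1)/2\rceil$. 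The essential implementation point is that the sub‑layouts are stored by \emph{reference}, not copied: $\mathrm{LinArr}$ returns a small tree whose leaves point to the recursively built blocks, so assembling a node costs $O(k)$ rather than $O(|V(T)|)$. Carefully choosing which $T_{r,i}$ becomes the leftmost block (the one containing the global leftmost leaf) and which becomes the rightmost, as in the proof of Lemma~\ref{lemma:spinal-path-of-rec-balanced-trees}, lets one avoid reversals entirely; if a $\sigma$‑style reversal of some block were ever wanted it would be a single bit on that reference, honored at output time.

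For correctness, Lemma~\ref{lemma:LA-of-rec-balanced-trees} says a layout of exactly this type is optimal for $T$ provided each block $T_{r,i}$ is itself laid out optimally in the appropriate free or anchored variant; I would then induct on subtree size, the base case trivial and the inductive step reusing Lemma~\ref{lemma:LA-of-rec-balanced-trees} together with its left/right‑anchored analogue, which is obtained from Theorem~\ref{thr:motiv-shiloach-thoerm} with $\alpha=1$ just as in the argument for Lemma~\ref{lemma:spinal-path-of-rec-balanced-trees}. That the spine induced by the construction is realizable — and hence consistent with the monotonicity along spines used throughout — is precisely Lemma~\ref{lemma:spinal-path-of-rec-balanced-trees}. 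Finally a single DFS over the assembled reference tree, honoring any reversal bits, writes out $\vfi^{*}$ in $O(n)$.

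For the time bound, the preprocessing DFS is $O(n)$, the output pass is $O(n)$, and the work done while building the reference tree at a vertex $v$ — reading the common subtree size, computing $\overline k$, and linking $k+1$ references — is $O(d_T(v))$, so in total $\sum_v O(d_T(v))=O(n)$. The step I expect to be the main obstacle is making this $O(\mathrm{deg})$‑per‑vertex accounting honest: a naive variant that materializes each intermediate block as an array spends $\Theta(|V(T)|)$ at the root and $\Theta(n\log n)$ overall, so the linear bound genuinely relies on the lazy reference representation, and on the observation that for RBTs the split parameter $\overline k$ (equivalently the $\rho$ of Theorem~\ref{thr:motiv-shiloach-thoerm}) is forced by the equal subtree sizes and therefore costs $O(1)$ rather than a sort. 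A secondary point requiring care is to state and verify the anchored‑subtree version of Lemma~\ref{lemma:LA-of-rec-balanced-trees}, so that the induction over blocks that are not free trees is fully justified.
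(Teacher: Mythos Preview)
Your approach is essentially the same as the paper's: apply Lemma~\ref{lemma:LA-of-rec-balanced-trees} at the root, observe that each subtree is again an RBT rooted at its own central vertex, and recurse. The paper's proof is considerably terser and uses a slightly different implementation idea that sidesteps your main worry. Rather than assembling sub-layouts by reference and flattening at the end, the paper computes $\vfi^{*}(v)$ \emph{directly} for each vertex in a single top-down pass: since all $k$ subtrees at $v_r$ have size $(|V|-1)/k$, one immediately gets $\vfi^{*}(v_r)=\overline{k}\cdot\frac{|V|-1}{k}+1$, and the same formula (with the appropriate offset) gives $\vfi^{*}(v_{r,i})$ inside each block, and so on. Thus each vertex receives its label in $O(1)$ work and the linear bound follows without any lazy-reference machinery or reversal bits; your concern that a naive variant costs $\Theta(n\log n)$ is real for array concatenation but simply does not arise here.

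Your proposal is more careful than the paper on two points the paper glosses over: the need for an anchored-subtree analogue of Lemma~\ref{lemma:LA-of-rec-balanced-trees} in the recursion, and the observation that for RBTs the split parameter $\overline{k}$ is forced and costs $O(1)$. Those are genuine refinements, not gaps in your argument.
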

\begin{proof}
 Assume $T_{r,1}, \ldots, T_{r,k}$ are the subtrees connected to $v_r$ respectively via $v_{r,1},v_{r,2},\ldots,v_{r,k} \in V$.
 From lemma~\ref{lemma:LA-of-rec-balanced-trees} an OLA $\vfi^{*}$ of type
 $(T_{r,1}, \ldots, T_{r,\overline{k}},v_r,T_{r,\overline{k}+1}, \ldots, T_k)$ exists for $T$.
 Each subtree has exactly $\frac{|V|-1}{k}$ vertices, hence $\vfi^{*}(v_r) = \overline{k}\times \frac{|V|-1}{k} + 1$, where $\overline{k} = \lceil \frac{k+1}{2} \rceil$.

\noindent
Also based on the definition~\ref{def:RBT} every subtree $T_{r,i}$, for $1 \le i \le k$, is recursively balance with the central vertex $v_{r,i}$.
Therefore, using the same approach one can go on with constructing OLA $\vfi^{*}$ by finding the label of $v_{r,i}$, for $1 \le i \le k$.
Applying this method recursively OLA $\vfi^*$ can be found while every vertex of the tree is visited $O(1)$ times.
\end{proof}
The following two theorems conclude this section by presenting some classes of Halin graphs which there exists a polynomial OLA
algorithm for them. More specifically, given a Halin graph $H = T \uplus C$ from these classes,
an OLA for $H$ can be derived given any optimal layout for the underlying tree $T$ \footnote{Remember that OLA problem is polynomially solvable for trees.}.
\begin{theorem}
Consider a Halin graph $H = T \uplus C$, where the underlying tree $T$ is recursively balanced, rooted at $v_r$.
Let $\vfi^{\circledast}$ be an OLA for $T$.
There exist a linear arrangement $\vfi^{*}$ s.t.
\begin{itemize}
  \item $LA(\vfi^{*}, H) = LA(\vfi^{\circledast}, T) + 2\times (|V|-1)$. Hence, based on corollary~\ref{corol:halin-opt-LA-bound}, $\vfi^{*}$ is an OLA for $H$
  \item $\vfi^{*}$ can be constructed from $\vfi^{\circledast}$ in $O(|V| \log{|V|})$
\end{itemize}
\end{theorem}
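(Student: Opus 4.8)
The plan is to produce a single layout $\vfi^{*}$ that is simultaneously an OLA for the underlying tree $T$ and that makes the cycle $C$ contribute exactly $2(|V|-1)$; Corollary~\ref{corol:halin-opt-LA-bound} then forces $\vfi^{*}$ to be optimal for $H$. Since $E(H)=E(T)\uplus E(C)$, for every layout $\vfi$ we have $LA(\vfi,H)=LA(\vfi,T)+LA(\vfi,C)$, so it suffices to exhibit an OLA $\vfi^{*}$ of $T$ with $LA(\vfi^{*},C)=2(|V|-1)$; then $LA(\vfi^{*},H)=LA(\vfi^{\circledast},T)+2(|V|-1)$, which is exactly the bound of Corollary~\ref{corol:halin-opt-LA-bound}.

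Next I would isolate the payoff of a good leaf order. Suppose that in $\vfi^{*}$ the leaves of $T$, listed by increasing $\vfi^{*}$-value as $y_{1},\dots,y_{t}$, occur in precisely the cyclic order in which $C$ visits them. Then the edge set of $C$ is $\{y_{1},y_{2}\},\{y_{2},y_{3}\},\dots,\{y_{t-1},y_{t}\},\{y_{t},y_{1}\}$, so
\[
  LA(\vfi^{*},C)=\sum_{i=1}^{t-1}\bigl(\vfi^{*}(y_{i+1})-\vfi^{*}(y_{i})\bigr)+\bigl(\vfi^{*}(y_{t})-\vfi^{*}(y_{1})\bigr)=2\bigl(\vfi^{*}(y_{t})-\vfi^{*}(y_{1})\bigr).
\]
If moreover $\vfi^{*}$ is an OLA for $T$, then by Property~\ref{prop:tree-end-nodes} the vertices at positions $1$ and $|V|$ are leaves, so $\vfi^{*}(y_{1})=1$, $\vfi^{*}(y_{t})=|V|$, and $LA(\vfi^{*},C)=2(|V|-1)$ exactly (alternatively one may cite Theorem~\ref{thrm:halin-OLA-extreme-vetrices}).

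It remains to build an OLA of $T$ whose leaf order matches $C$, and this is where the recursively balanced hypothesis enters. I would use two facts. (i) \emph{Embedding structure}: in the planar graph $H$, the leaves of any rooted subtree form a contiguous arc of $C$, and for a vertex $v$ the arcs of $v$'s children-subtrees are concatenated along $C$ in the rotational order of the edges at $v$. (ii) \emph{RBT flexibility}: by Lemma~\ref{lemma:LA-of-rec-balanced-trees} and the recursive linear-time OLA construction for recursively balanced trees, an OLA of $T$ is obtained by recursively placing, at each vertex $v$, a prescribed number of $v$'s (equal-sized) children-subtrees to the left of $v$ and the rest to the right; and since these subtrees all have the same size, the operator $\sigma$ permits permuting them arbitrarily among themselves without changing the cost. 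So I would run that recursive construction but, at every vertex $v$, order the children-subtrees according to the rotational order of the embedding around $v$ (choosing at the root an arbitrary first child, which just fixes where the wrap-around edge of $C$ lands; since the root of a Halin graph's tree has degree $\ge 3$, $y_{1}$ and $y_{t}$ land in distinct top-level subtrees, consistent with Lemma~\ref{lemma:spinal-path-of-rec-balanced-trees}). An induction on the height of $T$, using (i) at the inductive step, shows the left-to-right leaf order of $\vfi^{*}$ is exactly the cyclic $C$-order, while (ii) keeps $\vfi^{*}$ an OLA for $T$, so $LA(\vfi^{*},T)=LA(\vfi^{\circledast},T)$. With the previous paragraph this yields $LA(\vfi^{*},H)=LA(\vfi^{\circledast},T)+2(|V|-1)$, making $\vfi^{*}$ an OLA for $H$ by Corollary~\ref{corol:halin-opt-LA-bound}.

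For the time bound, I would carry this out as one DFS: the subtree sizes (hence the left/right split at each vertex) are read off from $\vfi^{\circledast}$ or recomputed in $O(|V|)$; at a vertex of degree $d$, sorting its children-subtrees into embedding order to set up the rearrangement costs $O(d\log d)$; a final linear pass turns the nested block structure into integer labels. Total: $\sum_{v}O(d_{v}\log d_{v})+O(|V|)=O(|V|\log|V|)$. The main obstacle I anticipate is the correctness of step (ii) above — that forcing the children-subtrees into embedding order at every vertex is genuinely cost-neutral for $T$. This is false for general trees, so the argument must rely essentially on the RBT hypothesis (all siblings equal-sized, so a reordering only permutes equal-length blocks — exactly the regime where $\sigma$ preserves optimality by Lemma~\ref{lemma:LA-of-rec-balanced-trees}) and must verify that the reorderings leave every vertex at the position prescribed by the recursive RBT-OLA structure, together with the combinatorial claim (i) that ties "$C$-order of leaves" to "embedding order of children at each vertex."
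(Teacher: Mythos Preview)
Your proposal is correct and takes essentially the same approach as the paper: both exploit RBT sibling-permutability (Lemma~\ref{lemma:LA-of-rec-balanced-trees}) to place the children at every vertex in the planar-embedding order so that the leaves end up in the cyclic $C$-order, forcing $LA(\vfi^{*},C)=2(|V|-1)$, and then invoke Corollary~\ref{corol:halin-opt-LA-bound}. The only real difference is presentation --- the paper packages this as an explicit level-by-level swap procedure (its Algorithm~1) that transforms the given $\vfi^{\circledast}$ in place and gets $O(|V|\log|V|)$ from ``$O(\log|V|)$ heights times $O(|V|)$ per level,'' whereas you build $\vfi^{*}$ directly by a single DFS and account for the same bound via $\sum_{v}O(d_{v}\log d_{v})$.
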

\begin{proof}
We know that for every layout $\vfi$, it is the case that $LA(\vfi, C) \ge 2\times (n-1)$, where $n = |V|$. Hence, given the OLA $\vfi^{\circledast}$ for $T$, if $LA(\vfi^{\circledast}, H) = LA(\vfi^{\circledast}, T) + 2\times (n-1)$, then $\vfi^{\circledast}$ is also an
OLA for $H$ as well.
Otherwise, starting from $\vfi^{\circledast}$, we present an iterative approach where using a sequence of swapping operations, an OLA is found for $H$. In this sequence of swapping, after each swap operation the value of arrangement stays unchanged for $T$, and decreases for $H$.

\noindent
This procedure is presented in algorithm~\ref{alg:redundant-crossing-elim}.
Assume the underlying tree $T$, rooted at central vertex $v_r$, has height $\hbar$ \footnote{We consider the height of a tree with consist of only vertex is $1$.}
\begin{algorithm}
\caption{Finding OLA $\vfi^{*}$ for Halin graph $H=(T,C)$ given OLA $\vfi^{\circledast}$ for RBT $T$}
\label{alg:redundant-crossing-elim}
\begin{algorithmic}[1]
\State $\vfi^{*} \leftarrow \vfi^{\circledast}$
\State let $\Set{T_1,\ldots,T_k}$ be subtree of height $\hbar-1$ as $\vfi^{*}$ is of type $(T_1,\ldots,v_r,\ldots,T_k)$
\State let $T_L$ be $T_1$ \label{algLine:TL}
\State let $T_R$ be one of the two subtrees connected to $T_L$ via $E(C)$ \label{algLine:TR}
\State $\sigma(\vfi^{*}, T_R,T_k)$
\State \textbf{for} $i = 1$ to $k-2$ \label{algLine:For1}
\State \quad Let $T_{i,R} \in \Set{T_{i+1},\ldots,T_{k-1}}$ be the subtree connected to $T_i$ via and edge in $E(C)$
\State \quad $\sigma(\vfi^{*}, T_{i+1},T_{i,R})$ \label{algLine:Part1}
\item[]
\State \textbf{for} $h = \hbar$ to $2$: \label{algLine:Part2-start}
\State \quad\textbf{for} every subtree $T$ of height $h$ rooted at $v_r$:
\State \quad\quad Let $(T_{r,1},\ldots,v_r,\ldots,T_{r,k})$ be $\vfi^*$ restricted to $T$
\State \quad\quad\textbf{if} $\vfi^{*}$ is of type $(T,\ldots)$:
\State \quad\quad\quad \textbf{ReArrLeftSubTree(}$(T_{r,1},\ldots,T_{r,k})$, $\vfi^*$\textbf{)}
\State \quad\quad\textbf{else if} $\vfi^{*}$ is of type $(\ldots,T)$:
\State \quad\quad\quad \textbf{ReArrRightSubTree(}$(T_{r,1},\ldots,T_{r,k})$, $\vfi^*$\textbf{)}
\State \quad\quad\textbf{else}:
\State \quad\quad\quad \textbf{ReArrMidSubTree(}$(T_{r,1},\ldots,T_{r,k})$, $\vfi^*$\textbf{)}
\label{algLine:Part2-end}
\item[]
\State \textbf{ReArrMidSubTree(}$(T_{r,1},\ldots,T_{r,k})$, $\vfi^*$\textbf{)}:
\State \quad Let $T_{r,L}$ be the subtree connected to some vertex $v_L$, where based on $\vfi^{*}$, $v_L$ is labeled before $T_{r,1}$
\footnotemark{Hence, $v_L \notin \Set{V(T_{r,1}) \uplus \ldots,\uplus T_{r,k}}$ and $\forall v in \Set{V(T_{r,1}) \uplus \ldots,\uplus T_{r,k}}, \vfi^{*}(v_L) < \vfi^{*}(v)$}
\State \quad Let $T_{r,R}$ be the subtree connected to some vertex $v_R$, where based on $\vfi^{*}$, $v_R$ is labeled after $T_{r,k}$
\footnotemark{Hence, $v_R \notin \Set{V(T_{r,1}) \uplus \ldots,\uplus T_{r,k}}$ and $\forall v in \Set{V(T_{r,1}) \uplus \ldots,\uplus T_{r,k}}, \vfi^{*}(v) < \vfi^{*}(v_R)$}
\label{algLine:TrR}
\State \quad $\sigma(\vfi^{*}, T_{r,1},T_{r,L})$
\State \quad $\sigma(\vfi^{*}, T_{r,k},T_{r,R})$
\State \quad \textbf{if} $k > 3$:
\State \quad\quad \textbf{ReArrMidSubTree(}$(T_{r,2},\ldots,T_{r,k-1})$, $\vfi^*$\textbf{)}
\item[]
\State \textbf{ReArrLeftSubTree(}$(T_{r,1},\ldots,T_{r,k})$, $\vfi^*$\textbf{)}:
\State \quad Let $T_{r,L}$ be the subtree connected to $T_R$ via $E(C)$
\State \quad $\sigma(\vfi^{*}, T_{r,1},T_{r,L})$
\State \quad\textbf{for} $i = 1$ to $k-2$
\State \quad\quad Let $T_{i,R} \in \Set{T_{i+1},\ldots,T_k}$ be the subtree connected to $T_i$
\State \quad\quad $\sigma(\vfi^{*}, T_{i+1},T_{i,R})$
\item[]
\State \textbf{ReArrRightSubTree(}$(T_{r,1},\ldots,T_{r,k})$, $\vfi^*$\textbf{)}:
\State \quad Let $T_{r,R}$ be the subtree connected to $T_L$ via $E(C)$
\State \quad $\sigma(\vfi^{*}, T_{r,k},T_{r,R})$
\State \quad\textbf{for} $i = k$ to $2$
\State \quad\quad Let $T_{i,R} \in \Set{T_1,\ldots,T_{i-1}}$ be the subtree connected to $T_i$
\State \quad\quad $\sigma(\vfi^{*}, T_{i-1},T_{i,R})$
\end{algorithmic}
\end{algorithm}
\paragraph{Correctness of the algorithm.}
Starting with $\vfi^{*} =  \vfi^{\circledast}$, after execution of
line~\ref{algLine:Part1}, we will end up with a potentially modified layout $\vfi^{*}$ of type
$(T_1,\ldots,v_r,\ldots,T_k)$ where $T_1$ is directly connected to $T_k$ via an edge from $E(C)$
and for $1 \le i < k$, $T_i$ is directly connected to $T_{i+1}$ through $E(C)$.
So if we collapse every subtree $T_i$ for $1 \le i < k$ into one vertex, the resulted $\vfi^{*}$
is an OLA for the corresponding Halin graph.

\noindent
So far, Based on the resulted layout $\vfi^{*}$, $T_L = T_1$ and $T_R = T_k$, respectively defined in lines~\ref{algLine:TL} and~\ref{algLine:TR},
are the two left and right boundary subtrees and all other subtrees are middle subtrees.

\noindent
Lines~\ref{algLine:Part2-start} to~\ref{algLine:Part2-end} of algorithm,
guarantee that in a recursive approach, for every subtree $T$ of height $1 \le h < \hbar$, based on the final $\vfi^{*}$ :
\begin{itemize}
  \item If $T$ is a left side subtree (\emph{i.e.} if $\vfi^{*}$ is of type $(T,\ldots)$), consider $v \in V(T)$, where $\vfi^{*}(v) = 1$. $v$ is connected
  to $T_R$ via $e \in E(C)$
  \item If $T$ is a right side subtree (\emph{i.e.} if $\vfi^{*}$ is of type $(\ldots,T)$), consider $v \in V(T)$, where $\vfi^{*}(v) = |V|$. $v$ is connected
  to $T_L$ via $e \in E(C)$
  \item Otherwise $\vfi^{*}$ is of type $(\ldots,T_1,T,T_2,\ldots)$. Let $v_L \in V(T)$ be the vertex of $V$
  s.t. $\forall v \in V(T), \vfi^{*}(v_L) \le \vfi^{*}(v)$. Similarly let $v_R \in V(T)$ be the vertex
  s.t. $\forall v \in V(T), \vfi^{*}(v) \le \vfi^{*}(v_R)$. Then $v_L$ and $v_R$ are respectively directly connected to $T_1$ and $T_2$ through $E(C)$
\end{itemize}
Therefore it can be inferred that based on the final layout $\vfi^{*}$, $LA(\vfi^{*},H) = LA(\vfi^{*},T)  +  LA(\vfi^{*},C) = LA(\vfi^{*},T) + 2 \times (n-1)$. But we know that the swap operation $\sigma$ does not change
the value of linear arrangement for the underlying tree $T$.
Hence $LA(\vfi^{*},H) = LA(\vfi^{\circledast},T) + 2 \times (n-1)$ which induces the optimality of $\vfi^{*}$.
\paragraph{Time complexity analysis.} The time complexity of the algorithm depends on the two
major \textbf{For} loops in lines~\ref{algLine:For1} and~\ref{algLine:Part2-start}.
\begin{itemize}
  \item
    \textbf{Analysis of the first loop in line~\ref{algLine:For1}.} Assuming every basic swap operation
    is an atomic operation with cost $O(1)$, then the cost of every
    $\sigma(\vfi^{*}, T_1,T_2)$ is $O(|V(T_1)|)= O(|V(T_2)|)$. Hence, the cost of the loop at line~\ref{algLine:For1} is
    $O(k \times \dfrac{n-1}{k}) = O(n)$ \footnote{Every subtree $T_i$ has exactly $\dfrac{n-1}{k}$ vertices}.

  \item
    \textbf{Analysis of the second loop in line~\ref{algLine:Part2-start}.}
    At every iteration, if there are $k$ subtrees $\Set{T_{r,1},\ldots,T_{r,k}}$,
    In worst-case scenario at most $O(k)$ swap operations $\sigma$ are carried out. For $1 \le i \le k$, $|V(T_{r,i})| = \dfrac{n}{k}$ .Therefore the cost of each iteration is $O(k\times\dfrac{n}{k}) = O(n)$.

    \noindent
    Having the fact that $\hbar = O(\log{n})$, we conclude the time complexity of the loop in line~\ref{algLine:Part2-start} as $O(n \log{n})$, which dominates the time complexity of the whole algorithm.
\end{itemize}
\end{proof}
\begin{example}
\label{exp:non-RBT-layout}
In figure~\ref{fig:RBT-exp-and-OLA}, two layouts are presented for the Halin graph $H_2 = T \uplus C$ (Figure~\ref{fig:non-RBT-exp-2} in section~\ref{sect:prelim}).
Layout $\vfi_1$ in Figure~\ref{fig:non-RBT-exp-OLA-for-tree}, is an OLA for the underlying tree of
graph $H_2$ while it is not an optimal layout for $H_2$ itself (the OLA for $H_2$ is shown in Figure~\ref{fig:non-RBT-exp-OLA-for-graph}).
Enumerating all the OLAs for the underlying tree of $H_2$, it can be verified that non
is an OLA for $H_2$.

\noindent
$P=(v_7,v_2,v_2,v_4,v_{12})$ is the spinal path corresponding to the layout $\vfi_1$.
After removing the edges of the spinal path $P$ and  cycle $C$,
each vertex $v_i$ of the path $P$ corresponds to a subtree $T_i$.
Notice that subtree $T_1$, rooted at $v_1$, is not
an RBT. Hence based on the order of arrangement of the three branches connected to $v_1$, we may get different values
for the linear arrangement, and an ordering of the branches with the optimal arrangement values for $T$, is not necessary optimal
after adding the edge of cycle $C$ and path $P$ back.

\noindent
As opposed to the OLA of the underlying tree of $H_2$, given an OLA $\vfi$ for an arbitrary RBT $T'$,
and an arbitrary subtree $T_i$ rooted at some spinal vertex $v_i$,
all the branches of $T_i$ connected to $v_i$ have the same number of vertices and are
also recursively balanced. Hence for every Halin graph $H' = T' \uplus C'$ based on $T'$,
the layout $\vfi$ can be modified by changing the order of the branches of the subtrees where the value
of linear arrangement for $T'$ stays unchanged (Let's call the modified layout $\vfi'$),
while the value of linear arrangement for the edges of cycle $C'$
(\emph{i.e.} $\sum_{\Set{v,u} \in E(C')} \lambda(\Set{v,u}, \vfi')$)
is equal to $2 \times (n-1)$ \footnote{So there is no redundant crossing exists based on $\vfi'$.}.
Consequently the value of OLA for $H'$ is equal to $LA(\vfi, T) + 2 \times (n-1)$.
\end{example}
\begin{figure}
\begin{center}
        \begin{subfigure}[b]{0.65\textwidth}
                \includegraphics[scale=.5]
                {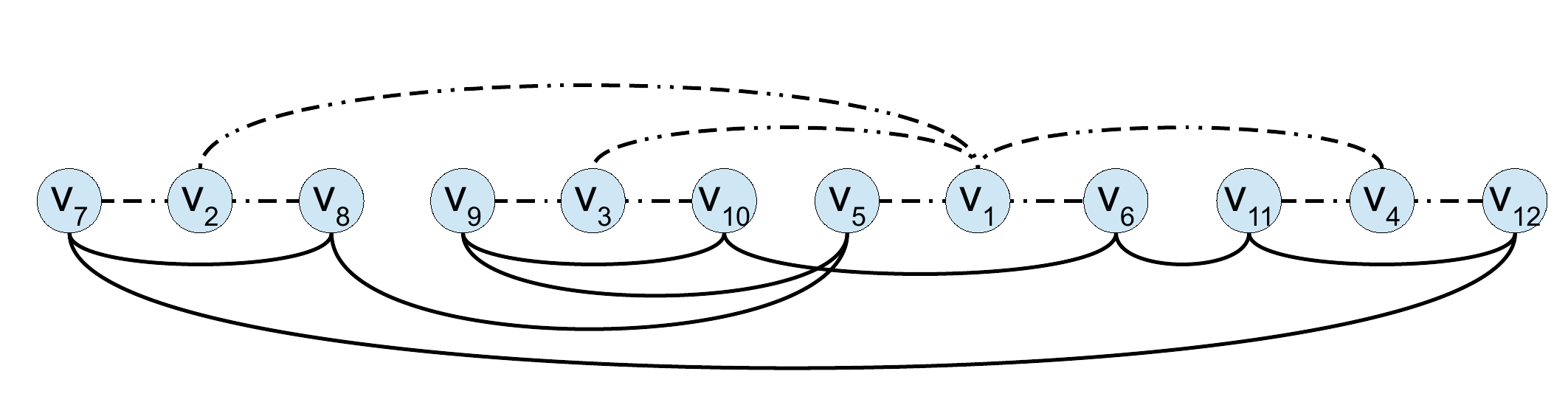}
                \caption{Layout $\vfi_1$ is an OLA for underlying tree of Halin graph $H_2$.}
                \label{fig:non-RBT-exp-OLA-for-tree}
        \end{subfigure}

        \begin{subfigure}[b]{0.65\textwidth}
                \includegraphics[scale=.5]
                {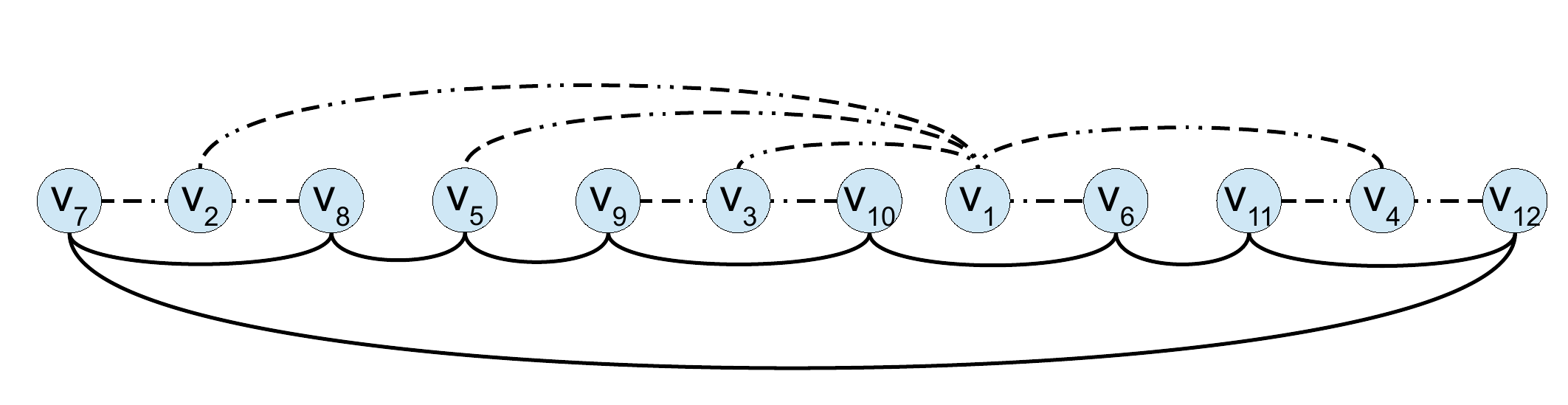}
                \caption{Layout $\vfi_2$ is an OLA for Halin graph $H_2$.}
                \label{fig:non-RBT-exp-OLA-for-graph}
        \end{subfigure}
        \caption{Two layout $\vfi_1$ and $\vfi_2$ for Halin graph $H_2 = T \uplus C$ (Figure~\ref{fig:non-RBT-exp-2}
        in section~\ref{sect:prelim}), where $LA(\vfi_1,H_2) > LA(\vfi_2,H_2)$ while $LA(\vfi_1,T) < LA(\vfi_2,T)$.
        More specifically $\vfi_1$ is an OLA for $T$ and $\vfi_2$ is the OLA for $H_2$.
        }
        \label{fig:non-RBT-exp-and-OLA}
\end{center}
\end{figure}
\begin{corollary}
\label{corol:halin-with-OLA-equal-to-tree}
Let $T$ be the underlying tree for some Halin graph $H = T \uplus C$ and let $\vfi^{\circledast}$ be an OLA for $T$
where $P=(w_1,\ldots,w_l)$ is respectively the spinal path and
based on $\vfi^{\circledast}$, and $\Set{T_1, \ldots, T_l}$ is  the set of subtrees remaining after removing all edges of $C$ and $P$.

\noindent
If for some OLA $\vfi^{\circledast}$ of $T$, $T_i$ rooted at $w_i$ is a recursively balance tree for $i= 1, \ldots, l$, then
there exists an OLA $\vfi^{*}$ for $H$ where $LA(\vfi^{*}, H) = LA(\vfi^{\circledast}, T) + 2\times (n-1)$.
\end{corollary}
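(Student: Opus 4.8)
The plan is to reduce this corollary to the theorem just proved for Halin graphs with a recursively balanced underlying tree, by applying that argument locally to each spinal subtree $T_i$ rather than globally to all of $T$. First I would fix the OLA $\vfi^{\circledast}$ of $T$ whose spinal path $P=(w_1,\dots,w_l)$ has the property that each $T_i$ rooted at $w_i$ is an RBT. By Corollary~\ref{corollary:halin-monotonic-on-spine} (applied to the tree, via Property~\ref{prop:tree-monotone-path}) and Property~\ref{prop:tree-subtrees-labeled-separately}, $\vfi^{\circledast}$ lays out the vertices of $T_1,\dots,T_l$ in consecutive blocks in this order, and $\vfi^{\circledast}$ restricted to each $T_i$ is an OLA for $T_i$. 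So the global layout is a concatenation of local OLAs of RBTs, glued along the spinal path.

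Next I would analyze the cycle $C$. In any Halin graph, $C$ visits the leaves of $T$; restricted to the block of $T_i$, the leaves of $T_i$ form a contiguous arc of $C$ (this is exactly the planarity/Halin structure and is implicit in Lemma~\ref{lemma:halin-subtrees-labeled-separately} and Lemma~\ref{lemma:halin-branches-labeled-separately}). Thus $C$ decomposes into $l$ subpaths $C_1,\dots,C_l$, where $C_i$ runs through the leaves of $T_i$, plus $l$ "bridge" edges joining consecutive arcs (and closing the cycle). Each $T_i$ together with its leaf-arc $C_i$ behaves like a rooted/anchored Halin fragment with recursively balanced underlying tree. The heart of the argument is to apply the rearrangement procedure of Algorithm~\ref{alg:redundant-crossing-elim} — more precisely its correctness statement — inside each block $T_i$: since $T_i$ is an RBT, its branches at every internal vertex have equal size and are themselves RBTs, so the swap operations $\sigma$ can reorder the branches without changing $LA(\cdot,T_i)$, while forcing the arc $C_i$ to contribute only $2\times(|V(T_i)|-1)$, i.e. eliminating all redundant crossings within block $i$. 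Doing this for $i=1,\dots,l$ produces a layout $\vfi^{*}$ with $LA(\vfi^{*},T)=LA(\vfi^{\circledast},T)$ and with each internal arc $C_i$ optimally laid out.

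Then I would account for the bridge edges of $C$ connecting block $i$ to block $i+1$. Because the blocks are consecutive and each $C_i$ is a contiguous arc, the two endpoints of a bridge edge can be made to be the last leaf of block $i$ and the first leaf of block $i+1$ (again using $\sigma$-swaps of branches within the RBTs $T_i$ and $T_{i+1}$, which leave $LA(\cdot,T)$ unchanged). With the blocks reversed/oriented consistently as in the algorithm's left/middle/right cases, each bridge edge then has expand equal to the gap between adjacent blocks, and the total contribution of all bridges plus the internal arcs telescopes to exactly $LA(\vfi^{*},C)=2\times(n-1)$, the minimum possible for any Hamiltonian-type cycle cost. Hence $LA(\vfi^{*},H)=LA(\vfi^{*},T)+LA(\vfi^{*},C)=LA(\vfi^{\circledast},T)+2\times(n-1)$, and by Corollary~\ref{corol:halin-opt-LA-bound} this is optimal.

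The main obstacle I anticipate is the careful orientation bookkeeping across the $l$ blocks: one must choose, for each $T_i$, which of its two leaf-arc endpoints faces block $i-1$ and which faces block $i+1$, and verify that the branch-reordering swaps inside $T_i$ can realize any such choice (this is where recursively balancing is essential — for a general tree, as Example~\ref{exp:non-RBT-layout} shows, the forced branch order may conflict with the cycle's adjacency demands). Making the induction on block structure mesh cleanly with the height-indexed recursion of Algorithm~\ref{alg:redundant-crossing-elim}, so that no crossing survives at any level, is the delicate part; everything else is a direct consequence of the already-established tree properties and the RBT theorem.
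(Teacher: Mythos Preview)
Your proposal is correct and matches the paper's intent: the paper states this result as a corollary of the preceding theorem without supplying a separate proof, and your argument---apply the swap-based rearrangement of Algorithm~\ref{alg:redundant-crossing-elim} locally within each spinal block $T_i$ (which is possible precisely because each $T_i$ is an RBT), then orient the blocks so the bridge edges of $C$ connect adjacent block boundaries---is exactly the natural extension the corollary is meant to capture. Your identification of the orientation bookkeeping across blocks as the only nontrivial step is accurate, and it is handled by Lemma~\ref{lemma:spinal-path-of-rec-balanced-trees} together with the fact that swaps within an RBT block leave $w_i$ fixed at its central position, so the spinal edges' expands are unchanged.
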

Another class of Halin graphs which their underlying trees satisfy the sufficient property of corollary~\ref{corol:halin-with-OLA-equal-to-tree},
are the Halin graphs based on \emph{caterpillar trees}. This class of Halin graphs is studied in~\cite{easton1996solvable} and presented
result on value of their OLA, testifies the corollary~\ref{corol:halin-with-OLA-equal-to-tree}.

\section{Conclusion and Future Work}
\label{sect:future}
   As one of the simplest classes of non-outerplanar graphs,
in this work we studied some properties of OLA of Halin graphs
and we presented a lower bound for the value of OLA for Halin graphs.
We also introduced some classes of Halin graphs which the OLA can be
found in $O(n\log{n})$.
The problem of OLA of general Halin graphs is still open
and we believe a solution for the OLA of general Halin graphs
gives good insights into the properties of OLA of the
more general class of k-outerplanar graphs.

\Hide
{\footnotesize
\printbibliography
}

{\footnotesize 
\bibliographystyle{plainurl} 
\bibliography{allbibs}
}
\newpage
\appendix
\section{Appendix: Proofs of Supporting Lemmas for Section~\ref{sect:OLA-halin-graphs}}
    \label{apndx:Proof-lemmas}
\begin{notxxx}
\label{not:subtee-size}
Consider the layout $\vfi$ for a Halin graph $H=T \uplus C$ and it's corresponding spinal path $P = (w_1, w_2, \dots, w_l)$.
$\mathcal{T}_i = |\Set{u \text{ s.t. } u \in V(T_i)}|$ presents the number of vertices of an spinal subtree $T_i$.
 Similarly for branch $B_{i,j}$, $\beta_{i,j} = |\Set{u \text{ s.t. }  u \in V(B_{i,j})}|$ stands for the number of vertices of branch $B_{i,j}$.
\end{notxxx}
\begin{notxxx}
\label{not:delta}
Given vertex $v \in V$, subset $\mathcal{V} \subseteq V$ and a layout $\vfi$, we define:
 \begin{align*}
 \delta_{\vfi} (\underline{v}, \overline{v}, \mathcal{V}) = &|\Set{u \text{ s.t. } u \in \mathcal{V} \text{ and } \vfi(\underline{v}) < \vfi(u) < \vfi(\overline{u})}|
 \end{align*}
 In other word, $\delta_{\vfi} (\underline{v}, \overline{v}, \mathcal{V})$ is the number of vertices in $\mathcal{V}$, which based on $\vfi$ are labeled with integers greater than the label of $\underline{v}$ and smaller than the label of $\overline{v}$.
Respectively $\delta_{\vfi} (-, \overline{v}, \mathcal{V})$ and $\delta_{\vfi} (\underline{v}, -, \mathcal{V})$ can be interpreted as the number of vertices of $\mathcal{V}$ labeled before $\overline{v}$ and after $\underline{v}$.
\end{notxxx}
In what follows we present an auxiliary lemma and its proof that will be helpful 
in simplifying and understanding of the proof of lemma~\ref{lemma:halin-subtrees-labeled-separately}.
\begin{lemma}
\label{lemma:halin-first-subtree-labeled-separately}
Consider the OLA $\vfi^ *$ for the Halin graph $H=T \uplus C$ and the corresponding spinal path $P = (w_1, w_2, \dots, w_l)$.
Removing all the edges of $P$ and $C$ results in a set of $l$ subtrees $T_1, \ldots, T_l$ respectively rooted at $w_1, \ldots, w_l$.
In the layout $\vfi^ *$, the vertices of $T_1$ are labeled by contentious integers and before all the vertices of $H - T_1$. Formally speaking:
\[
    \forall  v \in T_1, u \notin T_1 \Rightarrow \vfi^{*}(v) < \vfi^{*}(u)
\]
\end{lemma}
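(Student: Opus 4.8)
The plan is to argue by contradiction. Suppose, for an OLA $\vfi^*$ of $H = T \uplus C$ with spinal path $P = (w_1, \dots, w_l)$ and associated subtrees $T_1, \dots, T_l$, that $T_1$ is not an initial block — that is, there exists a vertex $u \notin V(T_1)$ with $\vfi^*(u) < \vfi^*(v)$ for some $v \in V(T_1)$. Since $w_1$ has $\vfi^*(w_1) = 1$ and $w_1 \in V(T_1)$, the set of labels occupied by $V(T_1)$ is some set containing $1$ but not forming a prefix of $\{1, \dots, n\}$; hence there is at least one "gap" — a maximal run of labels occupied by vertices outside $T_1$ that is sandwiched between labels occupied by $T_1$. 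I would pick the leftmost such intruding block $S \subseteq V \setminus V(T_1)$.

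The key step is to perform a surgery on $\vfi^*$ that moves the block $S$ to the left, out of the interior of $T_1$'s span, and show this does not increase the cost — in fact it strictly decreases it or at worst keeps it equal, contradicting optimality (or, with care, contradicting the assumed structure). Concretely: let $a = \vfi^*$-position just before $S$ and $b$ just after $S$; all vertices of $S$ lie in positions $a+1, \dots, b$ with the vertices at positions $\le a$ that belong to $V(T_1)$, and position $b+1$ again in $V(T_1)$ (or $S$ abuts the right end). The crucial observation is about which edges cross the cut between $T_1$-vertices and $S$-vertices. Because $T_1$ is a connected component after deleting $E(P) \cup E(C)$, the only edges leaving $V(T_1)$ are: the single tree edge $\{w_1, w_2\}$ of $P$, and the cycle edges of $C$ incident to leaves of $T_1$. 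In an OLA, $w_2$ (the next spine vertex) must lie to the right of all of $T_1$ by Lemma~\ref{lemma:halin-subtrees-labeled-separately} / Corollary~\ref{corollary:halin-monotonic-on-spine}; and the cycle edges incident to $T_1$'s leaves go to at most two "neighbouring" leaf-regions. I would count, for the block $S$, the number of edges from $S$ to its left (into the $T_1$-prefix) versus to its right, and use an exchange argument: sliding $S$ leftward past the adjacent $T_1$-vertices changes the total cost by a quantity proportional to (edges of $S$ going right) minus (edges of $S$ going left), and the structural constraints force this to be non-negative, so the slide is cost-non-increasing; iterating removes all gaps.

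The main obstacle I anticipate is handling the cycle edges of $C$ cleanly: a leaf of $T_1$ may be joined by a $C$-edge to a vertex inside $S$, so $S$ is not necessarily "cost-free" to move, and one must verify the net count still works out — this is where the planarity of $H$ and the specific structure of how $C$ threads through the spinal subtrees (at most two $C$-edges cross between $T_1$ and the rest, by Lemma~\ref{lemma:halin-branches-labeled-separately} and the planar embedding) is essential. A secondary subtlety is the boundary case where moving $S$ would displace $w_1$ from position $1$; since $w_1 \in V(T_1)$ and we only slide $S$ to a position still to the right of $w_1$ (as $w_1$ is the global minimum), this does not actually arise, but it should be noted. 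Once the exchange inequality is established, a short induction on the number of gaps finishes the proof, and the contiguity together with $w_1$ being at position $1$ gives exactly $\vfi^*(v) < \vfi^*(u)$ for all $v \in V(T_1)$, $u \notin V(T_1)$.
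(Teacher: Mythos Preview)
Your proposal has a genuine logical gap: it is circular. You invoke Lemma~\ref{lemma:halin-subtrees-labeled-separately}, Corollary~\ref{corollary:halin-monotonic-on-spine}, and Lemma~\ref{lemma:halin-branches-labeled-separately} to control where $w_2$ sits and how many $C$-edges cross between $T_1$ and the rest. But in the paper's logical order, Lemma~\ref{lemma:halin-first-subtree-labeled-separately} is the auxiliary result on which the proof of Lemma~\ref{lemma:halin-subtrees-labeled-separately} (and hence its corollary, and also the branch lemma) is built. You therefore cannot assume $w_2$ lies to the right of all of $T_1$, nor can you invoke the branch-separation lemma, while proving the present statement. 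Without those inputs your edge-count for the block $S$ (``edges going right minus edges going left is non-negative'') has no justification: a priori $w_2$, or the $C$-neighbours of leaves of $T_1$, could sit inside $S$ or to its left, and then the slide need not be cost-non-increasing.

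There is also a directional confusion in the surgery you describe. You pick the leftmost intruding block $S$, with $T_1$-vertices at positions $1,\dots,a$ and again at position $b+1$, and then say you will ``slide $S$ leftward past the adjacent $T_1$-vertices'' to move it ``out of the interior of $T_1$'s span''. Since $w_1$ occupies position $1$, sliding $S$ left pushes it deeper into $T_1$'s span, not out of it; the move you actually want is to push $S$ to the right (equivalently, pull the $T_1$-vertex at $b+1$ leftward past $S$). This is fixable, but combined with the circularity above the argument does not stand.

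For comparison, the paper's proof avoids both issues by doing a single global rearrangement: it shifts every vertex of $T_1$ left and every vertex of $H-T_1$ right (preserving relative orders within each part) so that all of $T_1$ precedes all of $H-T_1$, and then bounds the cost change directly. The only structural input it uses is that exactly three edges join $V(T_1)$ to $V(H)-V(T_1)$ (one tree edge $\{w_1,w_2\}$ and two cycle edges), together with the $3$-connectivity of $H$, which forces every vertex in the overlap region to contribute at least two units of ``wasted'' length. That $3$-connectivity fact is what you should be leaning on, not the downstream lemmas.
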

\begin{proof}
We prove this lemma by showing that the opposing assumption contradicts the optimality of $\vfi^{*}$. In other other word, if $\exists v \in V(T_1), u \notin V(T_1), \vfi^{*}(u) < \vfi^{*}(v)$, we suggest an alternative layout $\vfi^{\diamond}$ where $LA(\vfi^{*},H) > LA(\vfi^{\diamond}, H)$. Two layouts $\vfi^{*}$ and $\vfi^{\diamond}$ are respectively shown in Figure~\ref{fig:first-subtree-labeled-separately} and~\ref{fig:first-subtree-labeled-separately-after}.
\begin{figure}
        \centering
        \begin{subfigure}[b]{0.3\textwidth}
                \includegraphics[scale=.7]
                {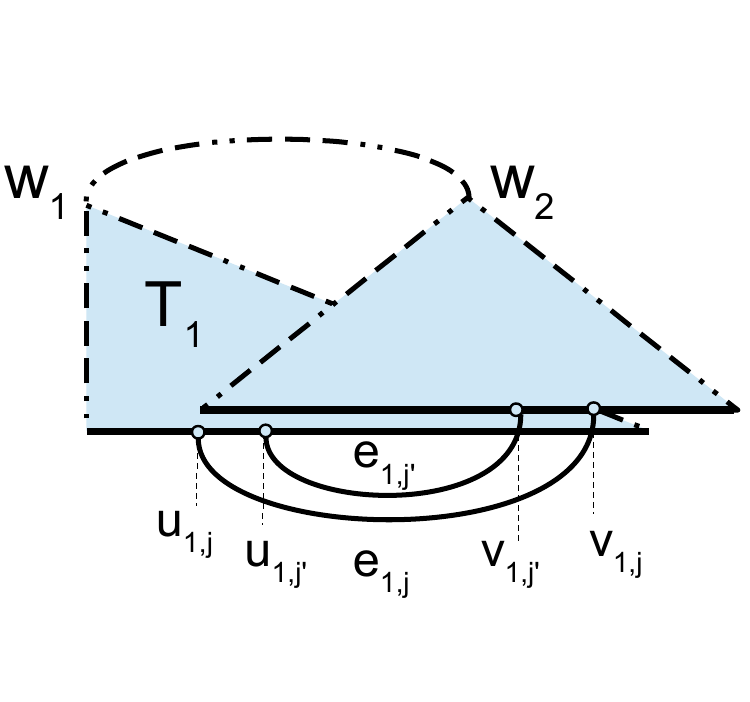}
                \caption{OLA layout $\vfi^{*}$ which labels some vertex $u \in V(T_1)$ with integers larger than the label of some vertex $v \in V(H) - V(T_1)$. In other word the arrangement of $T_1$ has overlapping area with arrangement of $H - T_1$.}
                \label{fig:first-subtree-labeled-separately}
        \end{subfigure}
        \qquad\qquad\qquad
        \begin{subfigure}[b]{0.3\textwidth}
                \includegraphics[scale=.7]
                {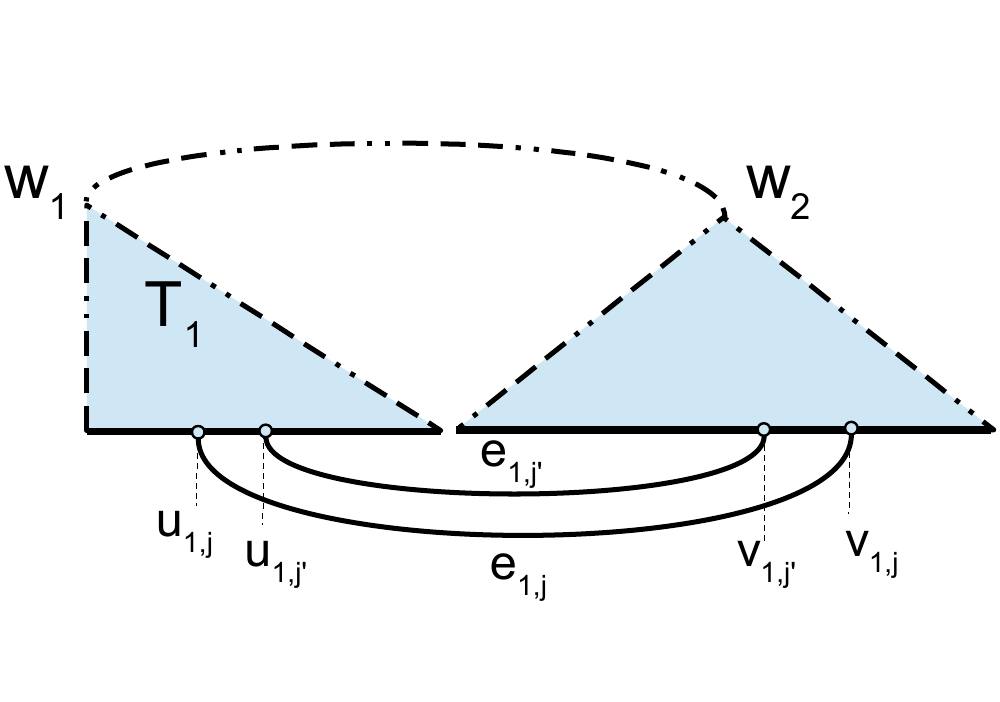}
                \caption{In the layout $\vfi^{\diamond}$ all the vertices of $T_1$ are labeled before the vertices of $H - T_1$.}
                \label{fig:first-subtree-labeled-separately-after}
        \end{subfigure}
        \caption{The OLA $\vfi^{*}$ and the alternative layout $\vfi^{\diamond}$.}
        \label{fig:first-subtree}
\end{figure}
In layout $\vfi^{\diamond}$, defined as it follows, all the vertices of $T_1$ are labeled with integers smaller than all the labels of vertices in $V(H) -V(T_1)$ by shifting them to the left while keeping their relative orders unchanged.
\[
 \forall v \in V, \vfi^{\diamond} =
    \begin{cases}
       \vfi^{*}(v) -  \delta_{\vfi^{*}} (-, v, V(H) - V(T_1)) &\quad \text{if } v \in V(T_1)\\
       \vfi^{*}(v) + \delta_{\vfi^{*}} (v, -, V(T_1)) &\quad \text{if } v \notin V(T_1)
     \end{cases}
\]
Going from layout $\vfi^{*}$ to $\vfi^{\diamond}$, the equation~\ref{eq:first-subtree-LA-change} can be inferred.
 \begin{align}
    LA(\vfi^{*},H) - LA(\vfi^{\diamond},H) = &  \label{eq:first-subtree-LA-change} \\
    & \Delta + \nonumber \\
    & \lambda(\Set{w_1,w_2}, \vfi^{*}) - \lambda(\Set{w_1,w_2}, \vfi^{\diamond}) + \nonumber \\
    & \lambda(e_{1,j}, \vfi^{*}) - \lambda(e_{1,j}, \vfi^{\diamond}) + \nonumber \\
    & \lambda(e_{1,j'}, \vfi^{*}) - \lambda(e_{1,j'}, \vfi^{\diamond}) + \nonumber
 \end{align}
 \noindent
 Where $\Delta$ is the increase in the value of linear arrangement due to overlapping vertices.
\footnote{Let $v \in V(T_1)$ be the vertex with largest label and $v' \in T(H) - V(T_1)$ with smallest label according to $\vfi^{*}$.
 $u \in V(H)$ is in overlapping area if $\vfi^{*}(v') < \vfi^{*}(u) < \vfi^{*}(v)$.}

\paragraph{Value of $\Delta$: }
We define $\Delta_1$ and $\Delta_2$ respectively as the number of vertices of
$V(T_1)$ and $V(H) - V(T_1)$ in the overlapping area. More specifically:
 \begin{align*}
    \Delta_1 = |\Set{v \in V(T_1) \text{ s.t. } \exists u, u' \in V(H) - V(T_1), \vfi^{*}(u) < \vfi^{*}(v) < \vfi^{*}(u')}| \\
    \Delta_2 = |\Set{v \in V(H) - V(T_1) \text{ s.t. } \exists u, u' \in V(T_1), \vfi^{*}(u) < \vfi^{*}(v) < \vfi^{*}(u')}| \\
 \end{align*}
\begin{fact}
    \label{fact:subtrees-two-or-three-connected}
    As presented in Figure~\ref{fig:first-subtree}, the set of vertices $V(T_1)$ and $V(H) - V(T_1)$ are connected via exactly three outgoing edges $\Set{w_1,w_2}$, $e_{1,j}$ and $e_{1,j'}$. Based on the three-connectivity of Halin graphs, any subset $\mathcal{V} \subset V(T_1)$ not incident to the outgoing edges, is connected to the rest of $V(T_1)$ by at least three edge disjoint paths. Also Any subset $\mathcal{V} \subset V(T_1)$ incident to some of outgoing edges $\Set{w_1,w_2}$, $e_{1,j}$ and $e_{1,j'}$, is connected to $V(T_1) - \mathcal{V}$ via at least two edge-disjoint paths. the same property hold for any $\mathcal{V} \subset V(H) - V(T_1)$.
\end{fact}
According to fact~\ref{fact:subtrees-two-or-three-connected}, any vertex $v \in V(T_1)$ in the overlapping area participates one unit in increasing the
expand of at least two edges of $E(H/T_1)$. Similarly any vertex $v \in V(H) - V(T_1)$ in overlapping area, increases the expand of at least two edges from $E(T_1)$.
Hence:
\begin{align}
    \Delta \ge 2 \times (\Delta_1 + \Delta_2) \label{eq:first-subtree-LA-change-Delta}
\end{align}
\paragraph{Change in the expands of $\Set{w_1,w_2}$, $e_{1,j}$ and $e_{2,j'}$: }
Based on the procedure that $\vfi^{\diamond}$ is constructed from $\vfi^{*}$ it's easy to validate the following equations.
 \begin{align}
   & (\lambda(\Set{w_1,w_2}, \vfi^{*}) - \lambda(\Set{w_1,w_2}, \vfi^{\diamond})) = - \delta(w_2,-,V(T_1)) \ge -\Delta_1 \label{eq:first-subtree-LA-change-w1w2}\\
   & (\lambda(e_{1,j}, \vfi^{*}) - \lambda(e_{1,j}, \vfi^{\diamond})) = - (\delta(-,u_{1,j},V(H)-V(T_1)) + \delta(v_{1,j},-,V(T_1)))  >  - (\Delta_2 + \Delta_1) \label{eq:first-subtree-LA-change-e1j}\\
   & (\lambda(e_{1,j'}, \vfi^{*}) - \lambda(e_{1,j'}, \vfi^{\diamond})) = - (\delta(-,u_{1,j'},V(H)-V(T_1)) + \delta(v_{1,j'},-,V(T_1)))  >  - (\Delta_2 + \Delta_1) \label{eq:first-subtree-LA-change-e1jprime}
\end{align}
\begin{remxxx}
\label{rem:canceled-out-sections}
Let $v \in \Set{w_2, v_{1,j}, v_{1,j'}}$ be the vertex with largest label among the three.
The rearrangement of $v$ increases the expand of the corresponding edges by $\delta(v,-,V(T_1)))$. But notice that based on fact~\ref{fact:subtrees-two-or-three-connected}
the set of vertices of $V(H) - V(T_1)$ labeled after $v$ are connected to rest of vertices (vertices on left side according to $\vfi^{*}$) using at least three vertices. Hence each vertex of $V(T_1)$ after $v$ (labeled with integers larger than label of $v$) add one unit to the expands of at least \underline{three} edges of $H/T_1$, while only the expands of two edges where considered in equation~\ref{eq:first-subtree-LA-change-Delta}.
Therefore the value $\delta(v,-,V(T_1))$ in the increase of the expand of the edge incident to $v$ must be ignored in the calculation of $ LA(\vfi^{*},H) - LA(\vfi^{\diamond},H)$.
\end{remxxx}
\noindent
Considering the remark ~\ref{rem:canceled-out-sections}, we finalize the proof by the following contradictory result.
 \begin{align}
    LA(\vfi^{*},H) - LA(\vfi^{\diamond},H) > & \\
    & 2 \times (\Delta_1 + \Delta_2) - \Delta_1 - (\Delta_2 + \Delta_1) - \Delta_2 \\
    & > 0
 \end{align}

\end{proof}
\begin{Prxxx}{(Proof of lemma~\ref{lemma:halin-subtrees-labeled-separately})}
\label{proof:halin-subtrees-labeled-separately}
In lemma~\ref{lemma:halin-first-subtree-labeled-separately},
it is shown that, given an OLA $\vfi^{*}$ for the Halin graph $H$, all the vertices of $T_1$
are labeled with continuous integers and hence are arranged before all other vertices in the graph.
Using a similar approach as in lemma~\ref{lemma:halin-first-subtree-labeled-separately},
we show that in an OLA $\vfi^{*}$ all the vertices in $V(T_2)$ are
labeled before all the vertices of $V(T_3) \cup \ldots \cup V(T_l)$.
Proof is complete as the same approach can be carried out to show that $ \forall 2 < i < l$ all the vertices of $V(T_i)$ are labeled with integers smaller than the labels of vertices of $V(T_{i+1}) \cup \ldots \cup V(T_l)$.

\noindent
In the rest of this proof we present $T_3 \uplus \ldots \uplus T_l$ by $\overline{T}_{1,2}$.
We show that a layout $\vfi^{*}$, where the arrangement of vertices of
$T_2$ overlap with the arrangement of vertices in $V(\overline{T}_{1,2})$ (as shown in Figure~\ref{fig:second-subtree-labeled-separately})
cannot be optimal. In order to do so, based on this allegedly optimal layout $\vfi^{*}$, we define the modified layout $\vfi^{\diamond}$
(presented by Figure~\ref{fig:second-subtree-labeled-separately-after}) and we show that $LA(\vfi^{*}) > LA(\vfi^{\diamond})$.
\begin{figure}
        \begin{subfigure}[b]{0.4\textwidth}
                \includegraphics[scale=.65]
                {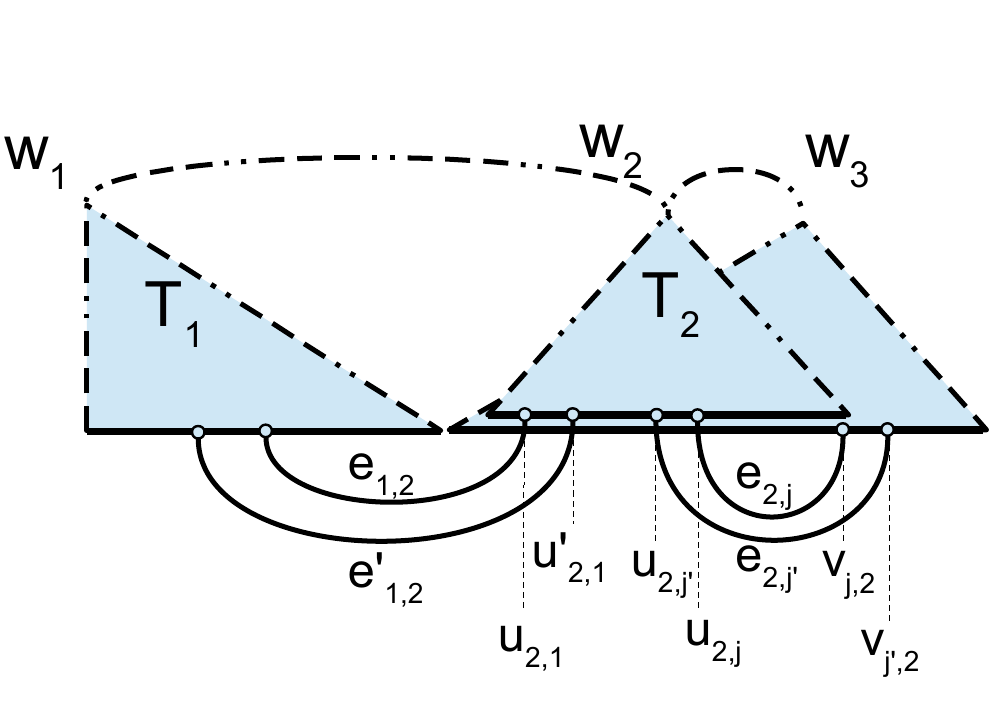}
                \caption{OLA layout $\vfi^{*}$ which labels some vertex $u \in V(T_2)$ with integers larger than the label of some vertex $v \in V(\overline{T}_{1,2})$. In other word the arrangement of $T_2$ has overlapping area with arrangement of $T_3 \uplus \ldots \uplus T_l$.}
                \label{fig:second-subtree-labeled-separately}
        \end{subfigure}
        \qquad\qquad
        \begin{subfigure}[b]{0.4\textwidth}
                \includegraphics[scale=.65]
                {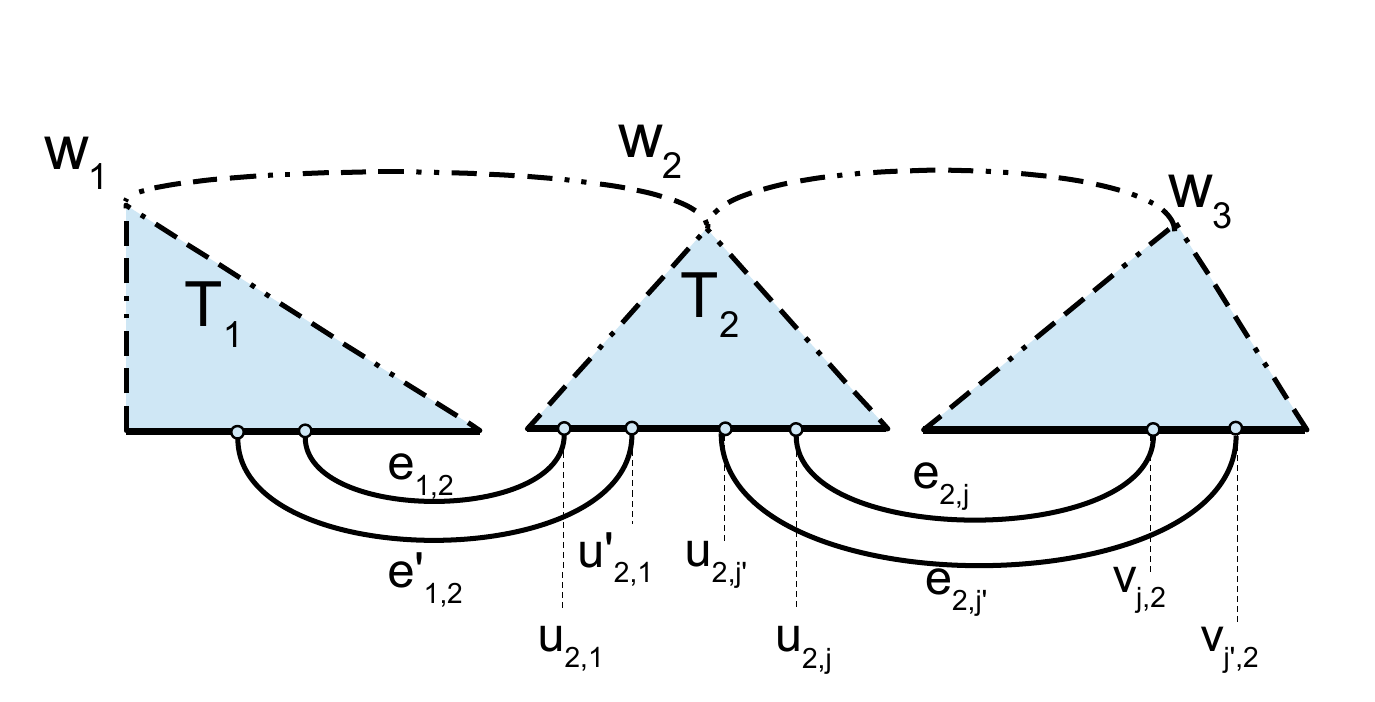}
                \caption{In the layout $\vfi^{\diamond}$ all the vertices of $T_2$ are labeled after the vertices of $T_1$ and before all the vertices of $V(\overline{T}_{1,2})$.}
                \label{fig:second-subtree-labeled-separately-after}
        \end{subfigure}
        \caption{The OLA $\vfi^{*}$ and the alternative layout $\vfi^{\diamond}$. Subtree $T_2$ is connected to $T_1$ via at most two edges $e_{1,2}, e ^{\prime} _{1,2} \in E(C)$,
        (respectively incident to $u_{1,2}, u^{\prime}_{1,2} \in V(T_2)$)
        and exactly one edge $\Set{w_1,w_2} \in E(T)$. $T_2$ is also connected to the rest of graph by the same number of edges0
        $e_{2,j}, e_{2,j'} \in E(C)$ and $\Set{w_2,w_3}$. End points of edge $e_{2,j}$ are $u_{2,j} \in V(T_2)$ and $v_{2,j} \in V(\overline{T}_{1,2})$ while $u_{2,j'} \in V(T_2)$ and $v_{2,j'} \in V(\overline{T}_{1,2})$ are the two end points of $e_{2,j'}$.}
        \label{fig:second-subtree}
\end{figure}

\noindent
The subtree $T_2$ is connected to $T_1$ through one edge $e_{1,2} \in E(T)$ and one or two edges from $E(C)$. $T_2$ is also connected to the rest of graph by exactly the same number of edges. As presented in Figure~\ref{fig:second-subtree}, we only consider the case where $T_2$
is connected to each of subgraphs $T_2$ and $T_3 \uplus \ldots \uplus T_l$ by one edge of $E(T)$ and two edges of $E(C)$. The other case can be analyze in the same way and is omitted.

\noindent
Layout $\vfi^{\diamond}$ is formally defined as it follows.
\[
 \forall v \in V, \vfi^{\diamond} =
    \begin{cases}
       \vfi^{*}(v) -  \delta_{\vfi^{*}} (-, v, V(\overline{T}_{1,2})) &\quad \text{if } v \in V(T_2)\\
       \vfi^{*}(v) + \delta_{\vfi^{*}} (v, -, V(T_2)) &\quad \text{if } v \in V(\overline{T}_{1,2})
     \end{cases}
\]

\noindent
Based on this definition it's easy to see that:
 \begin{align}
    LA(\vfi^{*},H) - LA(\vfi^{\diamond},H) = &  \label{eq:second-subtree-LA-change} \\
    & \Delta + \nonumber \\
    & (\lambda(\Set{w_1,w_2}, \vfi^{*}) - \lambda(\Set{w_1,w_2}, \vfi^{\diamond})) + \nonumber \\
    & (\lambda(\Set{w_2,w_3}, \vfi^{*}) - \lambda(\Set{w_2,w_3}, \vfi^{\diamond})) + \nonumber \\
    & (\lambda(e_{1,2}, \vfi^{*}) - \lambda(e_{1,2}, \vfi^{\diamond})) + \nonumber \\
    & (\lambda(e^{\prime} _{1,2}, \vfi^{*}) - \lambda(e^{\prime} _{1,2}, \vfi^{\diamond})) + \nonumber \\
    & (\lambda(e_{2,j} , \vfi^{*}) - \lambda(e_{2,j}, \vfi^{\diamond})) + \nonumber \\
    & (\lambda(e_{2,j'}, \vfi^{*}) - \lambda(e_{2,j'}, \vfi^{\diamond})) + \nonumber
 \end{align}
As in lemma~\ref{lemma:halin-first-subtree-labeled-separately} the increase in the value of linear arrangement due to overlap
 is presented by $\Delta$ and we define $\Delta_2$ and $\Delta_3$ as:
  \begin{align*}
    \Delta_2 = |\Set{v \in V(T_2) \text{ s.t. } \exists u, u' \in V(\overline{T}_{1,2}), \vfi^{*}(u) < \vfi^{*}(v) < \vfi^{*}(u')}| \\
    \Delta_3 = |\Set{v \in V(\overline{T}_{1,2}) \text{ s.t. } \exists u, u' \in V(T_2), \vfi^{*}(u) < \vfi^{*}(v) < \vfi^{*}(u')}| \\
 \end{align*}
 Hence $\Delta_2$ and $\Delta_3$ respectively correspond to the number of vertices of $T_2$ and $T_3 \uplus \ldots \uplus T_l$ which are
 in the overlapping area based on $\vfi^{*}$.

  \begin{fact}
    \label{fact:second-subtree-two-or-three-connected}
    Similar to the fact~\ref{fact:subtrees-two-or-three-connected} and according to the three-connectivity of Halin graphs, any subset $\mathcal{V} \subset V(\overline{T}_{1,2})$ not incident to the outgoing edges $\Set{w_2,w_3}$, $e_{2,j}$ and $e_{2,j'}$, is connected to the rest of $V(\overline{T}_{1,2})$ through at least three edge disjoint paths. Also Any subset $\mathcal{V} \subset V(\overline{T}_{1,2})$ incident to some of the outgoing edges, is connected to $V(\overline{T}_{1,2}) - \mathcal{V}$ via at least two edge-disjoint paths. The same property hold for any $\mathcal{V} \subset V(T_1) \cup V(T_2)$.
\end{fact}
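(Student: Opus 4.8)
The plan is to reduce the statement to two pieces: the $3$-edge-connectivity of the Halin graph $H$ (so every partition of $V(H)$ into two nonempty classes has at least three crossing edges), and the structural fact --- the exact analogue of Fact~\ref{fact:subtrees-two-or-three-connected}, inherited from the decomposition and the planar layout --- that the relevant vertex set is joined to its complement in $V(H)$ by \emph{exactly} the three outgoing edges $\{w_2,w_3\}$, $e_{2,j}$, $e_{2,j'}$. Note that $V(\overline{T}_{1,2})$ and $V(T_1)\cup V(T_2)$ are complementary in $V(H)$ and share the same three crossing edges, so it suffices to treat a single set; write $W$ for it, and $H[W]$ for the subgraph of $H$ induced by $W$. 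Since the paths asserted by the fact are meant to lie inside $H[W]$ (this is what its later use in bounding $\Delta$ requires), the whole statement becomes a statement about edge cuts of $H[W]$, and I would assume throughout that $\mathcal{V}$ is a proper nonempty subset of $W$ so that $W\setminus\mathcal{V}\neq\emptyset$.

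The first main step is to show that $H[W]$ is $2$-edge-connected. If $H[W]$ were disconnected, then by pigeonhole some connected component $Z$ of $H[W]$ would be incident to at most one of the three outgoing edges of $W$; as $Z$ has no other edge leaving it in $H$, the set of edges of $H$ with exactly one endpoint in $Z$ has size at most $1$, contradicting $3$-edge-connectivity. If instead $H[W]$ were connected but had a bridge $e$, deleting $e$ would split $W$ into two nonempty parts $Z$ and $W\setminus Z$; the three outgoing edges of $W$ are partitioned between these two parts according to their endpoint in $W$, so one part, say $Z$, is incident to at most one of them, and then $\{e\}$ together with that (at most one) outgoing edge is an edge cut of $H$ of size at most $2$ --- again a contradiction. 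Hence $H[W]$ has neither a disconnection nor a bridge, i.e.\ it is $2$-edge-connected.

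The second step is the elementary observation that for any graph $G$ and any partition $V(G)=A\uplus B$ into nonempty classes, the maximum number of pairwise edge-disjoint $A$--$B$ paths equals the number of edges of $G$ with one endpoint in $A$ and one in $B$: by Menger this number is the minimum size of an edge set whose removal destroys every $A$--$B$ path, any such set must contain every $A$--$B$ edge (after its removal each component lies wholly in $A$ or wholly in $B$, so $A$ is a union of components and all $A$--$B$ edges have been deleted), and the set of all $A$--$B$ edges trivially does the job. Applying this inside $G=H[W]$ with $A=\mathcal{V}$, $B=W\setminus\mathcal{V}$ turns the problem into counting the edges of $H[W]$ between $\mathcal{V}$ and $W\setminus\mathcal{V}$. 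If $\mathcal{V}$ is incident to none of the three outgoing edges, then every edge of $H$ at $\mathcal{V}$ has both endpoints in $W$, so this count equals the size of the cut of $H$ around $\mathcal{V}$, which is at least $3$ by $3$-edge-connectivity; if $\mathcal{V}$ is incident to some outgoing edge we fall back on the $2$-edge-connectivity of $H[W]$ established above to get that the count is at least $2$. This gives the three-path and two-path conclusions respectively, and the identical argument applies verbatim to $\mathcal{V}\subseteq V(T_1)\cup V(T_2)$.

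The step I expect to require the most care is establishing that $H[W]$ is $2$-edge-connected from the global $3$-edge-connectivity: the crux is the combinatorial observation that a hypothetical bridge or disconnection of $H[W]$ necessarily splits the \emph{three} outgoing edges between the two resulting pieces, forcing one piece to retain at most one of them and thereby exposing an edge cut of $H$ of size at most $2$. The other points that deserve explicit mention, rather than real difficulty, are that the counted paths are taken within $H[W]$ and that $\mathcal{V}$ is a proper nonempty subset, so that the quantities above are genuine edge cuts into nonempty parts.
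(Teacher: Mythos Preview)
Your argument is correct. The paper does not actually prove this fact: it is stated as a \texttt{fact} environment and justified only by the phrase ``according to the three-connectivity of Halin graphs'' together with the pointer back to Fact~\ref{fact:subtrees-two-or-three-connected}, which is itself asserted without proof. So there is no ``paper's own proof'' to compare against; what you have written is a rigorous fleshing-out of the one-line appeal to $3$-connectivity that the authors leave implicit.

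Your reduction is exactly the right one. The key content is your first step, showing that the induced subgraph $H[W]$ is $2$-edge-connected: the pigeonhole argument on the three outgoing edges, turning a hypothetical bridge or disconnection of $H[W]$ into a global edge cut of $H$ of size at most $2$, is precisely the mechanism the authors are gesturing at when they invoke $3$-connectivity. Your second step (Menger plus the observation that for a bipartition $A\uplus B$ of $V(G)$ the minimum $A$--$B$ edge cut is exactly the set of $A$--$B$ edges) then converts this into the path counts the fact asserts. Your reading that the paths must lie inside $H[W]$ is also the correct one, since the subsequent bound on $\Delta$ uses the fact to count \emph{internal} edges of $T_2$ (respectively $\overline{T}_{1,2}$) whose expand is inflated by an overlapping vertex of the other side.

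Two small remarks. First, you are right to flag explicitly that $\mathcal{V}$ must be proper and nonempty; the paper is silent on this but uses the fact only for sets of that form. Second, your assumption that exactly three edges cross between $W$ and its complement is the case the paper is treating (it says so just before introducing $\vfi^{\diamond}$); the ``one or two cycle edges'' variant mentioned in the surrounding text would change the count but not your method.
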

\paragraph{Value of $\Delta$: }
Using the fact~\ref{fact:second-subtree-two-or-three-connected} one can see that each vertex in the overlapping area, increases the expand of at least two edges by one unit. Hence the following equation can be deduced.
\begin{align}
    \Delta \ge 2 \times (\Delta_2 + \Delta_3) \label{eq:second-subtree-LA-change-Delta}
\end{align}
\paragraph{Change in the expands of outgoing edges:}
As in lemma~\ref{lemma:halin-first-subtree-labeled-separately}, the change in the expand of edges linking $T_2$ to the rest of graph
can be derived as:
 \begin{align}
   & (\lambda(\Set{w_1,w_2}, \vfi^{*}) - \lambda(\Set{w_1,w_2}, \vfi^{\diamond})) = \delta(-, w_2, V(\overline{T}_{1,2}))
   \label{eq:second-subtree-LA-change-w2w1}\\
    & (\lambda(\Set{w_2,w_3}, \vfi^{*}) - \lambda(\Set{w_2,w_3}, \vfi^{\diamond})) = - (\delta(-, w_2, V(\overline{T}_{1,2})) +
    \delta(w_3, -,V(T_2)) )
    \label{eq:second-subtree-LA-change-w2w3}\\
   & (\lambda(e_{1,2}, \vfi^{*}) - \lambda(e_{1,2}, \vfi^{\diamond})) = \delta(-,u_{1,2},V(\overline{T}_{1,2}))
   \label{eq:second-subtree-LA-change-e12}\\
   & (\lambda(e^{\prime} _{1,2}, \vfi^{*}) - \lambda(e^{\prime} _{1,2}, \vfi^{\diamond})) = \delta(-,u^{\prime} _{1,2},V(\overline{T}_{1,2}))
   \label{eq:second-subtree-LA-change-e12prime}\\
   & (\lambda(e_{2,j}, \vfi^{*}) - \lambda(e_{2,j}, \vfi^{\diamond})) \ge - (\delta(-,u_{2,j},V(\overline{T}_{1,2})) + \delta(v_{j,2},-,V(T_2)))
   \label{eq:second-subtree-LA-change-e2j} \\
      & (\lambda(e_{2,j'}, \vfi^{*}) - \lambda(e_{2,j'}, \vfi^{\diamond})) \ge - (\delta(-,u_{2,j'},V(\overline{T}_{1,2})) + \delta(v_{j',2},-,V(T_2)))
   \label{eq:second-subtree-LA-change-e2jprime}
\end{align}
\begin{remxxx}
\label{rem:canceled-out-sections}
Let $v \in \Set{w_3, v_{j,2}, v_{j',2}}$ be the vertex with largest label among the three. According to the fact~\ref{fact:second-subtree-two-or-three-connected} and using the same
reasoning as in remark~\ref{rem:canceled-out-sections}, each vertex in $V(T_2)$ labeled after $v$ takes part in the increase of expand of at least \underline{three} edges of $V(\overline{T}_{1,2})$, while only two where considered in the calculation of $\Delta$ in equation~\ref{eq:second-subtree-LA-change-Delta}.
Hence the value $\delta(v,-,V(T_2))$, considered for the change in expand of  the edge incident to $v$, should be added back to the calculation of $LA(\vfi^{*},H) - LA(\vfi^{\diamond},H)$. Without loss of generality in the rest of the proof we assume $v = v_{j',2}$.
\end{remxxx}

\noindent
Accordingly equation~\ref{eq:second-subtree-LA-change} can be simplified as it follows.
 \begin{align}
    LA(\vfi^{*},H) - LA(\vfi^{\diamond},H) \ge & \label{eq:second-subtree-LA-change2}\\
    & 2 \times (\Delta_2 + \Delta_3) \nonumber \\
    & - \delta(w_3, -,V(T_2)) \nonumber \\
    & - \delta(u_{1,2}, u_{2,j},V(\overline{T}_{1,2})) \nonumber \\
    & - \delta(u^{\prime}_{1,2}, u_{2,j'},V(\overline{T}_{1,2})) \nonumber \\
    & - \delta(v_{j,2},-,V(T_{1,2}))
 \end{align}
 \noindent
 \begin{remxxx}
Depending on the order of labels of $u_{1,2}$ and $u_{2,j}$, $\delta(-, u_{1,2},V(\overline{T}_{1,2})) - \delta(-, u_{2,j},V(\overline{T}_{1,2}))$
is either equal to $\delta(u_{2,j}, u_{1,2},V(\overline{T}_{1,2}))$ or $- \delta(u_{1,2}, u_{2,j},V(\overline{T}_{1,2}))$.
The same way we can reason about $\delta(-, u^{\prime}_{1,2},V(\overline{T}_{1,2})) - \delta(-, u_{2,j'},V(\overline{T}_{1,2}))$ as following:
 \begin{align}
     \delta(-, u_{1,2},V(\overline{T}_{1,2})) - \delta(-, u_{2,j},V(\overline{T}_{1,2})) \ge &
     - \delta(u_{1,2}, u_{2,j},V(\overline{T}_{1,2})) \ge -\Delta_3  \label{eq:second-subtree-LA-change-u12-22j}\\
    \delta(-, u^{\prime}_{1,2},V(\overline{T}_{1,2})) - \delta(-, u_{2,j'},V(\overline{T}_{1,2})) \ge &
    - \delta(u^{\prime}_{1,2}, u_{2,j'},V(\overline{T}_{1,2})) \ge -\Delta_3 \label{eq:second-subtree-LA-change-uprime12-22jprime}
 \end{align}
Also it is easy to see that:
 \begin{align}
    & - \delta(w_3, -,V(T_2)) \ge \Delta_2 \\
    & - \delta(v_{2,j},-,V(T_{1,2})) \ge \Delta_2 \label{eq:second-subtree-LA-change-v2j}
 \end{align}
 \end{remxxx}

\begin{remxxx}
    The equality in equation~\ref{eq:second-subtree-LA-change-e2j} holds only if $u_{2,j}$ is labeled before $v_{j,2}$ (namely $\vfi^{*}(u_{2,j}) < \vfi^{*}(v_{2,j})$)
    \footnote{Similarly the equality in equation~\ref{eq:second-subtree-LA-change-uprime12-22jprime} hold only if $\vfi^{*}(u^{\prime}_{2,j}) < \vfi^{*}(v_{2,j'})$.}.
    But the equalities in equations~\ref{eq:second-subtree-LA-change-u12-22j} and~\ref{eq:second-subtree-LA-change-v2j} can hold
    simultaneously only if $v_{2,j}$ is labeled before all the vertices $V(T_2)$ in the overlapping area and $u_{2,j}$ is labeled after
    all the vertices of $\overline{T}_{1,2}$ which are in the overlapping area. In other word, both the equalities in equations\ref{eq:second-subtree-LA-change-u12-22j} and~\ref{eq:second-subtree-LA-change-v2j} hold only if
    $\vfi^{*}(u_{2,j}) > \vfi^{*}(v_{2,j})$. Accordingly the equalities in equations~\ref{eq:second-subtree-LA-change-e2j},~\ref{eq:second-subtree-LA-change-u12-22j} and~\ref{eq:second-subtree-LA-change-v2j} never simultaneously hold.
\end{remxxx}

\noindent
Consequently the equation~\ref{eq:second-subtree-LA-change2} can be simplified as following with the contradictory result that completes the proof.
 \begin{align}
    LA(\vfi^{*},H) - LA(\vfi^{\diamond},H)  > & \\
    & 2 \times (\Delta_2 + \Delta_3) \nonumber \\
    & - \Delta_2 \nonumber \\
    & - \Delta_3 \nonumber \\
    & - \Delta_3 \nonumber \\
    & - \Delta_2 = 0
 \end{align}

\end{Prxxx}
\begin{Prxxx}{(Proof of lemma~\ref{lemma:halin-branches-labeled-separately})}
\label{proof:halin-branches-labeled-separately}
Each spinal branch $B_{i,j}$ is connected to the rest of the graph using three edges. $e_{j,w} = \Set{v_j, w_i} \in E(T)$ connecting it to the spinal vertex $w_i$ and two edge $e_{j,j'} \in E(C)$ and $e_{j,j''} \in E(C)$ connecting $B_{i,j}$ to two other branches $B_{i',j'}$ and $B_{i'',j''}$ \footnote{Note that based on the structure of Halin graphs, $B_{i',j'}$ and $B_{i'',j''}$ may belong to the same subtree $T_i$ as $B_{i,j}$, but both can not be a part of the same subtree $T_i'$ different from $T_i$}.

\noindent
Without loss of generality we assume $j = 1$ and $j' = 2$ and we consider two branches $B_{i,1}$ and $B_{i,2}$ are
anchored at $v_1$ and $v_2$, and we only show the following for the case where $\vfi^{*}$ is of type
$(\ldots,w_i,\ldots,V(B_{i,1}) \cup V(B_{i,2}),\ldots)$.
 \begin{align*}
    \forall v \in V(B_{i,1}) \cup V(B_{i,2}); \vfi^{*}(w_i) < \vfi^{*}(v)  \Rightarrow \\
    & (\forall v \in V(B_{i,1}), u \in V(B_{i,2}): \vfi^{*}(v) < \vfi^{*}(u)) \vee \\
    & (\forall v \in V(B_{i,1}), u \in V(B_{i,2}): \vfi^{*}(v) > \vfi^{*}(u))
 \end{align*}

\noindent
Assume two vertices $\underline{v}, \overline{v} \in V(B_{i,1}) \cup V(B_{i,2})$ such that
$\forall u \in V(B_{i,1}) \cup V(B_{i,2}), \vfi^{*}(\underline{v}) \le u \le \vfi^{*}(\overline{v})$.

\noindent
\paragraph{Case 1: $\underline{v} \in B_{i,1}$ and $\overline{v} \in B_{i,2}$.}
\label{branches-labeled-separately-case1}
There are two possible sub-cases: 1.1) there is no edge connecting $B_{i,1}$ and $B_{i,2}$. 1.2) $B_{i,1}$ is connected to $B_{i,2}$ using exactly one edge $e_{1,2} \in E(C)$ (Figure~\ref{fig:branches-overlap-gen}). We present the proof for the later case. The analysis of proof of former case is similar and omitted. We show that the assumption of lemma being false, in other word if two branches of $B_{i,1}$ and $B_{i,1}$ \emph{overlap}, contradicts the optimality assumption of $\vfi^{*}$.
Accordingly for an overlapping OLA $\vfi^{*}$, we present an alternative layout $\vfi^{\diamond}_L$ and finish the proof by showing
the contradictory result $LA(\vfi^{*}) > LA(\vfi^{\diamond}_L)$.
\begin{figure}
  \begin{center}
      \includegraphics[scale=.6]
                {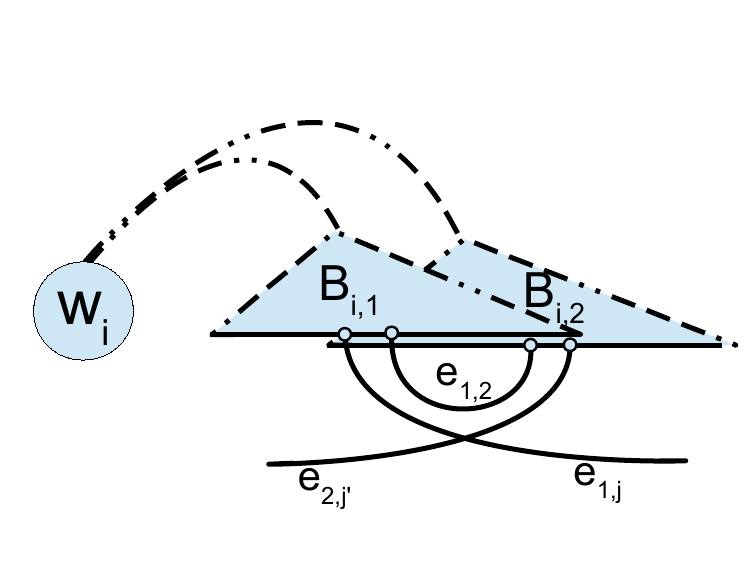}
  \end{center}
 \caption{An branch overlapping layout where the two branches $B_{i,1}$ and $B_{i,2}$ are connected.}
 \label{fig:branches-overlap-gen}
\end{figure}
\begin{figure}
        \centering
        \begin{subfigure}[b]{0.3\textwidth}
                \includegraphics[scale=.6]
                {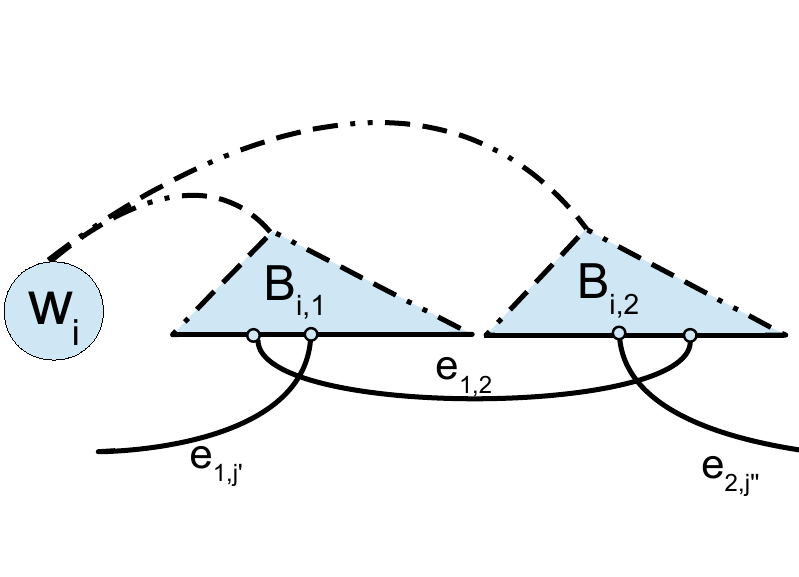}
                \caption{The alternative non-overlapping layout $\vfi^{\diamond}_L$.}
                \label{fig:branches-overlap-afterL}
        \end{subfigure}
        \qquad\qquad\qquad
        \begin{subfigure}[b]{0.3\textwidth}
                \includegraphics[scale=.6]
                {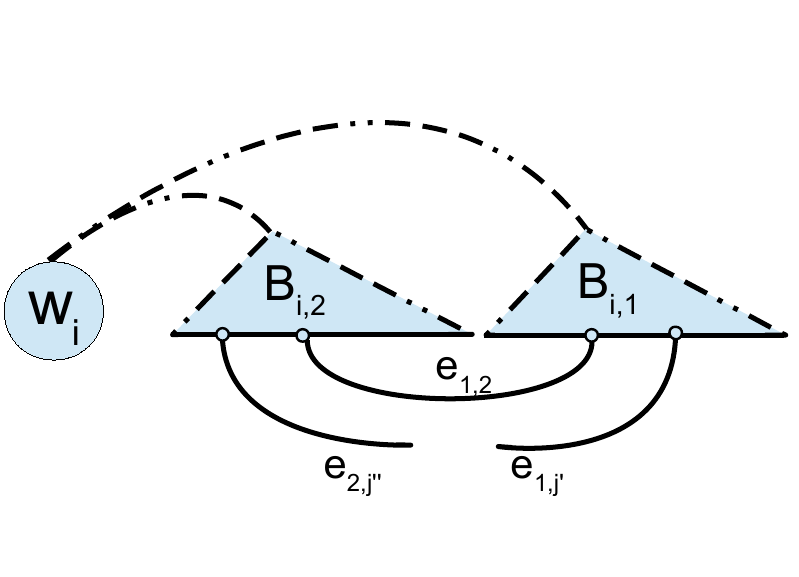}
                \caption{The alternative non-overlapping layout $\vfi^{\diamond}_R$.}
                \label{fig:branches-overlap-afterR}
        \end{subfigure}
        \caption{In the layout $\vfi^{\diamond}_L$ all the vertices of $B_{i,1}$ are labeled on the left side of vertices of $B_{i,2}$
        and on the right side in the layout $\vfi^{\diamond}_R$.}
        \label{fig:subtees-layouts}
\end{figure}
In the alternative layout $\vfi^{\diamond}_L$ all the vertices of $B_{i,1}$ are labeled before all the vertices of $B_{i,2}$
while the relative order of labels of other vertices are preserved the same. Formally $\vfi^{\diamond}_L$ is defined as follows.
\[
 \forall v \in V, \vfi^{\diamond}_L =
    \begin{cases}
       \vfi^{*}(v) &\quad \text{if } v \notin V(B_{i,1}) \cup V(B_{i,1})\\
       \vfi^{*}(v) -  \delta_{\vfi^{*}} (-, v, V(B_{i,2})) &\quad \text{if } v \in V(B_{i,1})\\
       \vfi^{*}(v) + \delta_{\vfi^{*}} (v, -, V(B_{i,1})) &\quad \text{if } v \in V(B_{i,2})\
     \end{cases}
\]
Based on the definition of  $\vfi^{\diamond}_L$ and from Figure~\ref{fig:branches-overlap-afterL} the following holds:
 \begin{align}
    LA(\vfi^{*}) - LA(\vfi^{\diamond}_L) = & \\
    & \Delta + \nonumber \\
    & (\lambda(e_{1,2}, \vfi^{*}) - \lambda(e_{1,2}, \vfi^{\diamond}_L)) + \nonumber \\
    & (\lambda(e_{1,j'}, \vfi^{*}) - \lambda(e_{1,j'}, \vfi^{\diamond}_L)) + \nonumber \\
    & (\lambda(e_{2,j}, \vfi^{*}) - \lambda(e_{2,j}, \vfi^{\diamond}_L)) + \nonumber \\
    & (\lambda(e_{1,w}, \vfi^{*}) - \lambda(e_{1,w}, \vfi^{\diamond}_L)) + \nonumber \\
    & (\lambda(e_{2,w}, \vfi^{*}) - \lambda(e_{1,w}, \vfi^{\diamond}_L)) \nonumber
 \end{align}
 \noindent
 As before, $\Delta$ represents the increase in the value of linear arrangement due to overlap.

 \noindent
 Let $\Delta_1$ and $\Delta_2$ respectively be the number of vertices of
$B_{i,1}$ and $B_{i,2}$ in the overlapping area. Formally speaking:
 \begin{align*}
    \Delta_1 = |\Set{v \in B_{i,1} | \exists u, u' \in B_{i,2}, \vfi^{*}(u) < \vfi^{*}(v) < \vfi^{*}(u')}| \\
    \Delta_2 = |\Set{v \in B_{i,2} | \exists u, u' \in B_{i,1}, \vfi^{*}(u) < \vfi^{*}(v) < \vfi^{*}(u')}| \\
 \end{align*}
\begin{fact}
\label{subbranch-three-connected-part}
Every spinal branch $B_{i,j}$ of a Halin graph $H = T \uplus C$, anchored at $v_{i,j}$, is connected to the rest of $H$ via three outgoing edges,
$\Set{w_i, v_{i,j}} \in E(T)$, $e_{j,j'} \in E(C)$ and $e_{j,j''} \in E(C)$. The two edges $e_{j,j'}$ and $e_{j,j''}$, respectively incident to $v_R$ and $v_L$, connect $B_{i,j}$ to two other spinal branches $B_{i',j'}$ and $B_{i'',j''}$
\footnote{$e_{j,j'}$ and $e_{j,j''}$ are the right and left outgoing edges of $B_{i,j}$}.
Every vertex of a non-empty \emph{sub-branch} $\mathcal{B} \subset B_{i,j} - \Set{v_{i,j}, v_{j,R}, v_{j,L}}$ is connected to $B_{i,j}/ \mathcal{B}$ via at least three edge disjoint paths. Consequently $\mathcal{B}$ is connected to $B_{i,j}/ \mathcal{B}$ by at least three edges.

\noindent
In general every vertex of a non-empty set $\mathcal{B} \subset B_{i,j}$ (which may contains some of the vertices of $\Set{v_{i,j}, v_{j,R}, v_{j,L}}$) is connected to $B_{i,j}/ \mathcal{B}$ by at least two edges-disjoint paths.
\end{fact}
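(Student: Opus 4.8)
The plan is to reduce everything to two ingredients: the tree structure of a branch, read off from the plane embedding that defines $C$, and the (standard) fact, already noted above, that every Halin graph is $3$-connected and hence $3$-edge-connected. I would first locate the three outgoing edges of $B_{i,j}$. By Definition~\ref{def:anchored-subtree}, $B_{i,j}$ is the component of $T$ containing $v_{i,j}$ after deleting the single tree edge $\Set{w_i,v_{i,j}}$, so $\Set{w_i,v_{i,j}}$ is the only edge of $T$ with exactly one endpoint in $V(B_{i,j})$. For the cycle edges I would invoke the standard fact that cutting one edge of a plane tree splits its contour walk into two arcs, each meeting the leaves of the corresponding side in consecutive positions of the cyclic leaf order; since $C$ is by construction the cycle through the leaves of $T$ in this cyclic order, and since $w_i$ has degree greater than $2$ in $T$ and hence carries a further branch $B_{i,j'}$ whose leaves lie outside $B_{i,j}$, the leaf set of $B_{i,j}$ is a proper arc, so exactly two edges of $C$ leave $B_{i,j}$: the two at the ends of that arc. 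Writing $v_{j,L}, v_{j,R}$ for their endpoints inside $B_{i,j}$, we get that $B_{i,j}$ has exactly three outgoing edges, incident inside $B_{i,j}$ to exactly the vertices $v_{i,j}, v_{j,L}, v_{j,R}$. That the two cycle edges land in other branches (or in a single-vertex subtree $T_1$ or $T_l$), and cannot both land in a common subtree $T_{i'}\ne T_i$, follows since the leaf-arcs of $T_1,\dots,T_l$ and of all branches partition the cyclic leaf order, so the two arcs flanking the arc of $B_{i,j}$ are distinct unless both are pieces of $T_i$.

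Next I would produce the edge-disjoint paths. Since $H$ is $3$-edge-connected, Menger's theorem gives three pairwise edge-disjoint paths between any two vertices, and every edge cut of $H$ separating two nonempty vertex sets has at least three edges. Contract $V(H)\setminus V(B_{i,j})$ to a single vertex $\omega$; this set induces a connected subgraph of $H$ (it contains $T$ with the pendant subtree $B_{i,j}$ removed), so the quotient $H'$ is still $3$-edge-connected, and in $H'$ the vertex $\omega$ is joined to $B_{i,j}$ by exactly the three outgoing edges, at $v_{i,j}, v_{j,L}, v_{j,R}$. For a chosen $v\in\mathcal{B}$ pick three edge-disjoint paths between $v$ and $\omega$ in $H'$; since each of $v_{i,j}, v_{j,L}, v_{j,R}$ carries only one edge to $\omega$, the three paths reach $\omega$ through these three distinct vertices, so their prefixes up to that point are three edge-disjoint paths inside $B_{i,j}$ running from $v$ to $\Set{v_{i,j}, v_{j,L}, v_{j,R}}$, one to each. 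If $\mathcal{B}\subseteq V(B_{i,j})\setminus\Set{v_{i,j}, v_{j,L}, v_{j,R}}$, then each prefix already meets $V(B_{i,j})\setminus\mathcal{B}$; truncating each at its first vertex outside $\mathcal{B}$ yields the asserted three edge-disjoint paths internal to $B_{i,j}$, whence at least three edges cross the cut $(\mathcal{B}, V(B_{i,j})\setminus\mathcal{B})$; this last point is also immediate, since such a $\mathcal{B}$ meets no outgoing edge, so every one of the (at least three) edges of the $H$-cut at $\mathcal{B}$ stays inside $B_{i,j}$. For a general nonempty proper $\mathcal{B}\subset V(B_{i,j})$ incident to at most one outgoing edge, $\mathcal{B}$ contains at most one of $v_{i,j}, v_{j,L}, v_{j,R}$, so at least two of the three prefixes terminate in $V(B_{i,j})\setminus\mathcal{B}$ and give two edge-disjoint internal paths; this is the only way the fact is used, so the bound of two suffices, even though a $\mathcal{B}$ straddling two outgoing edges would force only one.

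I expect the single genuinely delicate point to be the structural step: a clean argument that the leaves of a branch occupy a contiguous arc of $C$, hence that a branch has exactly two outgoing cycle edges incident to two identifiable vertices; this is where the definition of a Halin graph through a plane embedding is really used. The degenerate branches (a branch with a single leaf, or one that is just the vertex $v_{i,j}$) need a line of separate comment, but in those cases $v_{i,j}, v_{j,L}, v_{j,R}$ are not all distinct and the sub-branch claim is vacuous. Once the structural step is settled, the connectivity part is a routine application of Menger's theorem, using that contracting a connected vertex set never lowers edge-connectivity.
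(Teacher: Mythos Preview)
The paper does not actually prove this fact; like the analogous Facts~\ref{fact:subtrees-two-or-three-connected} and~\ref{fact:second-subtree-two-or-three-connected}, it is simply asserted with a one-line appeal to the $3$-connectivity of Halin graphs. Your plan supplies what the paper omits: a careful identification of the three outgoing edges via the contiguous-arc property of the leaf set of a branch, followed by a Menger/contraction argument to extract the edge-disjoint paths. That is a sound and natural way to make the fact rigorous, and it is more than the paper offers.

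One point deserves tightening. For the general clause (at least two edge-disjoint paths from any $v\in\mathcal{B}$ to $B_{i,j}\setminus\mathcal{B}$ for an \emph{arbitrary} proper $\mathcal{B}$), you restrict to the case where $\mathcal{B}$ contains at most one of $v_{i,j},v_{j,L},v_{j,R}$ and remark that your argument would give only one path otherwise. That undersells what is true: the induced subgraph $H[V(B_{i,j})]$ is a tree together with a Hamiltonian path on its leaves, and this graph is $2$-edge-connected. Indeed, deleting a cycle edge leaves the tree intact, while deleting a tree edge splits the branch into two subtrees whose leaf sets are consecutive segments of the leaf path and are therefore joined by the path edge between those segments. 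Hence every edge cut inside the branch has at least two edges, regardless of how many of $v_{i,j},v_{j,L},v_{j,R}$ lie in $\mathcal{B}$. This direct $2$-edge-connectivity check handles the general clause uniformly; your contraction argument is then best reserved for the sharper three-path claim.
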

\paragraph{Value of $\Delta$: }
As a consequence of fact~\ref{subbranch-three-connected-part},
each vertex of a branch in the overlapping area contribute one unit to the increase in the expand of at least two edges from the other branch.
Hence it is the case that:
\begin{align}
    \Delta \ge 2 \times (\Delta_1 + \Delta_2) \label{eq:overlap-penalty-case1}
\end{align}
\paragraph{Change in the expands of $e_{1,2}$, $e_{1,j'}$, $e_{2,j''}$ and $e_{2,w}$: }
The increase in the expand of each of these edges is equivalent to how much the two end points drift apart in construction of $\vfi^{\diamond}_L$.
Accordingly the following equations are easy to verify:
 \begin{align}
   & (\lambda(e_{1,2}, \vfi^{*}) - \lambda(e_{1,2}, \vfi^{\diamond}_L)) \ge  - (\Delta_1 + \Delta_2)  \label{eq:expands-outergoing-edges-1} \\
   & (\lambda(e_{1,j'}, \vfi^{*}) - \lambda(e_{1,j'}, \vfi^{\diamond}_L)) \ge  - \Delta_2 \label{eq:expands-outergoing-edges-2}\\
   & (\lambda(e_{2,j''}, \vfi^{*}) - \lambda(e_{2,j''}, \vfi^{\diamond}_L)) \ge  - \Delta_1 \label{eq:expands-outergoing-edges-3} \\
   & (\lambda(e_{2,w}, \vfi^{*}) - \lambda(e_{2,w}, \vfi^{\diamond}_L)) \ge  - \Delta_1 \label{eq:expands-outergoing-edges-4}
\end{align}
\begin{remxxx}
\label{rem:after-last-edge-three-connected}
Let $u \in V(B_{i,2})$ be the vertex incident to one of the edges $e_{1,2}$, $e_{2,j''}$ and $e_{2,w}$ with the largest label based on $\vfi^{*}$.
Using fact~\ref{subbranch-three-connected-part}, the set of vertices of $\overrightarrow{\mathcal{V}}_u \subset V(B_{i,2})$ labeled after $u$ is three-connected to set of vertices labeled before $u$. Accordingly each vertex of $B_{i,1}$ labeled with an integer larger that $\vfi^{*}(u)$ contributes one unit to the increase in expand of at least \underline{three} edges of $B_{i,2}$. Therefore this value cancels out the expand of the edge incident
to $u$.
\end{remxxx}
\begin{remxxx}
\label{rem:not-all-the-inequalities-hold}
It's easy to see that at most only one of the equalities~\ref{eq:expands-outergoing-edges-1} to~\ref{eq:expands-outergoing-edges-4} can hold.
\end{remxxx}
\noindent
The following contradictory result, from putting the equations~\ref{eq:overlap-penalty-case1} to \ref{eq:expands-outergoing-edges-3} and remarks~\ref{rem:after-last-edge-three-connected} and~\ref{rem:not-all-the-inequalities-hold} together, concludes our proof in this case.
 \begin{align*}
    LA(\vfi^{*}) - LA(\vfi^{\diamond}_L) >
    2 \times (\Delta_1 + \Delta_2) +
    - (\Delta_1 + \Delta_2) -
    \Delta_1 -
    \Delta_2
    = 0
 \end{align*}

\noindent
\paragraph{Case 2: $\underline{v} \in B_{i,2}$ and $\overline{v} \in B_{i,1}$.} This case is symmetric to the previous case and in the alternative layout $\vfi^{\diamond}_L$ all the vertices of $B_{i,1}$ are labeled after those of $B_{i,2}$. Hence the proof is similar and is omitted.

\noindent
\paragraph{Case 3: $\underline{v}, \overline{v} \in B_{i,1}$.} Layout $\vfi^{*}$ for this case is presented in
Figure~\ref{fig:branches-overlap-case3} where all the vertices of $B_{i,2}$ are enclosed by $B_{i,1}$.
In contrast to layout $\vfi^{*}$, we present layout $\vfi^{\diamond}_R$  where all vertices of $B_{i,1}$ are labeled
on the right side of those of $B_{i,2}$ as formally defined in following.
\[
 \forall v \in V, \vfi^{\diamond}_R =
    \begin{cases}
       \vfi^{*}(v) &\quad \text{if } v \notin V(B_{i,1}) \cup V(B_{i,1})\\
       \vfi^{*}(v) +  \delta_{\vfi^{*}} (v, \overline{v}, V(B_{i,2})) &\quad \text{if } v \in V(B_{i,1})\\
       \vfi^{*}(v) - \delta_{\vfi^{*}} (w_i, v, V(B_{i,1})) &\quad \text{if } v \in V(B_{i,2})\
     \end{cases}
\]
\noindent
We finish the proof by showing that either $LA(\vfi^{*}) > \vfi^{\diamond}_R$ or $LA(\vfi^{*}) > \vfi^{\diamond}_L$.
\begin{figure}
  \begin{center}
     \includegraphics [scale=.7] 
          {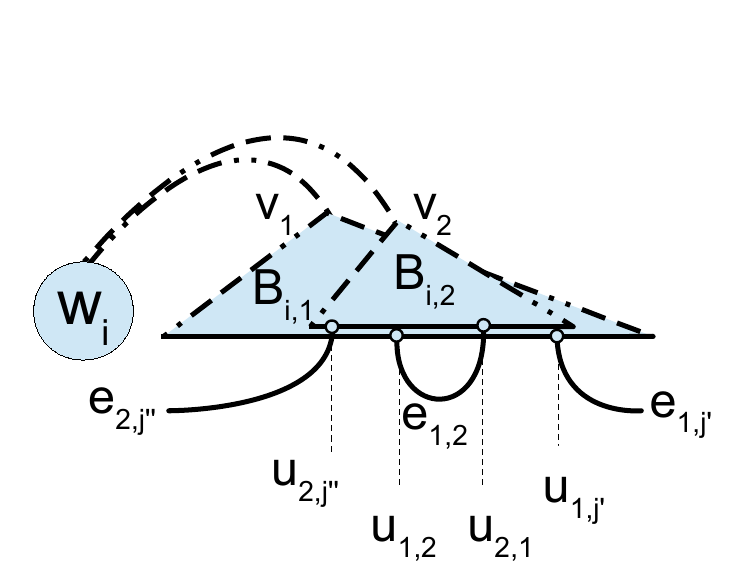}
  \end{center}
     \caption{General presentation of an OLA $\vfi^{*}$ where $\underline{v}, \overline{v} \in B_{i,1}$.
     $u_{1,2} \in V(B_{i,1})$ and $u_{2,1} \in V(B_{i,2})$ are the two end points of edge $e_{1,2}$ while
     $u_{1,j'} \in V(B_{i,1})$ and $u_{2,j''} \in V(B_{i,2})$ are respectively one of the end points of two edge
     $e_{1,j'}$ and $e_{2,j''}$.
     }
   \label{fig:branches-overlap-case3}
\end{figure}
According to the definitions of $\vfi^{\diamond}_R$ and $\vfi^{\diamond}_L$ one can inferred the
equation~\ref{eq:branches-enclave-after-L-R}.
 \begin{align}
 \label{eq:branches-enclave-after-L-R}
    LA(\vfi^{*}) - LA(\vfi^{\diamond}_{-}) = & \\
    & \Delta + \nonumber \\
    & (\lambda(e_{1,2}, \vfi^{*}) - \lambda(e_{1,2}, \vfi^{\diamond}_{-})) + \nonumber \\
    & (\lambda(e_{1,j'}, \vfi^{*}) - \lambda(e_{1,j'}, \vfi^{\diamond}_{-})) + \nonumber \\
    & (\lambda(e_{2,j}, \vfi^{*}) - \lambda(e_{2,j}, \vfi^{\diamond}_{-})) + \nonumber \\
    & (\lambda(e_{1,w}, \vfi^{*}) - \lambda(e_{1,w}, \vfi^{\diamond}_{-})) + \nonumber \\
    & (\lambda(e_{2,w}, \vfi^{*}) - \lambda(e_{1,w}, \vfi^{\diamond}_{-})) \nonumber
 \end{align}
\noindent
Where the wildcard "-" can be replaced by $L$ or $R$, and as before $\Delta$ represents
the increase in the value of linear arrangement due to overlap.
\paragraph{Value of $\Delta$: } Considering the same definition for $\Delta_1 $ and $ \Delta_2$, then
$\Delta \ge 2 \times (\Delta_1 + \Delta_2)$. Since $\Delta_2 = \beta_{i,2}$ then:
 \begin{align}
   \Delta \ge 2 \times (\Delta_1 + \beta_{i,2}) \label{eq:overlap-penalty-case3}
 \end{align}
 \paragraph{Change in the expands of $e_{1,2}$, $e_{1,j'}$, $e_{2,j''}$, $e_{1,w}$ and $e_{2,w}$: }
 In the calculation of the change in expand an edge, in should be considered if the two end points are
 drifting apart or getting closer. Hence, noting the opposing definitions of $\vfi^{\diamond}_L$ and $\vfi^{\diamond}_R$,
 following equations hold.
  \begin{align}
    & (\lambda(e_{1,2}, \vfi^{*}) - \lambda(e_{1,2}, \vfi^{\diamond}_{L})) \ge  - (\beta_2 + \delta_{\vfi^{*}} (u_{2,1}, -, V(B_{i,1}))) \\
    & (\lambda(e_{1,j'}, \vfi^{*}) - \lambda(e_{1,j'}, \vfi^{\diamond}_{L})) =  - \alpha(e_{1,j'}) \times   \delta_{\vfi^{*}} (-, u_{1,j'}, V(B_{i,2}))\\
    & (\lambda(e_{2,j''}, \vfi^{*}) - \lambda(e_{2,j''}, \vfi^{\diamond}_{L})) = - \alpha(e_{2,j''}) \times   \delta_{\vfi^{*}} (u_{2,j''}, -, V(B_{i,1})) \\
    & (\lambda(e_{1,w}, \vfi^{*}) - \lambda(e_{1,w}, \vfi^{\diamond}_{L})) =   \delta_{\vfi^{*}} (-, v_1, V(B_{i,2})) \\
    & (\lambda(e_{2,w}, \vfi^{*}) - \lambda(e_{2,w}, \vfi^{\diamond}_{L})) = - \delta_{\vfi^{*}} (v_2, -, V(B_{i,1}))
    \label{eq:case3-e2w-L}
\end{align}

  \begin{align}
    & (\lambda(e_{1,2}, \vfi^{*}) - \lambda(e_{1,2}, \vfi^{\diamond}_{R})) \ge  - (\beta_2 + \delta_{\vfi^{*}} (-, u_{2,1}, V(B_{i,1})))
    \label{eq:case3-e12-R}\\
    & (\lambda(e_{1,j'}, \vfi^{*}) - \lambda(e_{1,j'}, \vfi^{\diamond}_{R})) =  \alpha(e_{1,j'}) \times   \delta_{\vfi^{*}} (u_{1,j'},-, V(B_{i,2})) \\
    & (\lambda(e_{2,j''}, \vfi^{*}) - \lambda(e_{2,j''}, \vfi^{\diamond}_{R})) = \alpha(e_{2,j''}) \times   \delta_{\vfi^{*}} (-, u_{2,j''}, V(B_{i,1})) \label{eq:case3-e2j-R}\\
    & (\lambda(e_{1,w}, \vfi^{*}) - \lambda(e_{1,w}, \vfi^{\diamond}_{R})) =   - \delta_{\vfi^{*}} (v_1,-, V(B_{i,2})) \ge -\beta_{i,2}
    \label{eq:case3-e1w-R} \\
    & (\lambda(e_{2,w}, \vfi^{*}) - \lambda(e_{2,w}, \vfi^{\diamond}_{R})) = \delta_{\vfi^{*}} (-, v_2, V(B_{i,1}))
\end{align}
The coefficient $\alpha(e)$ is $1$ if the edge $e$ is stretching, $-1$ if it's expand is decreasing and $0$ otherwise.
For instance if the expand of edge $e_{1,j'}$ increases based on $\vfi^{\diamond}_{L}$, it obviously will decrease based on $\vfi^{\diamond}_{R}$.
We break the rest of the proof to different sub-cases according to the signs of $\alpha(e_{1,j'})$ and $\alpha(e_{2,j''})$.
\paragraph{Case 3.1: $\alpha(e_{1,j'}) = 1$ and $\alpha(e_{2,j''}) = 1$.}
Therefore both edges $e_{1,j'}$ and $e_{2,j''}$ shrink based on $\vfi^{\diamond}_{R}$. Putting equations~\ref{eq:overlap-penalty-case3},~\ref{eq:case3-e12-R},~\ref{eq:case3-e2j-R} and~\ref{eq:case3-e1w-R} together, we conclude:
  \begin{align*}
   LA(\vfi^{*}) - LA(\vfi^{\diamond}_{R}) \ge & \\
   & 2 \times (\Delta_1 + \beta_{i,2}) \\
   & - \beta_2 - \delta_{\vfi^{*}} (-, u_{2,1}, V(B_{i,1})) \\
   & + \delta_{\vfi^{*}} (-, u_{2,j''}, V(B_{i,1}))\\
   & -\beta_{i,2} \ge 2 \times \Delta_1  + (\delta_{\vfi^{*}} (-, u_{2,j''}, V(B_{i,1})) - \delta_{\vfi^{*}} (-, u_{2,1}, V(B_{i,1})))  \ge \Delta_1
  \end{align*}

\noindent
Notice that $(\delta_{\vfi^{*}} (-, u_{2,j''}, V(B_{i,1})) - \delta_{\vfi^{*}} (-, u_{2,1}, V(B_{i,1}))) = \delta_{\vfi^{*}} (u_{2,1}, u_{2,1}, V(B_{i,1})) \ge - \Delta_1$.
Also if $\Delta_1 = 0$ \footnote{Layout $\vfi^{*}$ labels all the vertices of $B_{i,2}$ with continuous integers.}, then the equality~\ref{eq:case3-e12-R} cannot hold
\footnote{Due to the fact that $u_{1,2}$ is labeled after all the vertices of $B_{i,2}$, hence:
\[
\lambda(e_{1,2}, \vfi^{*}) - \lambda(e_{1,2}, \vfi^{\diamond}_{R})) \ge  - (\beta_2 + \delta_{\vfi^{*}} (-, u_{2,1}, V(B_{i,1})) - 1
\]
}.
Accordingly it is always the case that $LA(\vfi^{*}) - LA(\vfi^{\diamond}_{R}) > 0$.
\paragraph{Case 3.2: $\alpha(e_{1,j'}) = -1$ and $\alpha(e_{2,j''}) = -1$.}
Thus the expands of both edges $e_{1,j'}$ and $e_{2,j''}$ decrease going from $\vfi^{*}$ to $\vfi^{\diamond}_{L}$.
Substituting the results of equations~\ref{eq:overlap-penalty-case3} to~\ref{eq:case3-e2w-L} in~\ref{eq:branches-enclave-after-L-R} gives us:
  \begin{align*}
   LA(\vfi^{*}) - LA(\vfi^{\diamond}_{L}) \ge & \\
   & 2 \times (\Delta_1 + \beta_{i,2}) \\
   & - \beta_2 - \delta_{\vfi^{*}} (u_{2,1},-, V(B_{i,1})) \\
   & + \delta_{\vfi^{*}} (u_{1,j'},-, V(B_{i,2}))\\
   & + \delta_{\vfi^{*}} (u_{2,j''},-, V(B_{i,1}))\\
   & + \delta_{\vfi^{*}} (-,v_1, V(B_{i,2}))\\
   & -\delta_{\vfi^{*}} (v_2,-, V(B_{i,1}))
  \end{align*}
Considering the worst case scenario when $\Delta_1 = 0$, $\delta_{\vfi^{*}} (-,v_1, V(B_{i,2})) = 0$ and $\delta_{\vfi^{*}} (u_{1,j'},-, V(B_{i,2})) = 0$
\footnote{Namely the vertices of branch $B_{i,2}$ are labeled with a set of contentious integers and $v_1$ and $u_{1,j}$ are labeled before vertices of $B_{i,2}$ so that expands of $e_{1,w}$ and $e_{1,j'}$ stay unchanged.}, results in:
\begin{align*}
    LA(\vfi^{*}) - LA(\vfi^{\diamond}_{L}) \ge & \\
    & - (\delta_{\vfi^{*}} (u_{1,2},-, V(B_{i,1})) + \delta_{\vfi^{*}} (-,u_{2,1}, V(B_{i,2}))) + 2 \times \beta_{i,2} \ge \\
    & - \delta_{\vfi^{*}} (u_{1,2},-, V(B_{i,1})) +\beta_{i,2}
\end{align*}
Therefore $LA(\vfi^{*}) \le LA(\vfi^{\diamond}_{L})$ only if $\delta_{\vfi^{*}} (u_{2,1},-, V(B_{i,1})) \le \beta_{i,2}$. In this case, since $u_{1,2}$ has degree three with two outgoing edges from $E(C)$, and based on $\vfi^{\diamond}_{L}$, there is a path via edges of $E(C)$ going to the right most vertex and coming back. Due to this redundancy we can rearrange the vertices of $B_{i,1}$ in $\vfi^{\diamond}_{L}$, without increasing the value of linear arrangement, so that $u_{1,2}$ has the largest label among vertices $B_{i,1}$ (is the right most vertex of $B_{i,1}$). In this new layout $\vfi^{\diamond}_{L2}$ the length of edge $e_{1,2}$ will degrease by $\delta_{\vfi^{*}} (u_{1,2},-, V(B_{i,1}))$.

\noindent
Finally we have $LA(\vfi^{*}) - LA(\vfi^{\diamond}_{L2}) \ge \beta_{i,2}$, which contradicts the optimality of layout $\vfi^{*}$.
\paragraph{Case 3.3: $\alpha(e_{1,j'}) = -1$ and $\alpha(e_{2,j''}) = 1$.} In this case we suggest $LA(\vfi^{\diamond}_{R})$ as an alternative for $LA(\vfi^{*})$. Using the same approach and having the arithmetic details omitted, it can be verified that the following holds.
\begin{align*}
    LA(\vfi^{*}) - LA(\vfi^{\diamond}_{R}) \ge & \\
    & - (\delta_{\vfi^{*}} (-,u_{1,2}, V(B_{i,1})) + \delta_{\vfi^{*}} (u_{2,1},-, V(B_{i,2}))) + 2 \times \beta_{i,2} \ge \\
    & - \delta_{\vfi^{*}} (-,u_{1,2}, V(B_{i,1})) +\beta_{i,2}
\end{align*}
Using the same reasoning, $LA(\vfi^{\diamond}_{R})$ can be partially modified to have $u_{1,2}$ as the left most vertex of $B_{i,1}$ and reduce the length $e_{1,2}$ by $\delta_{\vfi^{*}} (-,u_{1,2}, V(B_{i,1}))$ and accordingly to have:
\[
    LA(\vfi^{*}) - LA(\vfi^{\diamond}_{R}) \ge \beta_{i,2} > 0
\]
\paragraph{Case 3.4: $\alpha(e_{1,j'}) = -1$ and $\alpha(e_{2,j''}) = 1$.} This case is symmetric to the case 3.3 and similarly it can be shown that $LA(\vfi^{*}) - LA(\vfi^{\diamond}_{L}) \ge \beta_{i,2} > 0$.

\end{Prxxx}
\begin{Prxxx}{(Proof of theorem~\ref{thrm:halin-OLA-extreme-vetrices})}
\label{proof:halin-OLA-extreme-vetrices}
Consider the case where for a given OLA $\vfi^{*}$ and extreme vertices $v$ and $u$, $d_T(v) \neq 1 \vee d_T(u) \neq 1$ or $d_T(v) \neq 1 \wedge d_T(u) \neq 1$.

\noindent
With no loss of generality we only present the case where $d_T(v) = k \ge 3$ and symmetrically it can be shown for $d_T(u)  \ge 3$ as well.
Hence there are $k-1$ spinal branches $B_{1,1}, \ldots, B_{1,k-1}$ connected to $v$. Based on lemma ~\ref{lemma:halin-subtrees-labeled-separately} and~\ref{lemma:halin-branches-labeled-separately},
$B_{1,1}, \ldots, B_{1,k-1}$ are separately labeled on the right side of $v$.
Each branch $B_{1,i}$ is anchored at vertex $v_i$ (is connected to $v$ via edge $\Set{v_i,v}$).
The set $\Set{B_{1,1}, \ldots, B_{1,k-1}}$ is connected to the rest of
the graph by exactly to edges $e_{1,j} \in E(C)$ and $e_{1,j'} \in E(C)$. Figure~\ref{fig:layout-of-left-extreme} generally presents layout $\vfi^{*}$.
Note that the two edges $e_{1,j}$ and $e_{1,j'}$ can not be initiated from the same branch. Assume $e_{1,j}$ and $e_{1,j'}$ are respectively
connected to $B_{1,j}$ and $B_{1,j'}$ 
and vertices of $B_{1,j}$ are labeled with integers smaller than the labels of vertices of $B_{1,j}$. 
Also notice that every two branches $B_{1,i}$ and $B_{1,i'}$ are connected
by at most one edge.
\begin{figure}
  \begin{center}
     \includegraphics [scale=.6] 
          {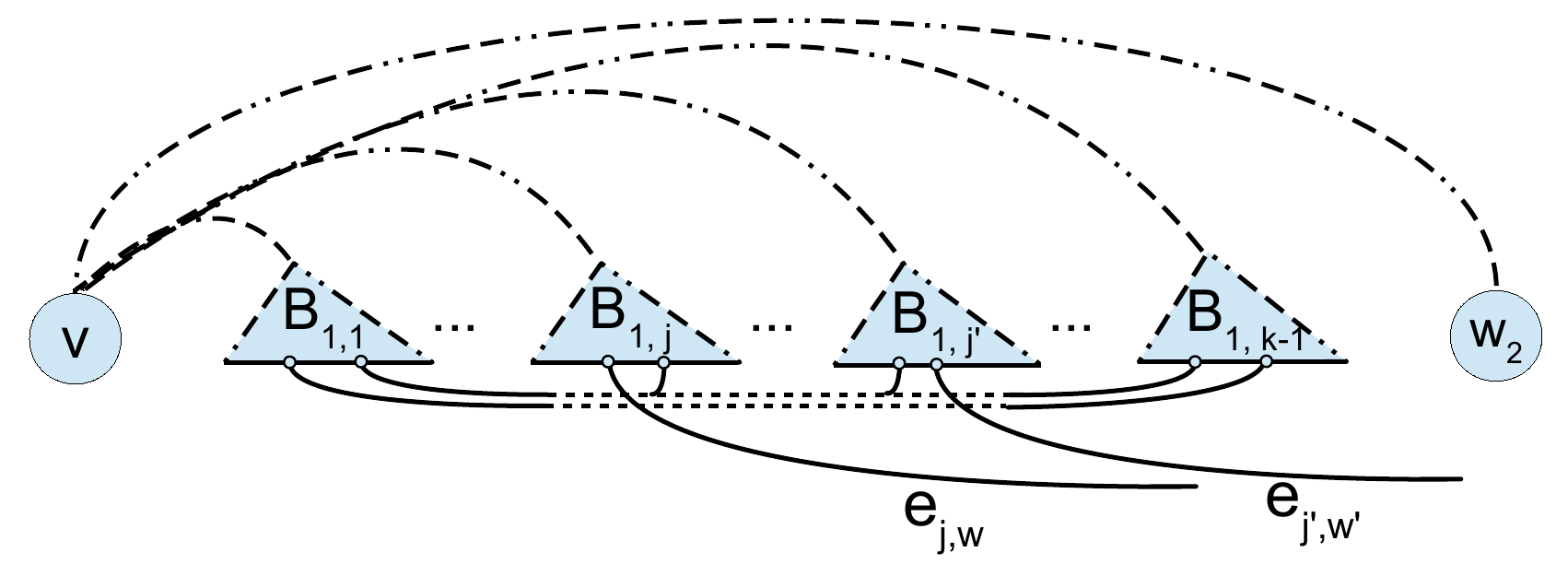}
  \end{center}
     \caption{General presentation of an OLA $\vfi^{*}$ for $H=T \uplus C$, where $\vfi^{*}(v)$ while $v$ is not a leaf in tree $T$.}
   \label{fig:layout-of-left-extreme}
\end{figure}
As apposed to layout $\vfi^{*}$ we present the following layout $\vfi^{\diamond}$.
\[
 \forall u \in V, \vfi^{\diamond} =
    \begin{cases}
       \vfi^{*}(u) + \sum_{j < i < k} \beta_{1,i} &\quad \text{if } u \in V(B_{1,j})\\
       \mathcal{T}_1 - \vfi^{*}(u) + 1 &\quad \text{if } u \in V(B_{1,j+1}) \cup \dots \cup V(B_{1,k-1})\\
       \mathcal{T}_1 - \vfi^{*}(u) - \beta_{1,j} + 1 &\quad \text{if } u \in V(B_{1,1})\cup \dots \cup V(B_{1,j-1}) \cup \Set{v} \\
       \vfi^{*}(u) &\quad \text{otherwise} \
     \end{cases}
\]
\noindent
Where $\beta_{i,j}$ and $\mathcal{T}_i$ are respectively the number of vertices of branch $B_{i,j}$ and subtree $T_i$\footnote{Hence $\mathcal{T}_1$ is equivalent to the largest possible label for the vertices of $T_i$}.

\noindent
Informally speaking, layout $\vfi^{\diamond}$ is constructed by mirroring the labels of vertices of all the branches about $v$, except for the vertices of  branch $B_{1,j}$.
Figure~\ref{fig:layout-of-left-extreme-after} depicts the layout $\vfi^{\diamond}$.
\begin{figure}
  \begin{center}
     \includegraphics [scale=.6] 
          {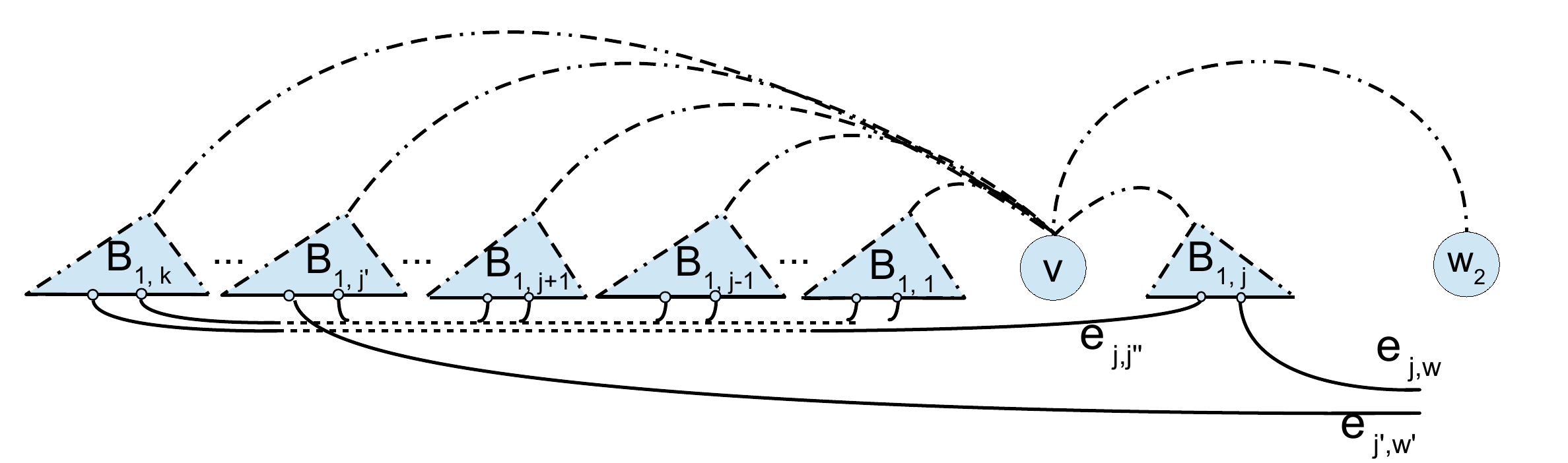}
  \end{center}
     \caption{General presentation of an OLA $\vfi^{\diamond}$ based on $\vfi^{*}$, where the label of the vertices of branches are mirrored about vertex $v$ except those of branch $B_{1,j}$.}
   \label{fig:layout-of-left-extreme-after}
\end{figure}
According to the construction of $\vfi^{\diamond}$ equation~\ref{eq:layout-of-left-extreme-after} holds.
Note that the relative inter-orders of vertices of $B_{1,i}$ for $1 \le i < k$ (accordingly the size of expands of internal edges) stay unchanged in $\vfi^{\diamond}$.
 \begin{align}
      \label{eq:layout-of-left-extreme-after}
    LA(\vfi^{*}) - LA(\vfi^{\diamond}) \ge & \\ \nonumber
    & (\lambda(e_{j,w}, \vfi^{*}) - \lambda(e_{j,w}, \vfi^{\diamond}))\ + \\ \nonumber
    & (\lambda(e_{j',w'}, \vfi^{*}) - \lambda(e_{j',w'}, \vfi^{\diamond}))\ + \\ \nonumber
    & (\lambda(e_{j,j''}, \vfi^{*}) - \lambda(e_{j,j''}, \vfi^{\diamond}))\ + \\  \nonumber
    & (\lambda(\Set{v,w_2}, \vfi^{*}) - \lambda(\Set{v,w_2}, \vfi^{\diamond}))\ +\\ \nonumber
    & (\lambda(\Set{v,v_j}, \vfi^{*}) - \lambda(\Set{v,v_j}, \vfi^{\diamond}))\ +\\ \nonumber
    & \sum_{j < i < k} (\lambda(\Set{v,v_i}, \vfi^{*}) - \lambda(\Set{v,v_i}, \vfi^{\diamond})) \nonumber
 \end{align}
Each term of equation~\ref{eq:layout-of-left-extreme-after} refers to the change in the expands of those edges that their expand may change
in the process of constructing $\vfi^{\diamond}$ from $\vfi^{*}$.

\noindent
It's not hard to verify that the following equations hold.
 \begin{align}
    & (\lambda(e_{j,w}, \vfi^{*}) - \lambda(e_{j,w}, \vfi^{\diamond})) =  \sum_{j < i <k} \beta_{1,i} \\
    & (\lambda(e_{j',w'}, \vfi^{*}) - \lambda(e_{j',w'}, \vfi^{\diamond})) \ge - \sum_{1 \le i <k} \beta_{1,i}\\
    & (\lambda(e_{j,j''}, \vfi^{*}) - \lambda(e_{j,j''}, \vfi^{\diamond})) \ge - \sum_{1 \le i \le j} \beta_{1,i} \\
    & (\lambda(\Set{v,w_2}, \vfi^{*}) - \lambda(\Set{v,w_2}, \vfi^{\diamond})) = \sum_{1 \le i < k,\ i \neq j} \beta_{1,i}\\
    & (\lambda(\Set{v,v_j}, \vfi^{*}) - \lambda(\Set{v,v_j}, \vfi^{\diamond})) = \sum_{1 \le i < j} \beta_{1,i}\\
    & \sum_{j < i < k} (\lambda(\Set{v,v_i}, \vfi^{*}) - \lambda(\Set{v,v_i}, \vfi^{\diamond})) = (k-j-1) \times \beta_{1,j}
 \end{align}
\noindent
Consequently equation~\ref{eq:layout-of-left-extreme-after} can be simplified as
$LA(\vfi^{*}) - LA(\vfi^{\diamond}) \ge (k-j-3) \times \beta_{1,j} + \sum_{j < i <k} \beta_{1,i}$.
The quantity $k-j-1$ is the number of branches labeled after $B_{1,j}$ and obviously $(k-j-1) \ge 1$. Hence for $(k-j-1) > 1$ or $(k-j-1) = 1 \wedge \beta_{j+1} > \beta_j$, we have $LA(\vfi^{*}) - LA(\vfi^{\diamond}) > 0$, which contradicts the optimality of $\vfi^{*}$. Now we analyze the case where
$(k-j-1) = 1 \wedge \beta_{j+1} \le \beta_j$.

\noindent
\paragraph{Case 1: $B_{1,j}$ and $B_{1,j+1}$ are not connected.} Referring to the structure of Halin graphs, this case holds only if $j > 1$. Informally speaking, there are some branches $B_{1,1}, \ldots, B_{1,j-1}$ which based on $\vfi^{*}$ their vertices are labeled after $v$ and before $B_{1,j}$. With respect to this case we construct a new layout $\vfi^{\circledast}$ where the labels of vertices of $B_{1,1}, \ldots, B_{1,j-1}$ are mirrored about $v$. Formally $\vfi^{\circledast}$ is constructed as:
\[
 \forall u \in V, \vfi^{\circledast} (u)=
    \begin{cases}
       \mathcal{B} + 2 - \vfi^{*}(u) &\quad \text{if } u \in V(B_{1,1}) \cup \ldots \cup V(B_{1,j-1}) \cup \Set{v}\\
       \vfi^{*}(u) &\quad \text{otherwise} \
     \end{cases}
\]
\noindent
Where $\mathcal{B} = \sum_{1 \le i < j} \beta_{1,i}$ is the number of vertices in set $\Set{V(B_{1,1}) \cup \ldots \cup V(B_{1,j-1})}$.

\noindent
Following the same approach as before we can show that $LA(\vfi^{*}) - LA(\vfi^{\circledast}) \ge \sum_{1 \le i < j} \beta_{1,i} > 0$ . The details of arithmetic calculations are left to the reader.

\noindent
\paragraph{Case 2: $B_{1,j}$ and $B_{1,j+1}$ are connected.} Hence $B_{1,j}$ and $B_{1,j+1}$ are the only branches connected to $v$. Figure~\ref{fig:layout-extreme-two-branches-left} shows the layout $\vfi^{*}$ corresponding to this case. As schematically shown in figure~\ref{fig:layout-extreme-two-branches-left-after}, we present the the alternative layout $\vfi^{\circledast}$, formally defined as it follows.
\[
 \forall u \in V, \vfi^{\circledast} (u) =
    \begin{cases}
       \beta_{1,1} + 2 - \vfi^{*}(u) &\quad \text{if } u \in V(B_{1,1}) \cup \Set{v}\\
       \vfi^{*}(u) &\quad \text{otherwise} \
     \end{cases}
\]
\begin{figure}
        \centering
        \begin{subfigure}[b]{0.3\textwidth}
                \includegraphics[scale=.6]
                {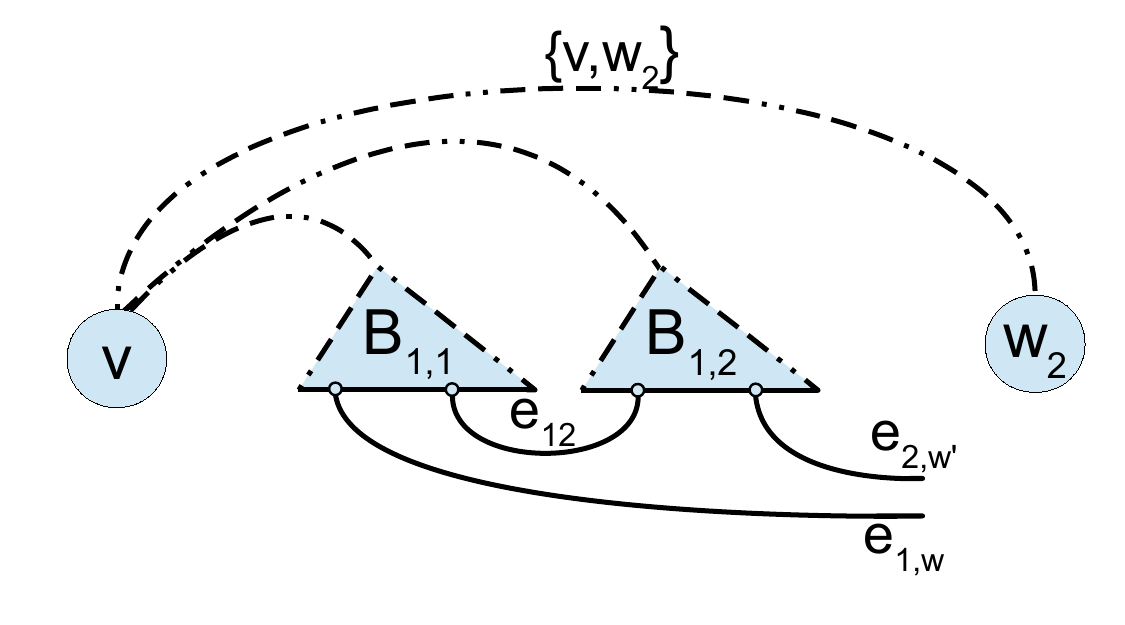}
                \caption{OLA $\vfi^{*}$ where $\vfi^{*}(v) = 1$  and $v$ is connected to exactly two branches $B_{1,1}$ and $B_{1,2}$.}
                \label{fig:layout-extreme-two-branches-left}
        \end{subfigure}
        \qquad\qquad\qquad
        \begin{subfigure}[b]{0.3\textwidth}
                \includegraphics[scale=.6]
                {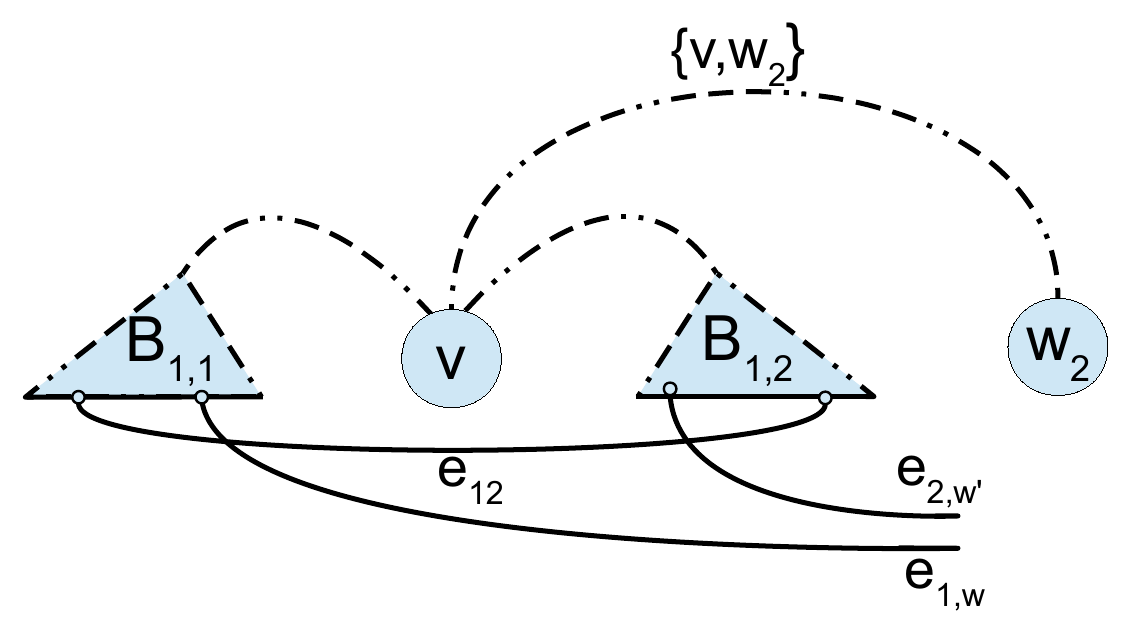}
                \caption{The alternative non-overlapping layout $\vfi^{\circledast}$ based on $\vfi^{*}$, where the labels of vertices in $B_{1,1}$ are mirrored about $v$.}
                \label{fig:layout-extreme-two-branches-left-after}
        \end{subfigure}
        \caption{OLA $\vfi^{*}$  and the corresponding alternative layout $\vfi^{\circledast}$.}
        \label{fig:layout-extreme-two-branches-left-and-after}
\end{figure}
Equation~\ref{eq:layout-extreme-two-branches-left} compares the value of linear arrangements $\vfi^{*}$ and $\vfi^{\circledast}$.
 \begin{align}
      \label{eq:layout-extreme-two-branches-left}
    LA(\vfi^{*}) - LA(\vfi^{\circledast}) = & \\ \nonumber
    & (\lambda(\Set{v,w_2}, \vfi^{*}) - \lambda(\Set{v,w_2}, \vfi^{\circledast}))\ + \\ \nonumber
    & (\lambda(\Set{v,v_2}, \vfi^{*}) - \lambda(\Set{v,v_2}, \vfi^{\circledast}))\ +\\ \nonumber
    & (\lambda(e_{1,2}, \vfi^{*}) - \lambda(e_{1,2}, \vfi^{\circledast}))\ + \\ \nonumber
    & (\lambda(e_{1,w}, \vfi^{*}) - \lambda(e_{1,w}, \vfi^{\circledast}))  \nonumber
 \end{align}
\noindent
Remember that $v_1$ and $v_2$ are the two vertices where $B_{1,1}$ and $B_{1,2}$ are anchored at.

\noindent
Based on the construction of $\vfi^{\circledast}$ from $\vfi^{*}$ we have:
 \begin{align}
    & \lambda(\Set{v,w_2}, \vfi^{*}) - \lambda(\Set{v,w_2}, \vfi^{\circledast}) = \beta_{1,1} \\
    & \lambda(\Set{v,v_2}, \vfi^{*}) - \lambda(\Set{v,v_2}, \vfi^{\circledast}) = \beta_{1,1} \\
    & \lambda(e_{1,2}, \vfi^{*}) - \lambda(e_{1,2}, \vfi^{\circledast}) \ge - \beta_{1,1} \label{eq:layout-extreme-two-branches-left-after-3} \\
    & \lambda(e_{1,w}, \vfi^{*}) - \lambda(e_{1,w}, \vfi^{\circledast}) \ge - \beta_{1,1} \label{eq:layout-extreme-two-branches-left-after-4}
 \end{align}

 \noindent
Finally putting equations~\ref{eq:layout-extreme-two-branches-left} to~\ref{eq:layout-extreme-two-branches-left-after-4} together
we conclude that $LA(\vfi^{*}) - LA(\vfi^{\circledast}) \ge 0$. But the equalities in equations~\ref{eq:layout-extreme-two-branches-left-after-3} and~\ref{eq:layout-extreme-two-branches-left-after-4} hold at the same time (and consequently $LA(\vfi^{*}) - LA(\vfi^{\circledast}) = 0$), only if the two edges $e_{1,2}$ and $e_{1,w}$ coincide at the left most vertex of
$B_{1,1}$. This situation in a Halin graph can only happen when branch $B_{1,1}$ has exactly one vertex.
For that reason we conclude that in an OLA for a Halin graph $H = T \uplus C$, a non-leaf vertex $v$ can be a extreme vertex, only if $v$ has exactly two leaves of $T$ as it's children\footnote{Remember that $\beta_{1,1} \ge \beta_{1,2}$.}.

\end{Prxxx}

\ifTR
\else
\fi

\end{document}